\newcolumntype{x}[1]{>{\centering\arraybackslash}p{#1}}
\newcommand{\ra}[1]{\renewcommand{\arraystretch}{#1}}
\newtheorem{thm}{Theorem}
\newtheorem*{thm*}{Theorem}
\newtheorem{prop}[thm]{Proposition}
\newtheorem*{prop*}{Proposition}
\newtheorem{lemma}[thm]{Lemma}
\newtheorem*{lemma*}{Lemma}
\newtheorem{cor}[thm]{Corollary}
\newtheorem*{cor*}{Corollary}
\newtheorem*{cj*}{Conjecture}
\newtheorem*{Def*}{Definition}
\newtheorem*{question*}{Question}
\newtheorem*{problem*}{Problem}
\def\thmhead@plain#1#2#3{%
  \thmname{#1}\thmnumber{\@ifnotempty{#1}{ }\@upn{#2}}%
  \thmnote{ {\the\thm@notefont#3}}}
\let\thmhead\thmhead@plain
\theoremstyle{definition}
\newtheorem{rem}[thm]{Remark}
\newcommand{\manualifempty}[3]{%
  \edef\@tempa{#1}%
  \ifx\@tempa\@empty
    #2
  \else
    #3
  \fi
}
\newtheoremstyle{manualstyle}
  {3pt}{3pt}{\itshape}{}{\bfseries}{.}{ }{}
\theoremstyle{manualstyle}
\newtheorem{manualthminner}{Theorem}
\newtheorem{manualpropinner}{Proposition}
\newtheorem{manuallemmainner}{Lemma}
\newtheorem{manualcorinner}{Corollary}
\newcommand{\bb}{\begin{equation}\begin{aligned}\hspace{0pt}}
\newcommand{\bbb}{\begin{equation*}\begin{aligned}}
\newcommand{\ee}{\end{aligned}\end{equation}}
\newcommand{\eee}{\end{aligned}\end{equation*}}
\newcommand\floor[1]{\lfloor#1\rfloor}
\newcommand{\eqt}[1]{\stackrel{\mathclap{\scriptsize \mbox{#1}}}{=}}
\newcommand{\leqt}[1]{\stackrel{\mathclap{\scriptsize \mbox{#1}}}{\leq}}
\newcommand{\geqt}[1]{\stackrel{\mathclap{\scriptsize \mbox{#1}}}{\geq}}
\newcommand{\ketbra}[1]{\ket{#1}\!\!\bra{#1}}
\newcommand{\sumno}{\sum\nolimits}
\newcommand{\e}{\varepsilon}
\renewcommand{\epsilon}{\varepsilon}
\newcommand{\dd}{\mathrm{d}}
\newcommand{\id}{\mathds{1}}
\newcommand{\R}{\mathds{R}}
\newcommand{\N}{\mathds{N}}
\newcommand{\C}{\mathds{C}}
\newcommand{\E}{\mathds{E}}
\newcommand{\SEP}{\pazocal{S}}
\DeclareMathOperator{\Tr}{Tr}
\DeclareMathOperator{\co}{conv}
\DeclareMathOperator{\cone}{cone}
\DeclareMathAlphabet{\pazocal}{OMS}{zplm}{m}{n}
\DeclareMathOperator{\pr}{Pr}
\DeclareMathOperator{\supp}{supp}
\DeclareMathOperator{\relint}{relint}
\newcommand{\HH}{\pazocal{H}}
\newcommand{\MM}{\pazocal{M}}
\newcommand{\D}{\pazocal{D}}
\newcommand{\CC}{\pazocal{C}}
\newcommand{\XX}{\pazocal{X}}
\newcommand{\TT}{\pazocal{T}}
\newcommand{\PP}{\pazocal{P}}
\newcommand{\FF}{\pazocal{F}}
\newcommand{\lsmatrix}{\left(\begin{smallmatrix}}
\newcommand{\rsmatrix}{\end{smallmatrix}\right)}
\newcommand{\deff}[1]{\textbf{\emph{#1}}}
\newcommand{\rel}[3]{#1\big(#2\,\big\|\,#3\big)}
\newcommand{\Rel}[3]{#1\Big(#2\,\Big\|\,#3\Big)}
\newcommand\xxrightarrow[2][]{\mathrel{%
  \setbox2=\hbox{\stackon{\scriptstyle#1}{\scriptstyle#2}}%
  \stackunder[5pt]{%
    \xrightarrow{\makebox[\dimexpr\wd2\relax]{$\scriptstyle#2$}}%
  }{%
   \scriptstyle#1\,%
  }%
}}
\newcommand{\tends}[2]{\xxrightarrow[\! #2 \!]{\mathrm{#1}}}
\newcommand{\ctends}[3]{\xxrightarrow[\raisebox{#3}{$\scriptstyle #2$}]{\raisebox{-0.7pt}{$\scriptstyle \mathrm{#1}$}}}
\newcommand*\rel@kern[1]{\kern#1\dimexpr\macc@kerna}
\newcommand*\widebar[1]{%
  \begingroup
  \def\mathaccent##1##2{%
    \rel@kern{0.8}%
    \overline{\rel@kern{-0.8}\macc@nucleus\rel@kern{0.2}}%
    \rel@kern{-0.2}%
  }%
  \macc@depth\@ne
  \let\math@bgroup\@empty \let\math@egroup\macc@set@skewchar
  \mathsurround\z@ \frozen@everymath{\mathgroup\macc@group\relax}%
  \macc@set@skewchar\relax
  \let\mathaccentV\macc@nested@a
  \macc@nested@a\relax111{#1}%
  \endgroup
}
\tikzset{meter/.append style={draw, inner sep=10, rectangle, font=\vphantom{A}, minimum width=30, line width=.8, path picture={\draw[black] ([shift={(.1,.3)}]path picture bounding box.south west) to[bend left=50] ([shift={(-.1,.3)}]path picture bounding box.south east);\draw[black,-latex] ([shift={(0,.1)}]path picture bounding box.south) -- ([shift={(.3,-.1)}]path picture bounding box.north);}}}
\tikzset{roundnode/.append style={circle, draw=black, fill=gray!20, thick, minimum size=10mm}}
\tikzset{squarenode/.style={rectangle, draw=black, fill=none, thick, minimum size=10mm}}
\definecolor{Blues5seq1}{RGB}{239,243,255}
\definecolor{Blues5seq2}{RGB}{189,215,231}
\definecolor{Blues5seq3}{RGB}{107,174,214}
\definecolor{Blues5seq4}{RGB}{49,130,189}
\definecolor{Blues5seq5}{RGB}{8,81,156}
\definecolor{Greens5seq1}{RGB}{237,248,233}
\definecolor{Greens5seq2}{RGB}{186,228,179}
\definecolor{Greens5seq3}{RGB}{116,196,118}
\definecolor{Greens5seq4}{RGB}{49,163,84}
\definecolor{Greens5seq5}{RGB}{0,109,44}
\definecolor{Reds5seq1}{RGB}{254,229,217}
\definecolor{Reds5seq2}{RGB}{252,174,145}
\definecolor{Reds5seq3}{RGB}{251,106,74}
\definecolor{Reds5seq4}{RGB}{222,45,38}
\definecolor{Reds5seq5}{RGB}{165,15,21}
\def\boxed@gobegin#1{\def\@tempa{#1}\def\@tempb{orange}\ifx\@tempa\@tempb\begin{tcolorbox}[colback=red!15,colframe=orange!70,breakable,enhanced]\else\begin{tcolorbox}[colback=Blues5seq1,colframe=Blues5seq5,breakable,enhanced]\fi}
\def\boxed@gobegin@empty{\begin{tcolorbox}[colback=Blues5seq1,colframe=Blues5seq5,breakable,enhanced]}
\newcommand{\stein}{\mathrm{Stein}}
\newcommand{\iid}{\mathrm{iid}}
\newcommand{\av}{\mathrm{av}}
\newcommand{\all}{\mathds{ALL}}
\newcommand{\RR}{\pazocal{R}}
\renewcommand{\SS}{\pazocal{S}}
\renewcommand{\AA}{\pazocal{A}}
\newcommand{\BB}{\pazocal{B}}
\newcommand{\KK}{\pazocal{K}}
\newcommand{\stab}{\mathrm{STAB}}
\renewcommand{\D}{\mathcal{D}}
\renewcommand{\SEP}{\mathrm{SEP}}
\newcommand{\mmq}[2]{\pazocal{D}_{#1,#2}}
\newcommand{\mmv}[1]{\pazocal{D}_{\delta,#1}}
\renewcommand{\deff}[1]{\emph{#1}}
\newtheorem{newax}{Axiom}
\renewcommand*{\p@subsection}{}
\renewcommand*{\p@subsubsection}{}
\DeclareMathOperator*{\minst}{min^\star\hspace{-4pt}}
\begin{document}

\title{Generalised quantum Sanov theorem revisited
}

\author{Ludovico Lami}
\email{ludovico.lami@gmail.com}
\affiliation{Scuola Normale Superiore, Piazza dei Cavalieri 7, 56126 Pisa, Italy}

\begin{abstract}
Given two families of quantum states $\pazocal{A}$ and $\pazocal{B}$, called the null and the alternative hypotheses, quantum hypothesis testing is the task of determining whether an unknown quantum state belongs to $\pazocal{A}$ or $\pazocal{B}$. Mistaking $\pazocal{A}$ for $\pazocal{B}$ is a type I error, and vice versa for the type II error. In quantum Shannon theory, a fundamental role is played by the Stein exponent, i.e.\ the asymptotic rate of decay of the type II error probability for a given threshold on the type I error probability. Stein exponents have been thoroughly investigated --- and, sometimes, calculated. However, most currently available solutions apply to settings where the hypotheses simple (i.e.\ composed of a single state), or else the families $\pazocal{A}$ and $\pazocal{B}$ need to satisfy stringent constraints that exclude physically important sets of states, such as separable states or stabiliser states. In this work, we establish a general formula for the Stein exponent where both hypotheses are allowed to be composite: the alternative hypothesis $\pazocal{B}$ is assumed to be either composite i.i.d.\ or arbitrarily varying, with components taken from a known base set, while the null hypothesis $\pazocal{A}$ is fully general, and required to satisfy only weak compatibility assumptions that are met in most physically relevant cases --- for instance, by the sets of separable or stabiliser states. Our result extends and subsumes the findings of \href{https://link.springer.com/article/10.1007/s00220-021-04133-8}{[BBH, CMP 385:55, 2021]} (that we also simplify), as well as the `generalised quantum Sanov theorem' of \href{https://arxiv.org/abs/2408.07067}{[LBR, arXiv:2408.07067]}. The proof relies on a careful quantum-to-classical reduction via measurements, followed by an application of the results on classical Stein exponents obtained in~[Lami, arXiv:today]. We also devise new purely quantum techniques to analyse the resulting asymptotic expressions.
\end{abstract}

\maketitle


\section{Introduction}

\subsection{Background and motivation} \label{subsec_background}

Hypothesis testing is an essential primitive of classical as well as quantum information theory, underpinning much of its technical machinery. It is deeply connected with coding theory, and, more generally, with quantum resource distillation, namely, the general family of tasks where one needs to refine quantum resources (either states or channels). The key reason that underlies this connection is simple enough: any quantum resource distillation protocol can be used to test whether the resource is there in the first place --- if it is, then the protocol will refine it so that it is easily detected. In several situations this reasoning can be reversed, so that solving a quantum hypothesis testing problem yields a solution to the corresponding quantum resource distillation problem. 

One of the cornerstone results of quantum information theory is Hiai and Petz's quantum Stein's lemma~\cite{Hiai1991, Ogawa2000}, which, extending the classical Chernoff--Stein lemma~\cite{stein_unpublished, chernoff_1956}, determines the asymptotic rate of decay of the error probability in asymmetric hypothesis testing between two i.i.d.\ hypotheses. Historically, Hiai and Petz's result has played a key role in the theory, as it has decisively contributed to the identification of Umegaki's relative entropy~\cite{Umegaki1962} as the operational analogue of the classical Kullback--Leibler divergence~\cite{Kullback-Leibler}. The impact this result has had throughout quantum Shannon theory is hard to overestimate~\cite{Ogawa2002, Hayashi2007, HAYASHI, TOMAMICHEL}.

For many applications, however, we are not content with the solution of this simple setting where both hypotheses are \emph{simple} (i.e.\ composed of one single state) and furthermore i.i.d.; we want to consider instead more general classes of hypotheses that are both \emph{composite} (i.e.\ represented by sets of states) and \emph{genuinely correlated}. The general problem can thus be phrased as follows: given a quantum state $\rho_n\in \D\big(\HH^{\otimes n}\big)$ of a quantum system composed of $n$ copies of an elementary system with Hilbert space $\HH$, we have to decide between two options:
\begin{itemize}
\item[$\mathrm{H}_0$.] Null hypothesis: $\rho_n \in \AA_n$;
\item[$\mathrm{H}_1$.] Alternative hypothesis: $\rho_n \in \BB_n$;
\end{itemize}
Here, $\AA_n,\BB_n\subseteq \D\big(\HH^{\otimes n}\big)$ are two known families of states, which we will collect in the sequences $\AA = (\AA_n)_n$ and $\BB = (\BB_n)_n$. With a slight abuse of language, we refer also to $\AA$ and $\BB$ as the hypotheses. We say that one of the two hypotheses, say $\AA$, is composite if the sets $\AA_n$ do not contain a single state. We call it instead genuinely correlated if the extreme points of $\AA_n$ are not all tensor products across the copies, i.e.\ of the form $\rho_1\otimes \ldots \otimes \rho_n$. Two notable examples of composite but not genuinely correlated classes of hypotheses stand out: given some set of states $\FF_1\subseteq \D(\HH)$ over a finite-dimensional Hilbert space $\HH$, we refer to the family
\bb
\FF = \FF_1^\iid &\coloneqq \scaleobj{1.15}{\big(} \FF_1^{\otimes n,\,\iid}\scaleobj{1.15}{\big)}_n\, \qquad \FF_1^{\otimes n,\,\iid} \coloneqq \big\{\rho^{\otimes n}:\, \rho\in \FF_1\big\}
\label{q_F_n_iid} 
\ee
as the associated \emph{composite i.i.d.\ hypothesis}, and to the family
\bb
\FF = \FF_1^\av &\coloneqq \scaleobj{1.15}{\big(} \FF_1^{\otimes n,\,\av}\scaleobj{1.15}{\big)}_n\, \qquad \FF_1^{\otimes n,\, \av} \coloneqq \big\{\rho_1\!\otimes\! \ldots \!\otimes \rho_n\!:\ \rho_1,\ldots, \rho_n\in \FF_1\big\}
\label{q_F_n_av}
\ee
as the associated \emph{composite arbitrarily varying hypothesis}. 

Guessing which one of the two hypotheses holds amounts to making a measurement, modelled by the binary positive operator-valued measure (POVM) $(E_n,\id-E_n)$, where $E_n$ corresponds to guessing $\mathrm{H}_0$, and $\id-E_n$ to the complementary event of guessing $\mathrm{H}_1$. Here, $E_n$ is an a priori arbitrary operator on $\HH^{\otimes n}$ obeying the two-fold inequality $0\leq E_n\leq \id$, where, for two operators $X,Y$ on the same finite-dimensional Hilbert space, we write $X\leq Y$ if $Y-X$ is positive semi-definite. A \deff{type I error} is defined as guessing $\mathrm{H}_1$ when $\mathrm{H}_0$ holds, and vice versa for the \deff{type II error}. The worst-case error probabilities of making these two errors are given by
\bb
\alpha_n(E_n) \coloneqq \sup_{\rho_n\in \AA_n} \Tr \left[ \rho_n (\id-E_n)\right] ,\qquad \beta_n(E_n) \coloneqq \sup_{\rho_n\in \BB_n} \Tr\left[\rho_n E_n\right] ,
\label{error_probabilities}
\ee
respectively, where we kept the dependence on the two sets $\AA_n$ and $\BB_n$ implicit. The trade-off between the two error probabilities can then be represented by the function
\bb
\beta_\e(\AA_n\|\BB_n) \coloneqq \inf\left\{ \beta_n(E_n):\ \ 0\leq E_n\leq \id,\ \alpha_n(E_n)\leq \e \right\} .
\label{beta_e_level_n}
\ee
In analogy with the simple i.i.d.\ setting, we capture the asymptotic behaviour of this function by defining the associated \deff{Stein exponent} as
\bb
\stein(\AA\|\BB) \coloneqq \lim_{\e\to 0^+} \liminf_{n\to\infty} \left\{ -\frac1n\log \beta_\e(\AA_n\|\BB_n)\right\} .
\label{Stein}
\ee
The fundamental quantum Stein's lemma allows us to calculate this limit in the case where $\AA = \{\rho\}^\iid = \big( \{\rho^{\otimes n}\} \big)_n$ and $\BB = \{\sigma\}^\iid = \big( \{\sigma^{\otimes n}\} \big)_n$ are two simple i.i.d.\ hypotheses. It can be stated as~\cite{Hiai1991}
\bb
\stein(\rho\|\sigma) = D(\rho\|\sigma) \coloneqq \Tr \left[\rho \left(\log \rho - \log \sigma\right) \right] ,
\label{Hiai_Petz}
\ee
where we wrote more compactly $\stein(\rho\|\sigma)$ instead of $\rel{\stein}{\{\rho\}^\iid}{\{\sigma\}^\iid}$.
Numerous extensions of Hiai and Petz's quantum Stein's lemma have been proposed, to calculate the Stein exponent in composite settings of ever increasing complexity and generality. Let us survey some of these contributions (see also Table~\ref{results_table}).

The case of a composite i.i.d.\ null hypothesis was tackled in~\cite[Theorem~2.2]{Bjelakovic2005}, while that of an arbitrarily varying null hypothesis was addressed in~\cite[Theorem~1]{Noetzel2014}. In both of these works, however, the alternative hypothesis is assumed to be simple and i.i.d.; a generalisation that covers the case where \emph{both} hypotheses --- null and alternative --- are composite (but still i.i.d.)\ was put forth in~\cite[Theorem~1.1]{berta_composite}. See also~\cite{Mosonyi2022} for related results.

So far, we have only described settings that involve non-genuinely correlated hypotheses. A major step forward was the realisations that more general, genuinely correlated hypotheses can also be analysed. A paradigmatic example is the setting of \emph{entanglement testing}~\cite{gap-comment}, where the underlying elementary Hilbert space $\HH = \HH_A\otimes \HH_B$ is bipartite, and one of the two hypotheses is of the form $\FF = \big(\SEP_n\big)_n$, where $\SEP_n \coloneqq \SEP_{A^n:B^n}$ comprises all states that are \emph{separable}~\cite{Werner}, i.e.\ un-entangled, across the cut $A^n:B^n$. Importantly, this is an example of a genuinely correlated hypothesis, because there \emph{can} be entanglement across the cuts $A_1B_1:A_2B_2:\ldots :A_nB_n$. Entanglement testing is just an instance of a more general class of tasks, collectively called \emph{resource testing}, in which one looks at other interesting quantum resources, such as nonstabiliser-ness (a.k.a.\ `quantum magic')~\cite{Veitch2014}. The importance of resource testing is not only --- rather obviously --- for practical applications such as device certification, in which we want to ascertain whether a device truly outputs resourceful states; more fundamentally, it is profoundly connected with quantum resource manipulation~\cite{Vedral1997, HAYASHI, BrandaoPlenio1, BrandaoPlenio2}. 

Cornerstones of this connection are two fundamental results known as the `generalised quantum Stein's lemma', whose proof unravelled into a saga that concluded only recently~\cite{Brandao2010, gap, gap-comment, Hayashi-Stein, GQSL}, and the `generalised quantum Sanov theorem'~\cite{generalised-Sanov} (see also~\cite{Hayashi-Sanov-1, Hayashi-Sanov-2}). These two results are, in a precise sense, complementary to each other: the former gives a closed-form expression for the Stein exponent of a hypothesis testing task in which the null hypothesis is a simple and i.i.d., and the alternative hypothesis comprises all resourceless (also called `free') states; the same is true of the latter result upon swapping the two hypotheses. 

The final step in this march towards ever greater generality is to consider the setting in which \emph{both} hypotheses are composite and genuinely correlated. Some progress has been made in this direction too, but previous results are typically subjected to significant limitations, in that they either consider restricted sets of measurements that are designed to effectively `tame' the correlations among the systems~\cite{Piani2009, brandao_adversarial}, or they impose strong restrictions on the families of hypotheses that can be treated~\cite{Fang2025}, excluding, for example, separable states as well as stabiliser (i.e.\ non-magic) states.

\subsection{Our contribution}

In this work, we take a further step towards the construction of a more flexible framework, one that is capable of accommodating a broader class of composite and genuinely correlated hypotheses, in particular encompassing strongly correlated families such as separable and stabiliser states. Our main result is Theorem~\ref{stronger_genq_Sanov_thm}, which gives a general formula for the Stein exponent in the case where:
\begin{itemize}
\item the null hypothesis is a general set of states obeying only weak assumptions, satisfied by most sets of states of interest, including the sets of separable and stabiliser states; and
\item the alternative hypothesis is either composite i.i.d.\ (as defined in~\eqref{q_F_n_iid}) or arbitrarily varying (as defined in~\eqref{q_F_n_av}), with components taken from a base set of quantum states $\BB_1$. We denote these two hypotheses as $\BB_1^\iid$ and $\BB_1^\av$, respectively.
\end{itemize}
Thus, Theorem~\ref{stronger_genq_Sanov_thm} extends the `generalised quantum Sanov theorem' of~\cite{generalised-Sanov}, which only covered the case of a simple i.i.d.\ alternative hypothesis. See Table~\ref{results_table} for a schematic representation of the difference between these two results. Interestingly, among other things Theorem~\ref{stronger_genq_Sanov_thm} shows that the Stein exponent corresponding to an arbitrarily varying alternative hypothesis $\BB_1^\av$ \emph{coincides} with that obtained in the case of a composite \emph{i.i.d.}\ alternative hypothesis of the form $\co(\BB_1)^\iid$, where $\co(\BB_1)$ denotes the convex hull of $\BB_1$. In other words, these two scenarios are entirely equivalent up to replacing the base set $\BB_1$ with its convex hull.

As immediate consequences of Theorem~\ref{stronger_genq_Sanov_thm}, we deduce two characterisations of the false negative error exponents for both entanglement testing (Corollary~\ref{entanglement_testing_cor}) and magic testing (Corollary~\ref{magic_cor}). In Theorem~\ref{q_both_composite_iid_or_av_thm}, we also show how to apply Theorem~\ref{stronger_genq_Sanov_thm} to refine prior results of~\cite{berta_composite}, extending them to the case of arbitrarily varying hypotheses and simplifying the resulting formulas.

On the technical side, our proofs are based on a few different ingredients. First, a careful quantum-to-classical reduction obtained by measuring, where the quantum measurement to be performed is chosen via minimax. The second step is to apply the solution of the Stein exponent in the composite and genuinely correlated \emph{classical} setting of~\cite[Theorems~2 and~4]{doubly-comp-classical}. (As explained in~\cite{doubly-comp-classical}, this solution in turn relies on the `symbol-by-symbol' blurring technique, an extension of the method of blurring introduced in~\cite{GQSL}.) Via a double blocking technique, we thus arrive at a first expression for the quantum Stein exponent. In the last steps of the proof, we bring to bear new purely quantum techniques to simplify it further, showing, in particular, that the Stein exponents corresponding to the two alternative hypotheses $\BB_1^\av$ and $\co(\BB_1)^\iid$ in Theorem~\ref{stronger_genq_Sanov_thm} coincide. The crucial step in this direction is made possible by Proposition~\ref{av_to_iid_reduction_reg_relent_prop}. Finally, in the setting of~\cite{berta_composite} we utilise a variant of the Alicki--Fannes--Winter method from~\cite{Alicki-Fannes, tightuniform, Shirokov-review} to obtain a formula for the Stein exponent (Theorem~\ref{q_both_composite_iid_or_av_thm}) that is both simpler and more general than that in~\cite{berta_composite}.

\begin{table}[h] \centering
\ra{1.5}
\begin{tabular}{*7c} \toprule
\multirow{2}{*}{\large Result} & \multicolumn{3}{c}{\large Null hypothesis} & \multicolumn{3}{c}{\large Alternative hypothesis} \\
& Composite & $\begin{array}{c} \text{Genuinely} \\[-1.6ex] \text{correlated} \end{array}$ & $\begin{array}{c} \\[-2ex] \text{Includes} \\[-1.6ex] \text{SEP\! \&\! STAB} \end{array}$ & Composite & $\begin{array}{c} \text{Genuinely} \\[-1.6ex] \text{correlated} \end{array}$ & $\begin{array}{c} \\[-2ex] \text{Includes} \\[-1.6ex] \text{SEP\! \&\! STAB} \end{array}$ \\
\midrule
Q.\ Stein's lemma~\cite{Hiai1991} & \cellcolor{Reds5seq3} N & \cellcolor{Reds5seq3} N & \cellcolor{Reds5seq3} N & \cellcolor{Reds5seq3} N & \cellcolor{Reds5seq3} N & \cellcolor{Reds5seq3} N \\
Q.\ Sanov theorem~\cite{Bjelakovic2005, Noetzel2014} & \cellcolor{Greens5seq3} Y & \cellcolor{Reds5seq3} N & \cellcolor{Reds5seq3} N & \cellcolor{Reds5seq3} N & \cellcolor{Reds5seq3} N & \cellcolor{Reds5seq3} N \\
BBH's extension~\cite{berta_composite} & \cellcolor{Greens5seq3} Y & \cellcolor{Reds5seq3} N & \cellcolor{Reds5seq3} N & \cellcolor{Greens5seq3} Y & \cellcolor{Reds5seq3} N & \cellcolor{Reds5seq3} N \\
$\begin{array}{c} \text{Generalised q.\ Stein's} \\[-1.5ex] \text{lemma (version of~\cite{Hayashi-Stein})} \end{array}$ & \cellcolor{Reds5seq3} N & \cellcolor{Reds5seq3} N & \cellcolor{Reds5seq3} N & \cellcolor{Greens5seq3} Y & \cellcolor{Greens5seq3} Y & \cellcolor{Greens5seq3} Y \\
$\begin{array}{c} \text{Generalised q.\ Stein's} \\[-1.5ex] \text{lemma (version of~\cite{GQSL})} \end{array}$ & \cellcolor{yellow!90!black} Y & \cellcolor{yellow!90!black} Y & \cellcolor{Reds5seq3} N & \cellcolor{Greens5seq3} Y & \cellcolor{Greens5seq3} Y & \cellcolor{Greens5seq3} Y \\
$\begin{array}{c} \text{Generalised q.\ Sanov} \\[-1.5ex] \text{theorem~\cite{generalised-Sanov}} \end{array}$ & \cellcolor{Greens5seq3} Y & \cellcolor{Greens5seq3} Y & \cellcolor{Greens5seq3} Y & \cellcolor{Reds5seq3} N & \cellcolor{Reds5seq3} N & \cellcolor{Reds5seq3} N \\
$\begin{array}{c} \text{\emph{Another} generalised q.} \\[-1.5ex] \text{Stein's lemma by FFF~\cite{Fang2025}} \end{array}$ & \cellcolor{Greens5seq3} Y & \cellcolor{Greens5seq3} Y & \cellcolor{Reds5seq3} N & \cellcolor{Greens5seq3} Y & \cellcolor{Greens5seq3} Y & \cellcolor{Reds5seq3} N \\
This work (Theorem~\ref{stronger_genq_Sanov_thm}) & \cellcolor{Greens5seq3} Y & \cellcolor{Greens5seq3} Y & \cellcolor{Greens5seq3} Y & \cellcolor{Greens5seq3} Y & \cellcolor{Reds5seq3} N & \cellcolor{Reds5seq3} N  \\
\bottomrule
\end{tabular}
\vspace{-1ex}
\captionof{table}[Prior results in quantum hypothesis testing]{ \justifying
Some representative results on Stein exponents in quantum hypothesis testing, classified according to whether the null and alternative hypotheses under consideration may be composite or genuinely correlated; in this latter case, we also indicated whether they are broad enough to include the strongly correlated families of separable states (SEP) and stabiliser states (STAB). A hypothesis, specified by a set of quantum states, is termed composite if the set contains more than one state, and genuinely correlated if not all of its extreme points are tensor product states across the copies. Inclusion of the feature is indicated by a green cell with `Y', exclusion by a red cell with `N'. The special case of~\cite[Theorem~32]{GQSL} has two cells coloured in yellow: although the null hypothesis there is formally both composite and genuinely correlated (it includes all `almost power states' along a fixed state $\rho$ with a constant number of defects), in practice this corresponds to an almost-i.i.d.\ scenario and is therefore not composite and genuinely correlated in the same spirit as the other cases. 

We restrict attention here to the ultimate limits of quantum hypothesis testing, corresponding to settings where arbitrary quantum measurements on the systems are allowed; scenarios with restricted measurement sets, studied in~\cite{brandao_adversarial}, are not included.
}
\label{results_table}
\end{table}

\section{Main results} \label{sec_main_results}

This section is devoted to the presentation of our main result (Theorem~\ref{stronger_genq_Sanov_thm} below) and some of its most notable consequences, such as Corollaries~\ref{entanglement_testing_cor} and~\ref{q_iid_vs_both_cor}. All proofs are deferred to Section~\ref{sec_proofs}. We start by expounding the assumptions underpinning our framework, which, to some extent, mimic the axioms employed in~\cite{doubly-comp-classical} in the classical case. An important role in our theory is played by the following special type of quantum channel. Given a Hilbert space $\HH$, some state $\tau\in \D(\HH)$, and some $\delta\in [0,1]$, the associated \deff{depolarising channel} $\mmv{\tau}$ is the super-operator $\mmv{\tau}$ that acts on the space of Hermitian operators on $\HH$ as
\bb
\mmv{\tau}(X) \coloneqq (1-\delta) X + \delta \tau\, .
\label{q_depolarising}
\ee

Now, for some Hilbert space $\HH$, consider a sequence $\FF = (\FF_n)_n$ of hypotheses $\FF_n\subseteq \D\big(\HH^{\otimes n}\big)$. We will now present the main compatibility assumptions that we will require on the null hypothesis to state our main result. The following can be thought of as a quantum version of~\cite[Axiom~I]{doubly-comp-classical}; however, it is strictly stronger, as~\cite[Axiom~I]{doubly-comp-classical} can be obtained from Axiom~\ref{q_ax_depolarising} below by restricting to the case where $k=1$:

{\renewcommand{\thenewax}{Q.\Roman{newax}}
\begin{newax} \label{q_ax_depolarising}
There exists some $\tau\in \FF_1$ such that, for all $m,k\in \N^+$ and all $\rho_{mk} \in \FF_{mk}$:
\vspace{-1.2ex}
\begin{enumerate}[(a), itemsep=0.5ex] 
\item $\supp(\rho_{mk}) \subseteq \supp(\tau)^{\otimes mk}$; and 
\item $\mmv{\tau^{\otimes k}}^{\otimes m}(\rho_{mk})\in \FF_{mk}$ for all $\delta\in [0,1]$, where $\mmv{\tau}$ is as in~\eqref{q_depolarising}. 
\end{enumerate}
\vspace{-1ex}
We denote as $c > 0$ a lower bound on the smallest non-zero eigenvalue of $\tau$, i.e.\ a constant with the property that $\min_{i:\ \lambda_i(\tau) > 0} \lambda_i(\tau) \geq c > 0$, where $\lambda_i(\tau)$ denotes the $i^\text{th}$ eigenvalue of $\tau$.
\end{newax}

We now introduce an analogously slightly stronger version of~\cite[Axiom~II]{doubly-comp-classical}:

\begin{newax} \label{q_ax_tensor_powers}
$(\FF_n)_n$ is closed under tensor powers, in the sense that $\rho_k^{\otimes m} \in \FF_{mk}$ for all $m,k\in \N^+$ and all $\rho_k\in \FF_k$.
\end{newax}

While stronger than~\cite[Axiom~II]{doubly-comp-classical}, which is again obtained by setting $k=1$, the above Axiom~\ref{q_ax_tensor_powers} is however strictly weaker than the tout court closure under tensor products required in~\cite[Property~4, p.~5]{Brandao2010} as well as in~\cite[Axiom~State2]{Hayashi-Stein} and in~\cite[Axiom~4, p.~6]{GQSL}. An important class of examples that does not satisfy these latter assumptions but does satisfy Axiom~\ref{q_ax_tensor_powers} is constituted by composite i.i.d.\ hypotheses (see~\eqref{q_F_n_iid}).

The following assumption, which we will require on the null hypothesis, is entirely standard:

\begin{newax} \label{q_ax_closed_permutations}
Each $\FF_n$ is closed under permutations: if $\rho_n \in \FF_n$ and $\pi\in S_n$ denotes an arbitrary permutation of a set of $n$ elements, then also $U_\pi^{\vphantom{\dag}} \rho_n U_\pi^\dag \in \FF_n$, where $U_\pi$ is the unitary that acts on $\HH^{\otimes n}$ by permuting the tensor factors. 
\end{newax}

Using a straightforward quantum analogue of the reasoning in~\cite[Lemma~26]{doubly-comp-classical}, one can show that Axioms~\ref{q_ax_depolarising}--\ref{q_ax_closed_permutations} are implied by --- and hence strictly weaker than --- the original Brand\~ao--Plenio axioms~\cite[Properties~1--5, pp.~4--5]{Brandao2010}. 
Rather surprisingly, however, it was demonstrated in~\cite[Appendix~E.2]{generalised-Sanov} that even the latter, when imposed on the null hypothesis, do not suffice to determine the Stein exponent, not even when the alternative hypothesis is simple and i.i.d. 
An additional assumption of a somewhat different nature must therefore be introduced. 
The following axiom, inspired by pioneering work of Piani~\cite{Piani2009}, replaces the classical~\cite[Axioms~IV--V]{doubly-comp-classical}:

\begin{newax} \label{q_ax_filtering}
For all $k\in \N^+$ there exists a neighbourhood $\pazocal{V}$ of $\id^{\otimes k}$ (the identity operator on $\HH^{\otimes k}$) such that, for all $F_k \in \pazocal{V}$, the map $X \mapsto \Tr[X F_k]$ is `completely $\cone(\FF)$-preserving'. This means that, for all $n\in \N^+$ and all states $\rho_{n+k}\in \FF_{n+k}$, we have 
\bb
\Tr_{n+1,\ldots,n+k} \left[ \rho_{n+k} \big(\id^{\otimes n} \otimes F_k \big)\right] \in \cone(\FF_n) \coloneqq \{\lambda \omega_n:\, \lambda\geq 0,\, \omega_n\in \FF_n\}\, ,
\label{complete_cone_preservation}
\ee
where $\Tr_{n+1,\ldots,n+k}$ denotes the partial trace over the last $k$ sub-systems.
\end{newax}

We are now ready to state our main result, whose name might admittedly benefit from a little more originality. 

\begin{thm}[(Generalised quantum Sanov theorem revisited)] \label{stronger_genq_Sanov_thm}
Let $\HH$ be a finite-dimensional Hilbert space, $\AA = (\AA_n)_n$ a sequence of sets $\AA_n\subseteq \D\big(\HH^{\otimes n}\big)$, and $\BB_1\subseteq \D(\HH)$ a non-empty and topologically closed set of states on $\HH$. Assume that all sets $\AA_n$ are topologically closed and convex, and that $\AA$ satisfies Axioms~\ref{q_ax_depolarising}--\ref{q_ax_filtering}. Then, using the notation in~\eqref{q_F_n_iid}--\eqref{q_F_n_av}, the Stein exponent defined by~\eqref{Stein} is given by
\bb
\rel{\stein}{\AA}{\BB_1^{\mathrm{iid}}} = 
\minst_{\mu\in \PP(\BB_1)}\, \lim_{n\to\infty} \frac1n \inf_{\rho_n\in \AA_n} D\Big(\rho_n\, \Big\|\, \scaleobj{.8}{\int_{\BB_1}} \dd\mu(\sigma_1)\ \sigma_1^{\otimes n}\Big)\, ,
\label{stronger_genq_Sanov_iid}
\ee
where $\PP(\BB_1)$ denotes the set of all probability measures\footnote{That is, the set of all non-negative regular Borel measures $\mu$ on $\BB_1$ such that $\mu(\BB_1)=1$.} on the compact set $\BB_1$, and $\minst_\mu$ indicates that we restrict the minimisation to those $\mu$ such that the inner limit in $n$ exists (such a set is non-empty).
Similarly, 
\bb
\rel{\stein}{\AA}{\BB_1^{\mathrm{av}}} &= \rel{\stein}{\AA}{\co(\BB_1)^{\mathrm{av}}} \\
&= \rel{\stein}{\AA}{\co(\BB_1)^{\mathrm{iid}}} \\
&= \minst_{\mu\in \PP(\co(\BB_1))}\, \lim_{n\to\infty} \frac1n \inf_{\rho_n\in \AA_n} D\Big(\rho_n\, \Big\|\, \scaleobj{.8}{\int_{\co(\BB_1)}} \dd\mu(\sigma_1)\, \sigma_1^{\otimes n}\Big)\, .
\label{stronger_genq_Sanov_av}
\ee
\end{thm}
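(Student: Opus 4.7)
The plan is to prove the two inequalities $\stein(\AA\|\BB_1^{\iid}) \leq$ (RHS of~\eqref{stronger_genq_Sanov_iid}) and $\stein(\AA\|\BB_1^{\iid}) \geq$ (RHS) separately, and then tackle the arbitrarily varying case via Proposition~\ref{av_to_iid_reduction_reg_relent_prop}. The upper bound (converse) follows from an averaging argument: for any $\mu \in \PP(\BB_1)$ and any binary test $T_n$, one has $\sup_{\sigma \in \BB_1} \Tr[T_n \sigma^{\otimes n}] \geq \int \Tr[T_n \sigma^{\otimes n}]\, \dd\mu(\sigma) = \Tr[T_n \bar\sigma_{n,\mu}]$, with $\bar\sigma_{n,\mu} \coloneqq \int \sigma^{\otimes n}\, \dd\mu(\sigma)$ a permutation-invariant state. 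Hence $\stein(\AA\|\BB_1^{\iid}) \leq \stein(\AA\|(\bar\sigma_{n,\mu})_n)$, and a Stein's lemma for composite null against a single (correlated) alternative --- whose hypotheses are satisfied thanks to Axioms~\ref{q_ax_depolarising}--\ref{q_ax_filtering} on $\AA$ and the permutation invariance of $\bar\sigma_{n,\mu}$ --- identifies the latter with the regularised relative entropy appearing in~\eqref{stronger_genq_Sanov_iid}. Minimising over admissible $\mu$ closes the bound.

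The achievability direction is the technical heart and proceeds through a quantum-to-classical reduction combined with a double blocking. Fix a block length $k$ and let $n = mk$. The strategy is to exhibit a single POVM $M_k$ on $\HH^{\otimes k}$ that, when applied independently to each of the $m$ blocks, produces classical data whose measured relative entropy against the base set $\BB_1$ approximates $D$ uniformly. Such an $M_k$ is chosen via a Sion-type minimax argument over the convex compact sets $\AA_k$ and $\BB_1$, with Axiom~\ref{q_ax_filtering} playing the Piani-type role of ensuring that the relevant filtering operators keep intermediate states inside $\cone(\AA)$, so that the inner infimum over $\rho$ can be swapped with the supremum over $\sigma$. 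Once the quantum problem is reduced to a classical doubly-composite problem at block size $k$, Axioms~\ref{q_ax_depolarising}--\ref{q_ax_closed_permutations} descend to precisely the classical axioms required by~\cite{doubly-comp-classical} on the induced classical null hypothesis, so that the classical doubly-composite Stein's lemma~\cite[Theorems~2 and~4]{doubly-comp-classical} yields an achievable rate at fixed $k$; letting $k\to\infty$ (the second half of the double blocking) upgrades the measured relative entropy to the full quantum relative entropy and matches~\eqref{stronger_genq_Sanov_iid}. The hardest step is coordinating this double limit: $M_k$ must be \emph{simultaneously} good against all $\sigma \in \BB_1$, and the measured-vs.-unmeasured relative entropy gap (whose magnitude is controlled via the eigenvalue constant $c$ from Axiom~\ref{q_ax_depolarising}) must vanish uniformly in $k$ so that the limits $m\to\infty$ and $k\to\infty$ can be safely exchanged --- as emphasised in~\cite[Appendix~E.2]{generalised-Sanov}, without something like Axiom~\ref{q_ax_filtering} this uniform control simply fails.

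For the arbitrarily varying alternative, one direction of the chain $\stein(\AA\|\BB_1^{\av}) = \stein(\AA\|\co(\BB_1)^{\av}) = \stein(\AA\|\co(\BB_1)^{\iid})$ is immediate from the inclusions $\BB_1^{\iid} \subseteq \co(\BB_1)^{\iid} \subseteq \co(\BB_1)^{\av}$ and $\BB_1^{\av} \subseteq \co(\BB_1)^{\av}$, combined with~\eqref{stronger_genq_Sanov_iid} applied to $\co(\BB_1)$. The reverse inequalities are supplied by Proposition~\ref{av_to_iid_reduction_reg_relent_prop}, which identifies the regularised relative entropy against an arbitrarily varying alternative with that against the convex-hull i.i.d.\ alternative. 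The expected mechanism is symmetrisation: Axiom~\ref{q_ax_closed_permutations} allows one to take $\rho_n \in \AA_n$ permutation-invariant without loss of generality, after which any $\rho_1 \otimes \cdots \otimes \rho_n$ with $\rho_i \in \co(\BB_1)$ can be replaced, up to a sub-exponential prefactor, by $\bigl(\tfrac1n \sum_{i=1}^n \rho_i\bigr)^{\otimes n} \in \co(\BB_1)^{\otimes n,\iid}$, thus reducing the arbitrarily varying alternative to the composite i.i.d.\ one.
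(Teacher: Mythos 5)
Your architecture --- a minimax choice of a block-length-$k$ POVM, a quantum-to-classical reduction, the classical doubly-composite Stein's lemma of~\cite{doubly-comp-classical}, pinching, and Proposition~\ref{av_to_iid_reduction_reg_relent_prop} to handle the arbitrarily varying case --- is the one the paper follows, but two steps are wrong or missing in a way that would block the argument.

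First, you misplace Axiom~\ref{q_ax_filtering}. It is \emph{not} what justifies the Sion-type minimax swap: that swap is supplied by~\cite[Lemma~13]{brandao_adversarial}, which only needs closedness and convexity of $\AA_k$ and $\co\big(\BB_1^{\otimes k,\,\mathrm{b}}\big)$ plus closure of the measurement class under mixtures, and in any case the exchange is between the infimum over the \emph{pair} $(\rho_k,\sigma_k)$ and the supremum over $\MM$ --- there is no inner inf-over-$\rho$/sup-over-$\sigma$ to swap. Axiom~\ref{q_ax_filtering} enters much later, when one verifies that the measured classical null hypothesis $\RR_m=\MM_\star^{\otimes m}(\AA_{mk})$ satisfies the informationally-complete-channel condition,~\cite[Axiom~V]{doubly-comp-classical}: conditioning the last block on a noisy outcome $y_m$ corresponds to filtering by $(1-\lambda)E_{y_m}+\frac{\lambda}{|\XX|}\id$, an operator that for $\lambda$ close to one lies in the neighbourhood $\pazocal{V}$ guaranteed by Axiom~\ref{q_ax_filtering}. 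If you relied on Axiom~\ref{q_ax_filtering} for the minimax you would get stuck, and if you omitted it from the classical-axiom verification the classical theorem would not apply. Relatedly, the measured-vs.-quantum gap is controlled via the pinching bound of Lemma~\ref{pinching_lemma} by the local dimension $d$ (it is $(d-1)(\frac{d}{2}+1)\log(k+1)$), not by the eigenvalue constant $c$ of Axiom~\ref{q_ax_depolarising}; the latter only furnishes a finite domination constant for the classical support condition.

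Second, achievability plus the data-processing converse yield $\rel{\stein}{\AA}{\BB_1^{\mathrm{b}}} = \lim_k \frac1k\,\rel{D}{\AA_k}{\co\big(\BB_1^{\otimes k,\,\mathrm{b}}\big)}$, which is not yet formula~\eqref{stronger_genq_Sanov_iid}: there the $\minst_\mu$ sits \emph{outside} the limit. Pulling the measure out of the regularisation is a genuinely separate step --- the paper's Lemma~\ref{taking_measure_out_lemma} builds a single universal $\mu=\sum_n\frac{6}{\pi^2 n^2}\mu_n$ from level-$n$ near-optimisers, uses operator monotonicity of $\log$ to absorb the $\frac{6}{\pi^2 n^2}$ prefactor at cost $\frac1n\log\frac{\pi^2 n^2}{6}$, and then verifies that the restricted $\minst_\mu$ is attained --- and your proposal does not mention it; without it you only recover a $\liminf$ with an $n$-dependent $\mu_n$-infimum inside. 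A final minor remark: your symmetrisation sketch for Proposition~\ref{av_to_iid_reduction_reg_relent_prop}, namely replacing $\rho_1\otimes\cdots\otimes\rho_n$ by $\big(\tfrac1n\sum_i\rho_i\big)^{\otimes n}$ with a sub-exponential prefactor, does not hold as stated: the polynomial type-count factor $(n+1)^{|\XX|}$ requires a \emph{fixed finite} alphabet $\XX$, which is exactly why the paper's proof of that proposition first discretises $\BB_1$ into finitely many cells. Since you invoke the proposition as a black box this does not break your plan, but the intuition should be corrected.
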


The proof is reported in Section~\ref{subsec_proof_stronger_genq_Sanov}. The above Theorem~\ref{stronger_genq_Sanov_thm} is a generalisation of~\cite[Theorem~14, Eq.~(D3)]{generalised-Sanov}, for it covers the case where the alternative hypothesis is composite i.i.d.\ or arbitrarily varying, rather than simple and i.i.d, but it also presents a significant drawback, in that it features a regularised expression for the Stein exponent instead of a single-letter one. This seems unavoidable, in the sense that the regularisation cannot be removed in an obvious way, not even when $\AA$ is simple and i.i.d.\ and $\BB_1$ contains two distinct states only~\cite[Theorem~IV.3]{Mosonyi2022}. In the case where $\BB_1$ is composed of a single state, instead, we can recover~\cite[Eq.~(D3)]{generalised-Sanov} from Theorem~\ref{stronger_genq_Sanov_thm} by using the additivity of the reverse relative entropy of resource~\cite[Eq.~(13)]{generalised-Sanov}.

The assumptions on $\AA$ posited in Theorem~\ref{stronger_genq_Sanov_thm} are satisfied by most physical interesting sets of states. These include, for example, the cases where $\AA = \SEP = \big( \SEP_n\big)_n$ is the sequence of sets of separable (i.e.\ un-entangled) states $\SEP_n \coloneqq \SEP_{A^n:B^n}$ on $n$ copies of a finite-dimensional bipartite quantum system $AB$, or where $\AA = \stab$ is the set of stabiliser states on an $N$-qubit system. Below we report the corollary we obtain in the former case, and we refer instead to the analogous Corollary~\ref{magic_cor} for the latter. Proofs can be found in Section~\ref{subsec_proofs_corollaries}.

\begin{cor} \label{entanglement_testing_cor}
Let $\HH_{AB}$ be a finite-dimensional bipartite Hilbert space. For some non-empty closed set $\FF_1\subseteq \D(\HH_{AB})$, the Stein exponents\footnote{These are also called the Sanov exponents in~\cite{generalised-Sanov}, to highlight the fact that the composite and genuinely correlated hypothesis is the null hypothesis.} of the entanglement testing tasks with null hypothesis given by the set of separable states and composite i.i.d.\ or arbitrarily varying alternative hypothesis, with base set $\FF_1$, can be expressed as
\bb
\rel{\stein}{\SEP}{\FF_1^{\mathrm{iid}}} &= 
\minst_{\mu\in \PP(\FF_1)}\, \lim_{n\to\infty} \frac1n \inf_{\sigma_{A^n:B^n} \in\, \SEP_{A^n:B^n}} \!\!D\Big(\sigma_{A^n:B^n} \,\Big\|\, \scaleobj{.8}{\int_{\FF_1}} \!\!\dd\mu(\rho)\, \rho_{AB}^{\otimes n}\Big)
\label{entanglement_testing_iid}
\ee
and
\begin{align}
\rel{\stein}{\SEP}{\FF_1^{\mathrm{av}}} &= \rel{\stein}{\SEP}{\co(\FF_1)^\mathrm{av}} \nonumber \\
&= \rel{\stein}{\SEP}{\co(\FF_1)^\mathrm{iid}} \label{entanglement_testing_av} \\
&= \minst_{\mu\in \PP(\co(\FF_1))}\, \lim_{n\to\infty} \frac1n \inf_{\sigma_{A^n:B^n} \in\, \SEP_{A^n:B^n}} D\Big(\sigma_{A^n:B^n} \,\Big\|\, \scaleobj{.8}{\int_{\co(\FF_1)}} \dd\mu(\rho)\ \rho_{AB}^{\otimes n}\Big)\, , \nonumber
\end{align}
respectively, where $\PP(\CC)$ denotes the set of probability measures on the compact set $\CC$, and $\minst_\mu$ indicates that we restrict the minimisation to those $\mu$ such that the inner limit in $n$ exists (such a set is non-empty).
\end{cor}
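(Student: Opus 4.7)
The plan is to apply Theorem~\ref{stronger_genq_Sanov_thm} directly to the null hypothesis $\AA_n = \SEP_{A^n:B^n}$, with alternative base set $\BB_1 = \FF_1$. Everything reduces to checking that $\SEP = (\SEP_{A^n:B^n})_n$ satisfies the hypotheses of the theorem, namely convexity, topological closedness, and Axioms~\ref{q_ax_depolarising}--\ref{q_ax_filtering}.

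Convexity and closedness of each $\SEP_n$ are standard. For Axiom~\ref{q_ax_depolarising}, I would take $\tau = \id_{AB}/d_{AB}$, the maximally mixed state, which is separable and has full support: condition~(a) is then automatic, while condition~(b) follows because $\tau^{\otimes k}$ is separable across $A^k:B^k$ and $\mmv{\tau^{\otimes k}}^{\otimes m}$ is a convex combination of the identity channel with block-wise replacements by the separable state $\tau^{\otimes k}$, each preserving separability. Axiom~\ref{q_ax_tensor_powers} is immediate since a tensor product of separable states is separable, and Axiom~\ref{q_ax_closed_permutations} holds because permuting the $n$ copies of the composite system $AB$ acts jointly on the $A^n$ and $B^n$ registers, hence preserves any separable decomposition.

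The one substantive step is Axiom~\ref{q_ax_filtering}. Here I would invoke the Gurvits--Barnum theorem on separable balls around the maximally mixed state: this guarantees the existence of an operator-norm neighbourhood $\pazocal{V}$ of $\id^{\otimes k} = \id_{A^k} \otimes \id_{B^k}$ consisting entirely of separable (unnormalised) operators on $A^k:B^k$. Every such $F_k \in \pazocal{V}$ then admits a decomposition $F_k = \sum_i X_i^{A^k} \otimes X_i^{B^k}$ with $X_i^{A^k}, X_i^{B^k} \geq 0$. Writing an arbitrary $\rho_{n+k} \in \SEP_{A^{n+k}:B^{n+k}}$ as $\sum_j p_j \ketbra{\alpha_j}_{A^{n+k}} \otimes \ketbra{\beta_j}_{B^{n+k}}$ and exploiting the fact that both $\rho_{n+k}$ and $F_k$ factorise across the $A{:}B$ cut, the expression in~\eqref{complete_cone_preservation} becomes
\begin{equation*}
\sum_{i,j} p_j\, \Tr_{A^k}\!\bigl[\ketbra{\alpha_j}(\id_{A^n}\otimes X_i^{A^k})\bigr] \otimes \Tr_{B^k}\!\bigl[\ketbra{\beta_j}(\id_{B^n}\otimes X_i^{B^k})\bigr],
\end{equation*}
each term being a non-negative multiple of a product state on $A^n:B^n$. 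Hence the result lies in $\cone(\SEP_{A^n:B^n})$, establishing Axiom~\ref{q_ax_filtering}.

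With all hypotheses verified, Theorem~\ref{stronger_genq_Sanov_thm} applies verbatim to $\AA = \SEP$ and directly yields formulas~\eqref{entanglement_testing_iid} and~\eqref{entanglement_testing_av}. The sole conceptual input beyond routine bookkeeping is the Gurvits--Barnum separable ball, invoked for Axiom~\ref{q_ax_filtering}; this is the step I would expect to pinpoint as the essential quantum ingredient specific to entanglement testing.
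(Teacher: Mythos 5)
Your proposal is correct and follows essentially the same route as the paper: apply Theorem~\ref{stronger_genq_Sanov_thm} with $\AA = \SEP$ and $\BB_1 = \FF_1$, verify the axioms by taking $\tau$ to be the maximally mixed state, and appeal to the Gurvits--Barnum separable ball for Axiom~\ref{q_ax_filtering}. The only (cosmetic) difference is that you spell out the separable decomposition used to check~\eqref{complete_cone_preservation}, whereas the paper states it briefly and refers to~\cite{Piani2009} and~\cite{brandao_adversarial} for the details.
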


It is also possible to employ Theorem~\ref{stronger_genq_Sanov_thm} to refine and extend~\cite[Theorem~1.1]{berta_composite}, which deals with the case where both hypotheses are composite i.i.d. The following is however not a simple consequence of Theorem~\ref{stronger_genq_Sanov_thm}, as one can see by noting that the resulting formulas for the Stein exponents are a bit simpler than those in~\eqref{stronger_genq_Sanov_iid} and~\eqref{stronger_genq_Sanov_av}:

\begin{thm} \label{q_both_composite_iid_or_av_thm}
Let $\HH$ be a finite-dimensional Hilbert space, and $\AA_1,\BB_1\subseteq \D(\HH)$ two non-empty closed sets of quantum states on $\HH$. Using the notation in~\eqref{q_F_n_iid}--\eqref{q_F_n_av} and the definition~\eqref{q_Stein}, the Stein exponents of the two tasks where the alternative hypothesis is composite i.i.d.\ with base set $\BB_1$ and the null hypothesis is either composite i.i.d.\ or arbitrarily varying with base set $\AA_1$ are given by
\begin{align}
\rel{\stein}{\AA_1^\mathrm{iid}}{\BB_1^\mathrm{iid}} &= \minst_{\mu\in \PP(\BB_1)}\, \lim_{n\to\infty} \frac1n \inf_{\rho\in \AA_1} D\Big(\rho^{\otimes n}\, \Big\|\, \scaleobj{.8}{\int_{\BB_1}} \dd\mu(\sigma)\ \sigma^{\otimes n}\Big)\, , \label{q_iid_vs_iid} \\
\rel{\stein}{\AA_1^\mathrm{av}}{\BB_1^\mathrm{iid}} &= \minst_{\mu\in \PP(\BB_1)}\, \lim_{n\to\infty} \frac1n \inf_{\rho_n \in \co\scaleobj{1.3}{(}\AA_1^{\otimes n,\, \mathrm{av}}\scaleobj{1.3}{)}} D\Big(\rho_n\, \Big\|\, \scaleobj{.8}{\int_{\BB_1}} \dd\mu(\sigma)\ \sigma^{\otimes n}\Big)\, ,
\label{q_av_vs_iid}
\end{align}
where $\PP(\BB_1)$ is the set of all probability measures on the compact set $\BB_1$, and $\minst_\mu$ indicates that we restrict the minimisation to those $\mu$ such that the inner limit in $n$ exists (such a set is non-empty). If $\BB_1$ is also convex, we can rewrite~\eqref{q_iid_vs_iid} more simply by pulling the minimisation over $\rho\in \AA_1$ out of the limit:
\bb
\rel{\stein}{\AA_1^\mathrm{iid}}{\BB_1^\mathrm{iid}} &= \minst_{\substack{\rho\in \AA_1, \\[0pt] \mu\in \PP(\BB_1)}} \lim_{n\to\infty} \frac1n\, D\Big(\rho^{\otimes n}\, \Big\|\, \scaleobj{.8}{\int_{\BB_1}} \dd\mu(\sigma)\ \sigma^{\otimes n}\Big)\, . \label{q_iid_vs_iid_convexity}
\ee

In the case where the alternative hypothesis is arbitrarily varying, the Stein exponents are given by the exact same expressions, but with $\BB_1$ replaced by its convex hull $\co(\BB_1)$. Formally,
\bb
\rel{\stein}{\AA_1^\mathrm{iid}}{\BB_1^\mathrm{av}} &= \rel{\stein}{\AA_1^\mathrm{iid}}{\co(\BB_1)^\mathrm{av}} =  \rel{\stein}{\AA_1^\mathrm{iid}}{\co(\BB_1)^\mathrm{iid}} \\
&= \minst_{\substack{\rho\in \AA_1, \\[0pt] \mu\in \PP(\co(\BB_1))}} \lim_{n\to\infty} \frac1n\, D\Big(\rho^{\otimes n}\, \Big\|\, \scaleobj{.8}{\int_{\co(\BB_1)}} \dd\mu(\sigma)\ \sigma^{\otimes n}\Big)\, , \label{q_iid_vs_av} 
\ee
\vspace{-1.5ex}
\bb
\rel{\stein}{\AA_1^\mathrm{av}}{\BB_1^\mathrm{av}} &= \rel{\stein}{\AA_1^\mathrm{av}}{\co(\BB_1)^\mathrm{av}} = \rel{\stein}{\AA_1^\mathrm{av}}{\co(\BB_1)^\mathrm{iid}} \\
&= \minst_{\mu\in \PP(\co(\BB_1))}\, \lim_{n\to\infty} \frac1n \inf_{\rho_n \in \co\scaleobj{1.3}{(}\AA_1^{\otimes n,\, \mathrm{av}}\scaleobj{1.3}{)}} D\Big(\rho_n\, \Big\|\, \scaleobj{.8}{\int_{\co(\BB_1)}} \dd\mu(\sigma)\ \sigma^{\otimes n}\Big)\, .
\label{q_av_vs_av}
\ee
\end{thm}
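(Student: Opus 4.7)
My plan is to combine a Bayesian-average converse, the achievability machinery of~\cite{berta_composite}, a variant of the Alicki--Fannes--Winter (AFW) continuity estimate for the regularised relative entropy, and Proposition~\ref{av_to_iid_reduction_reg_relent_prop} to reduce the arbitrarily varying alternative to a composite i.i.d.\ one. The heart of the argument is the identity~\eqref{q_iid_vs_iid}. For the converse direction, I fix $\mu\in \PP(\BB_1)$ and set $\bar\sigma_n := \int d\mu(\sigma)\, \sigma^{\otimes n}$. For every POVM $E_n$,
\begin{equation*}
\sup_{\sigma\in \BB_1}\Tr[\sigma^{\otimes n} E_n] \,\geq\, \int d\mu(\sigma)\, \Tr[\sigma^{\otimes n} E_n] \,=\, \Tr[\bar\sigma_n E_n]\, ,
\end{equation*}
so that testing against the full composite alternative $\BB_1^{\otimes n,\iid}$ is at least as hard as testing against the single alternative $\bar\sigma_n$. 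By the composite-null Stein lemma of~\cite{Bjelakovic2005}, the Stein exponent of this simple-alternative problem equals $\inf_{\rho\in \AA_1}\lim_n \frac{1}{n} D(\rho^{\otimes n}\|\bar\sigma_n)$; minimising over $\mu$ yields the required upper bound on $\rel{\stein}{\AA_1^\iid}{\BB_1^\iid}$.

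For the achievability direction, I would follow the BBH route. Given $\mu$ such that the inner limit exists, I construct permutation-invariant Stein tests against the universal alternative $\bar\sigma_n$; since $\bar\sigma_n$ is itself a mixture of i.i.d.\ alternatives $\sigma^{\otimes n}$, a standard de Finetti / permutation-symmetrisation argument shows that the Hiai--Petz rate $\frac{1}{n} D(\rho^{\otimes n}\|\bar\sigma_n)$ is achieved uniformly against every $\sigma^{\otimes n}\in \BB_1^{\otimes n,\iid}$, up to polynomial corrections. Taking $\inf_{\rho\in \AA_1}$ inside the limit (the natural order produced by this construction) and then $\minst_\mu$ outside delivers~\eqref{q_iid_vs_iid}.

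The simplification~\eqref{q_iid_vs_iid_convexity} available when $\BB_1$ is convex rests on the AFW ingredient: one shows that $\rho \mapsto \frac{1}{n} D(\rho^{\otimes n}\|\bar\sigma_n)$ is continuous on the compact set $\AA_1$ with a modulus of continuity \emph{uniform in $n$}. Uniform continuity on a compact set permits the exchange $\inf_\rho \lim_n = \lim_n \inf_\rho$, and the $\rho$ and $\mu$ infima can then be merged into the joint $\minst_{\rho,\mu}$ displayed there. For the arbitrarily varying null~\eqref{q_av_vs_iid}, the Bayesian-average converse applies verbatim with the null state now ranging over $\co(\AA_1^{\otimes n,\av})$ (the convex hull enters by linearity of the type I error in the null state), and the achievability argument is parallel; since elements of $\co(\AA_1^{\otimes n,\av})$ lack the tensor-power structure needed for AFW, the inner infimum over $\rho_n$ must remain inside the limit. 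Finally,~\eqref{q_iid_vs_av} and~\eqref{q_av_vs_av} follow from their i.i.d.-alternative counterparts by invoking Proposition~\ref{av_to_iid_reduction_reg_relent_prop}, which guarantees that the relevant regularised relative entropies are unchanged when $\BB_1^\av$ is replaced by $\co(\BB_1)^\iid$, exactly as in Theorem~\ref{stronger_genq_Sanov_thm}.

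\textbf{The principal obstacle I expect to face} is the uniform-in-$n$ AFW continuity step. Naive AFW bounds scale with the log-dimension, which grows linearly in $n$ for $n$-fold tensor powers and would destroy the uniformity after division by $n$. Circumventing this requires a tailored estimate that leverages the tensor-power structure of $\rho^{\otimes n}$ on the null side together with a quantitative lower bound on the minimal non-zero eigenvalue of $\bar\sigma_n$, so that the $\rho$-sensitive remainder grows at most subexponentially in $n$ and hence vanishes after normalisation; exactly this kind of improved AFW estimate is what makes the present formulation both simpler and strictly more general than the one in~\cite{berta_composite}.
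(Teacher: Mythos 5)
The broad contour of your plan is sound, and you correctly identify the Alicki--Fannes--Winter (AFW) step as the make-or-break obstacle. However, there are two genuine gaps, one of which affects precisely the step you flagged.

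First, the achievability direction is not actually proved. Your key claim is that ``since $\bar\sigma_n$ is itself a mixture of i.i.d.\ alternatives $\sigma^{\otimes n}$, a standard de Finetti / permutation-symmetrisation argument shows that the Hiai--Petz rate $\frac1n D(\rho^{\otimes n}\|\bar\sigma_n)$ is achieved uniformly against every $\sigma^{\otimes n}$.'' This does not hold: controlling the \emph{average} type II error $\Tr[\bar\sigma_n E_n]$ does not control the \emph{worst-case} error $\sup_{\sigma\in\BB_1}\Tr[\sigma^{\otimes n}E_n]$, because any fixed $\sigma$ may carry vanishing (or zero) $\mu$-mass. Upgrading an average bound to a uniform one is exactly the hard part of universal hypothesis testing. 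The paper's route is quite different: a minimax choice of an optimal $k$-copy POVM $\MM_\star$ via~\cite[Lemma~13]{brandao_adversarial}, followed by blocking and a direct appeal to the classical composite Stein exponent result~\cite[Corollary~24]{doubly-comp-classical}, and finally asymptotic spectral pinching. No de Finetti argument appears, and one is needed to make your step rigorous.

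Second, and more concretely, the fix you propose for the AFW continuity step cannot work as stated. You would ground the remainder on ``a quantitative lower bound on the minimal non-zero eigenvalue of $\bar\sigma_n$'', but this eigenvalue generically decays \emph{exponentially} in $n$: take e.g.\ $\mu$ concentrated on a single $\sigma_0$ with a small eigenvalue $\lambda$, so $\bar\sigma_n=\sigma_0^{\otimes n}$ has smallest eigenvalue $\lambda^n$, and the ``remainder'' $\log(1/\lambda^n)=n\log(1/\lambda)$ does not vanish after dividing by $n$. The paper circumvents this in Lemma~\ref{extracting_rho_lemma} by telescoping one tensor factor at a time and invoking closure of the alternative under insertion of a fixed \emph{single-copy} state $\tau\in\relint(\BB_1)$ (assumption~(c) there), so the eigenvalue constant $c$ that enters is $n$-\emph{independent}. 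This structural requirement is met by $\co\big(\BB_1^{\otimes n,\,\mathrm{av}}\big)$ but \emph{not} by $\co\big(\BB_1^{\otimes n,\,\mathrm{iid}}\big)$ or by a singleton $\{\bar\sigma_n\}$, which is why the paper's proof of~\eqref{q_iid_vs_iid_convexity} first converts the i.i.d.\ alternative to an arbitrarily varying one via Proposition~\ref{av_to_iid_reduction_reg_relent_prop}, then applies the AFW lemma, then converts back. That ordering is essential and is absent from your sketch.

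Minor points: the converse you wrote appeals to the composite-null Stein lemma of~\cite{Bjelakovic2005}, but that result requires an i.i.d.\ alternative, whereas $\bar\sigma_n$ is a mixture of i.i.d.\ states; the needed converse is the generic data-processing bound (Lemma~\ref{q_converse_double_Stein_lemma}). Also, to pass from a per-$n$ minimiser $\mu_n$ to a single minimiser outside the limit you need the weighted-sum measure construction of Lemma~\ref{taking_measure_out_lemma}, which your proposal does not mention.
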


The following is an instructive immediate consequence of the above result. The proofs of both Theorem~\ref{q_both_composite_iid_or_av_thm} and Corollary~\ref{q_iid_vs_both_cor} are presented in Section~\ref{subsec_proofs_both_composite}.

\begin{cor} \label{q_iid_vs_both_cor}
Let $\HH$ be a finite-dimensional Hilbert space, and $\AA_1,\BB_1\subseteq \D(\HH)$ two non-empty closed sets of quantum states on $\HH$, with $\BB_1$ convex. The Stein exponents of the two tasks where the null hypothesis is composite i.i.d.\ with base set $\AA_1$ and the alternative hypothesis is either composite i.i.d.\ or arbitrarily varying with base set $\BB_1$ are equal, and given by
\bb
\rel{\stein}{\AA_1^\mathrm{iid}}{\BB_1^\mathrm{iid}} = \rel{\stein}{\AA_1^\mathrm{iid}}{\BB_1^\mathrm{av}} = \minst_{\substack{\rho\in \AA_1, \\[0pt] \mu\in \PP(\BB_1)}} \lim_{n\to\infty}\, \frac1n\, D\Big(\rho^{\otimes n}\, \Big\|\, \scaleobj{.8}{\int_{\BB_1}} \dd\mu(\sigma)\ \sigma^{\otimes n}\Big)\, ,
\label{q_iid_vs_both}
\ee
where $\PP(\BB_1)$ is the set of probability measures on $\BB_1$, and, as before, the minimisation is restricted to the  non-empty set of pairs $(\rho,\mu)$ such that the inner limit in $n$ exists.
\end{cor}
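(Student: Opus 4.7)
The corollary should follow as an essentially immediate consequence of the previous Theorem~\ref{q_both_composite_iid_or_av_thm} once we use the hypothesis that $\BB_1$ is convex, so that $\co(\BB_1)=\BB_1$. The plan is therefore to identify the two relevant identities from Theorem~\ref{q_both_composite_iid_or_av_thm} and collapse them via this convexity observation.

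First, I would invoke \eqref{q_iid_vs_iid_convexity}: since $\BB_1$ is convex (and obviously closed and non-empty by hypothesis), that equation applies directly and yields the closed-form expression for $\rel{\stein}{\AA_1^\mathrm{iid}}{\BB_1^\mathrm{iid}}$ which already coincides with the right-hand side of \eqref{q_iid_vs_both}. This handles the first equality in \eqref{q_iid_vs_both} (between $\rel{\stein}{\AA_1^\mathrm{iid}}{\BB_1^\mathrm{iid}}$ and the stated formula).

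Next, to establish the middle identity $\rel{\stein}{\AA_1^\mathrm{iid}}{\BB_1^\mathrm{iid}}=\rel{\stein}{\AA_1^\mathrm{iid}}{\BB_1^\mathrm{av}}$, I would apply \eqref{q_iid_vs_av}, which in particular states that $\rel{\stein}{\AA_1^\mathrm{iid}}{\BB_1^\mathrm{av}}=\rel{\stein}{\AA_1^\mathrm{iid}}{\co(\BB_1)^\mathrm{iid}}$. Using again that $\co(\BB_1)=\BB_1$ by convexity of $\BB_1$, the right-hand side reduces to $\rel{\stein}{\AA_1^\mathrm{iid}}{\BB_1^\mathrm{iid}}$, closing the loop.

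There is no real obstacle here beyond correctly quoting the relevant pieces of Theorem~\ref{q_both_composite_iid_or_av_thm} and noting the (very short) convex-hull simplification; the non-emptiness of the restricted minimisation set in $(\rho,\mu)$ is inherited from the corresponding statement in Theorem~\ref{q_both_composite_iid_or_av_thm}. The substantive mathematical content, namely the equivalence between arbitrarily varying and composite i.i.d.\ alternative hypotheses up to taking convex hulls, is already carried by Theorem~\ref{q_both_composite_iid_or_av_thm}; the role of Corollary~\ref{q_iid_vs_both_cor} is merely to record the particularly clean form this takes when the base alternative set is itself convex.
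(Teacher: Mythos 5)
Your proposal is correct and matches the paper's intended (implicit) proof: the paper treats Corollary~\ref{q_iid_vs_both_cor} as an immediate consequence of Theorem~\ref{q_both_composite_iid_or_av_thm}, and you correctly obtain it by specialising Eq.~\eqref{q_iid_vs_iid_convexity} to get the closed-form expression and then Eq.~\eqref{q_iid_vs_av} together with $\co(\BB_1)=\BB_1$ to identify $\rel{\stein}{\AA_1^\mathrm{iid}}{\BB_1^\mathrm{av}}$ with $\rel{\stein}{\AA_1^\mathrm{iid}}{\BB_1^\mathrm{iid}}$.
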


It is instructive to compare Corollary~\ref{q_iid_vs_both_cor} to~\cite[Theorem~1.1]{berta_composite}. The setting of interest there is that of two closed and convex sets $\AA_1,\BB_1\subseteq \D(\HH)$. With some further (minor) assumptions on the supports, in~\cite[Theorem~1.1]{berta_composite} it is shown that
\bb
\rel{\stein}{\AA_1^\mathrm{iid}}{\BB_1^\mathrm{iid}} = \lim_{n\to\infty} \frac1n \inf_{\substack{\rho\in \AA_1, \\[0pt] \mu_n\in \PP(\BB_1)}} D\Big(\rho^{\otimes n}\, \Big\|\, \scaleobj{.8}{\int_{\BB_1}} \dd\mu_n(\sigma)\ \sigma^{\otimes n}\Big)\, .
\label{BBH_result}
\ee
Our~\eqref{q_iid_vs_both} constitutes an improvement over~\eqref{BBH_result} in three different ways:
\begin{itemize}[leftmargin=20pt]
\item First, because there is no convexity assumption on $\AA_1$, nor is the support condition in~\cite[Eq.~(13)]{berta_composite} needed.
\item Secondly, because~\eqref{q_iid_vs_both} gives an expression for both $\rel{\stein}{\AA_1^\mathrm{iid}}{\BB_1^\mathrm{iid}}$ and $\rel{\stein}{\AA_1^\mathrm{iid}}{\BB_1^\mathrm{av}}$. The results of~\cite{berta_composite}, on the contrary, cover only the former case.
\item Thirdly, and more importantly, because in~\eqref{q_iid_vs_both} the minimisations over states $\rho\in \AA_1$ and measures $\mu\in \PP(\BB_1)$ are outside the limit, instead of inside like in~\eqref{BBH_result}. Besides being computationally more convenient, as we eliminated the need to optimise over $\rho$ and $\mu$ separately for each $n$, Eq.~\eqref{q_iid_vs_both} is also an improvement from the information theoretic perspective. Indeed, the rightmost side of~\eqref{q_iid_vs_both} is manifestly at least as large as that of~\eqref{BBH_result}, i.e.\ it corresponds to a quantum hypothesis testing procedure that is at least as good as the one that is implicit in~\eqref{BBH_result}. Since in all of these problems the non-trivial statement is always achievability, Eq.~\eqref{q_iid_vs_both} can be truly considered as an improvement over~\eqref{BBH_result}. (As we will see, we will also recover~\eqref{BBH_result} in the course of our proof; cf.\ the second line of~\eqref{q_both_composite_iid_or_av_proof_eq7}).
\end{itemize}

It is also possible to compare~\eqref{q_av_vs_av} with~\cite[Theorem~25]{Fang2025}. Looking at the list of axioms used there, reported in~\cite[Assumption~24]{Fang2025}, we see that the two sequences of hypotheses $\co\big(\AA_1^\mathrm{av}\big)$ and $\co\big(\BB_1^\mathrm{av}\big)$ satisfy them. (This is not the case, instead, if in either hypothesis we replace $\mathrm{av}$ with $\mathrm{iid}$.) Hence, using~\cite[Theorem~25]{Fang2025} we get that
\bb
\rel{\stein}{\AA_1^\mathrm{av}}{\BB_1^\mathrm{av}} = \rel{D^\infty}{\co\big(\AA_1^\mathrm{av}\big)}{\co\big(\BB_1^\mathrm{av}\big)} = \lim_{n\to\infty} \frac1n\, \inf_{\substack{\rho_n \in \co\scaleobj{1.1}{(}\AA_1^{\otimes n,\, \mathrm{av}}\scaleobj{1.1}{)} \\ \sigma_n \in \co\scaleobj{1.1}{(}\BB_1^{\otimes n,\, \mathrm{av}}\scaleobj{1.1}{)}}} D(\rho_n\|\sigma_n)\, . \\[-1ex]
\ee
This is comparable to our expression on the second line of~\eqref{q_av_vs_av}, which is, however, markedly simpler.

\section{Notation} \label{sec_notation}

Before presenting the proofs of the above results, we need to fix some basic notation. A state on a quantum system is represented by a \deff{density operator}, i.e.\ a positive semi-definite trace class operator with trace one, on a Hilbert space $\HH$. Here we will only consider finite-dimensional Hilbert spaces; the set of density operators on $\HH$ is denoted as $\D(\HH)$. A \deff{quantum measurement} on the system represented by $\HH$, also called a \deff{POVM}, is a finite collection $(E_x)_{x\in \XX}$ of positive semi-definite operators $E_x\geq 0$ such that $\sum_{x\in \XX} E_x = \id$. We can think of measurements as maps $\MM:\D(\HH)\to\PP(\XX)$ that take as input a quantum state and output a classical probability distribution, defined by $(\MM(\rho))(x) \coloneqq \Tr[\rho E_x]$.

The task of \deff{quantum hypothesis testing} can be defined by following the discussion in Section~\ref{subsec_background}: given some Hilbert space $\HH$ and two sequences $\AA = (\AA_n)_n$ and $\BB = (\BB_n)_n$ of sets $\AA_n,\BB_n\subseteq \D\big(\HH^{\otimes n}\big)$, at step $n$ we are handed over a density operator $\rho_n \in \D\big(\HH^{\otimes n}\big)$. Our goal is to guess, by means of a suitable binary quantum measurement $(E_n,\id-E_n)$, whether $\rho_n \in \AA_n$ (null hypothesis $\mathrm{H}_0$) or $\rho_n \in \BB_n$ (alternative hypothesis $\mathrm{H}_1$), given the promise that one of these two options is correct. To investigate the ultimate physical limits of quantum hypothesis testing, we will assume that any quantum measurement on $\HH^{\otimes n}$ is achievable. The case where only a restricted set of measurements is available has also been studied~\cite{brandao_adversarial}.

Mistaking $\mathrm{H}_0$ for $\mathrm{H}_1$ is a type I error, while mistaking $\mathrm{H}_1$ for $\mathrm{H}_0$ is a type II error. Therefore, the minimal type II error probability for a given threshold $\e\in (0,1)$ on the type I error probability can then be written as (cf.~\eqref{beta_e_level_n})
\bb
\beta_\e(\AA_n\|\BB_n) = \inf\left\{ \sup_{\sigma_n\in \BB_n}\! \Tr \sigma_n E_n:\ \ 0\leq E_n\leq \id,\ \, \sup_{\rho_n\in \AA_n}\! \Tr \rho_n (\id-E_n) \leq \e \right\} ,
\label{q_beta_e_level_n}
\ee
where the operators $E_n$ act on $\HH^{\otimes n}$, and we recalled that, for two operators $X,Y$, the inequality $X\leq Y$ means that $Y-X$ is positive semi-definite. We can now introduce the \deff{hypothesis testing relative entropy}~\cite{Buscemi2010}
\bb
D_H^\e(\rho\|\sigma) \coloneqq -\log \inf\left\{ \Tr \sigma E:\ \ 0\leq E\leq \id,\ \, \Tr \rho (\id-E) \leq \e \right\} ,
\label{hypothesis_testing_relent}
\ee
in terms of which the negative logarithm of~\eqref{q_beta_e_level_n} can be re-written as~\cite[Lemma~31]{Fang2025}
\bb
-\log \beta_\e(\AA_n\|\BB_n) = \rel{D_H^\e}{\co(\AA_n)}{\co(\BB_n)}\, ,
\label{D_H_and_beta_e}
\ee
where, to define the right-hand side, we adopted the following convention: given a function $\mathds{D}(\cdot\|\cdot): \D(\HH) \times \D(\HH) \to \R$ on pairs of states and two sets $\AA_1,\BB_1\subseteq \D(\HH)$, we set
\bb
\mathds{D}(\AA_1\|\BB_1) \coloneqq \inf_{\rho\in \AA_1,\, \sigma\in \BB_1} \mathds{D}(\rho\|\sigma)\, .
\label{q_relent_sets}
\ee

Now, the \deff{Stein exponent} corresponding to the above quantum hypothesis testing task can then be constructed as (cf.~\eqref{Stein})
\bb
\stein(\AA\|\BB) &= \lim_{\e\to 0^+} \liminf_{n\to\infty} \left\{ -\frac1n\log \beta_\e(\AA_n\|\BB_n)\right\} \\
&= \lim_{\e\to 0^+} \liminf_{n\to\infty} \frac1n\, \rel{D_H^\e}{\co(\AA_n)}{\co(\BB_n)}\, .
\label{q_Stein}
\ee
We observe immediately that the Stein exponent is unchanged if we convexify every element of either of the two sequences of hypotheses. Adopting the intuitive convention of defining
\bb
\co(\FF) \coloneqq \big(\co(\FF_n)\big)_n
\label{q_convexify_sequence_sets}
\ee
for a sequence of sets $\FF = (\FF_n)_n$, we can express this property as the series of identities
\bb
\stein(\AA\|\BB) &= \stein(\co(\AA)\|\BB) = \stein(\AA\|\co(\BB)) = \stein(\co(\AA)\|\co(\BB))\, .
\label{convexify_Stein}
\ee

The (Umegaki) \deff{relative entropy} between two quantum states $\rho,\sigma\in \D(\HH)$ is given by~\cite{Umegaki1962} (cf.~\eqref{Hiai_Petz})
\bb
D(\rho\|\sigma) = \Tr \left[\rho \left(\log \rho - \log \sigma\right) \right] ,
\label{Umegaki}
\ee
where we agree by convention that $D(\rho\|\sigma) = +\infty$ if $\supp(\rho)\not\subseteq \supp(\sigma)$, where the support $\supp(X)$ of a Hermitian operator $X$ is the span of the eigenvectors of $X$ corresponding to non-zero eigenvalues. The operational importance of~\eqref{Umegaki} rests on the fact that it captures precisely the Stein exponent between two simple i.i.d.\ hypotheses, in the sense that, for any $\rho,\sigma\in \D(\HH)$, it holds that~\cite{Hiai1991, Ogawa2000, Bjelakovic2003} (cf.~\eqref{Hiai_Petz})
\bb
\rel{\stein}{(\rho^{\otimes n})_n}{(\sigma^{\otimes n})_n} = D(\rho\|\sigma)\, .
\label{q_Hiai_Petz}
\ee

Given two sets $\AA_1,\BB_1\subseteq \D(\HH)$, their relative entropy $D(\AA_1\|\BB_1)$ is defined according to~\eqref{q_relent_sets}. Analogously, for two \emph{sequences} $\AA = (\AA_n)_n$ and $\BB = (\BB_n)_n$ of sets $\AA_n,\BB_n\subseteq \D\big(\HH^{\otimes n}\big)$, we stipulate that
\bb
D^\infty(\AA\|\BB) \coloneqq \liminf_{n\to\infty} \frac1n\, D(\AA_n\|\BB_n) = \liminf_{n\to\infty} \frac1n\, \inf_{\rho_n \in \AA_n,\ \sigma_n \in \BB_n} D(\rho_n\|\sigma_n)\, .
\label{q_regularised_relent_seq}
\ee

In analysing the right-hand side of~\eqref{q_regularised_relent_seq}, the following well-known result is often useful. It allows to simplify the optimisation to permutationally symmetric pairs of states $(\rho_n,\sigma_n)$ in the case where both $\AA$ and $\BB$ satisfy Axiom~\ref{q_ax_closed_permutations}.

\begin{lemma} \label{relent_perm_symm_lemma}
Let $\HH$ be a Hilbert space, and consider two sequences $\AA = (\AA_n)_n$ and $\BB = (\BB_n)_n$ of sets of states $\AA_n,\BB_n\subseteq \D\big(\HH^{\otimes n}\big)$ that obey Axiom~\ref{q_ax_closed_permutations}. For some $n\in\N^+$, let $\mathds{D}(\cdot\|\cdot):\D\big(\HH^{\otimes n}\big)\times \D\big(\HH^{\otimes n}\big) \to [0,\infty)$ be a function defined on pairs of states. Assume that $\mathds{D}(\cdot\|\cdot)$ is:
\begin{enumerate}[(a),itemsep=0.5ex]
\item jointly convex, i.e.\ such that $\rel{\mathds{D}}{\sumno_x p_x \rho_{n,x}}{\sumno_x p_x \sigma_{n,x}} \leq \sumno_x p_x \mathds{D}(\rho_{n,x}\|\sigma_{n,x})$ for all finite collections of states $(\rho_{n,x})_x$ and $(\sigma_{n,x})_x$ on $\HH^{\otimes n}$ and all probability distributions $p = (p_x)_x$;
\item unitarily invariant, in the sense that $\rel{\mathds{D}}{U_n^{\vphantom{\dag}} \rho_n U_n^\dag}{U_n^{\vphantom{\dag}} \rho_n U_n^\dag} = \mathds{D}(\rho_n\|\sigma_n)$ for all $\rho_n,\sigma_n\in \D\big(\HH^{\otimes n}\big)$ and all unitaries $U_n$ on $\HH^{\otimes n}$.
\end{enumerate}
Then, the infimum in
\bb
\mathds{D}(\AA_n\|\BB_n) = \inf_{\rho_n \in \AA_n,\ \sigma_n \in \BB_n} \mathds{D}(\rho_n\|\sigma_n)
\ee
can be restricted to permutationally invariant states. In other words, we can assume without loss of generality that $\rho_n$ and $\sigma_n$ satisfy that $U_\pi^{\vphantom{\dag}} \rho_n U_\pi^\dag = \rho_n$ and $U_\pi^{\vphantom{\dag}} \sigma_n U_\pi^\dag = \sigma_n$ for all $\pi\in S_n$, where $U_\pi$ is the unitary that acts by permuting the tensor factors of $\HH^{\otimes n}$ according to $\pi$.
\end{lemma}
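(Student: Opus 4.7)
The plan is the standard twirling argument. Given any feasible pair $(\rho_n,\sigma_n)\in\AA_n\times\BB_n$, I form the symmetrisations under the symmetric group,
\[
\bar\rho_n \;:=\; \frac{1}{|S_n|}\sum_{\pi\in S_n} U_\pi^{\vphantom{\dag}}\rho_n^{\vphantom{\dag}} U_\pi^\dag\,, \qquad \bar\sigma_n \;:=\; \frac{1}{|S_n|}\sum_{\pi\in S_n} U_\pi^{\vphantom{\dag}}\sigma_n^{\vphantom{\dag}} U_\pi^\dag\,,
\]
which are manifestly permutationally invariant by construction. Axiom~\ref{q_ax_closed_permutations} guarantees that every orbit element $U_\pi^{\vphantom{\dag}}\rho_n U_\pi^\dag$ belongs to $\AA_n$, and analogously for $\sigma_n$, so $\bar\rho_n\in\co(\AA_n)$ and $\bar\sigma_n\in\co(\BB_n)$; when $\AA_n$ and $\BB_n$ are themselves convex (as is the case in all downstream applications, cf.\ the standing hypothesis on $\AA_n$ in Theorem~\ref{stronger_genq_Sanov_thm}), these twirled states already belong to $\AA_n$ and $\BB_n$ respectively.

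The core estimate is then a one-line application of the two hypotheses on $\mathds{D}$: joint convexity yields
\[
\mathds{D}(\bar\rho_n\|\bar\sigma_n) \;\leq\; \frac{1}{|S_n|}\sum_{\pi\in S_n} \mathds{D}\!\left(U_\pi^{\vphantom{\dag}}\rho_n U_\pi^\dag\,\big\|\, U_\pi^{\vphantom{\dag}}\sigma_n U_\pi^\dag\right),
\]
and unitary invariance collapses each summand on the right-hand side to $\mathds{D}(\rho_n\|\sigma_n)$, giving $\mathds{D}(\bar\rho_n\|\bar\sigma_n)\leq \mathds{D}(\rho_n\|\sigma_n)$. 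Applying this bound pointwise to an infimising sequence in $\AA_n\times\BB_n$ produces an infimising sequence of permutationally invariant pairs, which is exactly the claim of the lemma.

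There is no substantive obstacle: once the twirl is written down, the proof is a direct combination of Axiom~\ref{q_ax_closed_permutations} with the two structural properties of $\mathds{D}$ posited in the hypotheses. The only point requiring a small amount of care is whether the symmetrised states still lie in $\AA_n$ and $\BB_n$ rather than merely in their convex hulls; this is harmless because in every application within this paper the ambient sets are convex, and, as a general safety net for Stein-type quantities, convexification does not affect the value of interest (cf.~\eqref{convexify_Stein}).
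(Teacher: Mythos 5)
Your proof is correct and is essentially the paper's own argument: twirl over $S_n$, apply joint convexity, then unitary invariance. You also correctly flag the one subtlety the lemma statement glosses over (closure of $\AA_n,\BB_n$ under convex combinations is needed to conclude $\bar\rho_n\in\AA_n$, $\bar\sigma_n\in\BB_n$); the paper's own proof silently invokes convexity of the sets at the same point, and all downstream uses do have convex $\AA_n$, $\BB_n$.
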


\begin{proof}
In fact, for any pair of states $\rho'_n \in \AA_n$ and $\sigma'_n \in \BB_n$, we can set
\bb
\rho_n \coloneqq \E_\pi U_\pi^{\vphantom{\dag}} \rho'_n U_\pi^\dag\, ,\qquad \sigma_n \coloneqq \E_\pi U_\pi^{\vphantom{\dag}} \sigma'_n U_\pi^\dag\, ,
\ee
where $\pi\in S_n$ is drawn uniformly at random, and write
\bb
\mathds{D}(\rho_n\|\sigma_n) \leq \E_\pi\, \rel{\mathds{D}}{U_\pi^{\vphantom{\dag}} \rho'_n U_\pi^\dag}{U_\pi^{\vphantom{\dag}} \sigma'_n U_\pi^\dag} = \mathds{D}(\rho'_n\|\sigma'_n)\, ,
\ee
where the inequality is by joint convexity of the measured relative entropy, and the equality by unitary invariance. Since $\rho_n\in \AA_n$ and $\sigma_n \in \BB_n$, due to the convexity and closure under permutations of the respective sets, and both states are permutationally invariant by construction, we have shown that the pair $(\rho_n,\sigma_n)$ attains a value of the function $\mathds{D}(\cdot\|\cdot)$ that is at least as small as that attained by $(\rho'_n,\sigma'_n)$. This proves the claim.
\end{proof}

The regularised relative entropy between sequences of sets appears in the following well-known general converse result, which can be proved 
using the data processing inequality for the Umegaki relative entropy~\cite{lieb73a, lieb73b, lieb73c, Lindblad-monotonicity}. See~\cite[Proposition~2.1]{berta_composite} for details. 

\begin{lemma} \label{q_converse_double_Stein_lemma}
For a finite-dimensional Hilbert space $\HH$, let $\AA = (\AA_n)_n$ and $\BB = (\BB_n)_n$ be two sequences of sets of states $\AA_n,\BB_n\subseteq \D\big(\HH^{\otimes n}\big)$. Then we have
\bb
\stein(\AA\|\BB) \leq D^\infty(\co(\AA)\|\co(\BB))\, ,
\label{q_converse_double_Stein}
\ee
where we adopted the conventions in~\eqref{q_convexify_sequence_sets} and~\eqref{q_regularised_relent_seq}.
\end{lemma}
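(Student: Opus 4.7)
The plan is to reduce the statement to a single-shot bound on the hypothesis testing relative entropy in terms of the Umegaki relative entropy, obtained via the data processing inequality, and then pass to the asymptotic limit. By the identity~\eqref{D_H_and_beta_e} together with the convention~\eqref{q_relent_sets}, I would first rewrite
\begin{equation}
-\log \beta_\e(\AA_n\|\BB_n) = \rel{D_H^\e}{\co(\AA_n)}{\co(\BB_n)} = \inf_{\rho_n \in \co(\AA_n),\, \sigma_n \in \co(\BB_n)} D_H^\e(\rho_n\|\sigma_n)\, ,
\end{equation}
so that it suffices to control $D_H^\e(\rho_n\|\sigma_n)$ by $D(\rho_n\|\sigma_n)$ for each admissible pair, uniformly in a way that is benign as $\e\to 0$.

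The single-shot input I would invoke is the classical bound
\begin{equation}
D_H^\e(\rho\|\sigma) \leq \frac{D(\rho\|\sigma) + h_2(\e)}{1-\e}\, ,
\label{proof_proposal_key_ineq}
\end{equation}
where $h_2(\e) \coloneqq -\e\log\e - (1-\e)\log(1-\e)$. To establish~\eqref{proof_proposal_key_ineq}, I would pick an optimal test $E$ in~\eqref{hypothesis_testing_relent}, so that $p \coloneqq \Tr\rho E \geq 1-\e$ and $q \coloneqq \Tr\sigma E = 2^{-D_H^\e(\rho\|\sigma)}$, and apply the data processing inequality for the Umegaki relative entropy~\cite{lieb73a, lieb73b, lieb73c, Lindblad-monotonicity} to the binary measurement channel $\omega\mapsto \bigl(\Tr\omega E,\, \Tr\omega(\id-E)\bigr)$. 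This yields $D(\rho\|\sigma) \geq d(p\|q)$ for the classical binary relative entropy $d(\cdot\|\cdot)$, and the elementary estimate $d(p\|q) \geq -h_2(p) + p\log(1/q) \geq -h_2(\e) + (1-\e)\log(1/q)$ followed by a rearrangement gives~\eqref{proof_proposal_key_ineq}.

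To conclude, I would take the infimum of~\eqref{proof_proposal_key_ineq} over $\rho_n \in \co(\AA_n)$ and $\sigma_n \in \co(\BB_n)$, divide by $n$, and apply $\liminf_{n\to\infty}$ and then $\lim_{\e\to 0^+}$. Using~\eqref{q_Stein}, this leads to
\begin{equation}
\stein(\AA\|\BB) \leq \lim_{\e\to 0^+} \frac{1}{1-\e}\, \liminf_{n\to\infty} \frac{\rel{D}{\co(\AA_n)}{\co(\BB_n)} + h_2(\e)}{n} = \rel{D^\infty}{\co(\AA)}{\co(\BB)}\, ,
\end{equation}
since, for any fixed $\e$, the additive term $h_2(\e)/n$ vanishes as $n\to\infty$, and $1/(1-\e)\to 1$ as $\e\to 0^+$. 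I do not anticipate a genuine obstacle: the inequality~\eqref{proof_proposal_key_ineq} is a standard consequence of data processing, and the asymptotic passage is routine; the argument follows the same lines as the sketch given in~\cite[Proposition~2.1]{berta_composite}.
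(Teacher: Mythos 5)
Your proof is correct and follows exactly the approach the paper indicates: a single-shot bound $D_H^\e(\rho\|\sigma) \leq (D(\rho\|\sigma)+h_2(\e))/(1-\e)$ obtained from the data processing inequality applied to the binary test channel, followed by the straightforward passage to the asymptotic limit and $\e\to 0^+$. The paper does not spell out the argument but merely cites the data processing inequality and refers to~\cite[Proposition~2.1]{berta_composite}; your derivation is a standard and correct fleshing-out of that sketch (the only cosmetic caveat being that $q = 2^{-D_H^\e(\rho\|\sigma)}$ presumes base-$2$ logarithms, and the inequality $h_2(p)\leq h_2(\e)$ requires $\e\leq 1/2$, which is harmless since $\e\to 0^+$).
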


\section{Proofs} \label{sec_proofs}

This section is devoted to the presentation of the complete proofs of all of our results.

\subsection{Some properties of regularised relative entropies between sequences of sets}

We start by laying the groundwork for the proof of Theorem~\ref{stronger_genq_Sanov_thm}. An important result in this sense is the following proposition, which establishes a rather surprising connection between the regularised relative entropies corresponding to two seemingly different settings, where the alternative hypothesis is either composite i.i.d.\ or arbitrarily varying. 

\begin{prop} \label{av_to_iid_reduction_reg_relent_prop}
Let $\HH$ be a finite-dimensional Hilbert space, $\BB_1\subseteq \D(\HH)$ a closed and convex set of states, and $\AA = (\AA_n)_n$ any sequence of convex sets $\AA_n\subseteq \D\big(\HH^{\otimes n}\big)$ that obeys Axiom~\ref{q_ax_closed_permutations}. Set, as usual,
\bb
\rel{D^\infty}{\AA}{\co\big(\BB_1^\mathrm{b}\big)} = \liminf_{n\to\infty} \frac1n\, \rel{D}{\AA_n}{\co\big(\BB_1^{\otimes n,\,\mathrm{b}}\big)}\, ,\qquad \mathrm{b}\in \{\mathrm{iid},\mathrm{av}\}\, .
\label{av_to_iid_reduction_reg_relent_definitions}
\ee
Then:
\begin{enumerate}[1.]
\item The value of~\eqref{av_to_iid_reduction_reg_relent_definitions} is the same for $\mathrm{b} = \mathrm{iid}$ and $\mathrm{b} = \mathrm{av}$, meaning that
\bb
\rel{D^\infty}{\AA}{\co\big(\BB_1^\mathrm{iid}\big)} = \rel{D^\infty}{\AA}{\co\big(\BB_1^\mathrm{av}\big)}\, .
\label{av_to_iid_reduction_reg_relent}
\ee
\item If any of the two limit infima in~\eqref{av_to_iid_reduction_reg_relent_definitions}  can be replaced with an ordinary limit, then so is true of the other (and, of course, the two limits are equal).
\item This happens, for example, if $\AA$ is closed under tensor products, i.e.\ if $\rho_n\otimes \rho_m\in \AA_{n+m}$ for all $\rho_n\in \AA_n$ and $\rho_m\in \AA_m$.
\end{enumerate}
\end{prop}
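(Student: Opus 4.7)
The easy direction of item~(1) is immediate: since $\co\big(\BB_1^{\otimes n,\mathrm{iid}}\big)\subseteq\co\big(\BB_1^{\otimes n,\mathrm{av}}\big)$, we have $\rel{D}{\AA_n}{\co\big(\BB_1^{\otimes n,\mathrm{av}}\big)}\leq \rel{D}{\AA_n}{\co\big(\BB_1^{\otimes n,\mathrm{iid}}\big)}$ for every $n$, and hence $\rel{D^\infty}{\AA}{\co\big(\BB_1^{\mathrm{av}}\big)}\leq \rel{D^\infty}{\AA}{\co\big(\BB_1^{\mathrm{iid}}\big)}$. The substantive task is therefore the reverse inequality.

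My plan for the reverse direction is as follows. By Lemma~\ref{relent_perm_symm_lemma} applied to the Umegaki relative entropy (jointly convex and unitarily invariant), together with the assumed convexity and permutation-closure of both $\AA_n$ and $\co(\BB_1^{\otimes n,\mathrm{av}})$, the infimum defining $\rel{D}{\AA_n}{\co\big(\BB_1^{\otimes n,\mathrm{av}}\big)}$ can be restricted to permutation-invariant pairs $(\rho_n,\sigma_n)$. Decompose any such $\sigma_n = \int \mathrm{d}\mu(\vec\sigma)\,\bigotimes_{j=1}^n \sigma_j$ with $\mu$ a symmetric probability measure on $\BB_1^n$, and define the candidate i.i.d.\ mixture $\tau_n \coloneqq \int \mathrm{d}\mu(\vec\sigma)\,\bar\sigma_{\vec\sigma}^{\,\otimes n}$, where $\bar\sigma_{\vec\sigma} \coloneqq \frac{1}{n}\sum_{j=1}^n \sigma_j \in \BB_1$ by convexity of $\BB_1$; by construction, $\tau_n\in\co\big(\BB_1^{\otimes n,\mathrm{iid}}\big)$. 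The crux is to establish an operator-dominance bound $\sigma_n \leq c_{n,d}\,\tau_n$ in the positive-semi-definite order, with $c_{n,d} = \mathrm{poly}(n,d)$ and $d = \dim\HH$. Granted such a bound, operator monotonicity of $\log$ gives $\log\sigma_n \leq \log c_{n,d} + \log\tau_n$, hence $D(\rho_n\|\tau_n) \leq D(\rho_n\|\sigma_n) + \log c_{n,d} = D(\rho_n\|\sigma_n) + O(\log n)$; dividing by $n$ and taking $\liminf$ yields $\rel{D^\infty}{\AA}{\co\big(\BB_1^{\mathrm{iid}}\big)} \leq \rel{D^\infty}{\AA}{\co\big(\BB_1^{\mathrm{av}}\big)}$.

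The main obstacle is establishing the operator-dominance bound with a polynomial prefactor. A naive attempt, obtained by expanding $\bar\sigma_{\vec\sigma}^{\,\otimes n} = \frac{1}{n^n} \sum_{\vec k \in [n]^n} \bigotimes_i \sigma_{k_i}$ and retaining only the $n!$ permutation terms, gives $P_{\mathrm{sym}}(\sigma_1 \otimes \cdots \otimes \sigma_n) \leq \frac{n^n}{n!}\,\bar\sigma_{\vec\sigma}^{\,\otimes n}$, an exponential-in-$n$ prefactor that is too weak (the resulting overhead in $\frac{1}{n}D$ would be $\Theta(1)$ rather than $o(1)$). A sharper estimate is needed, exploiting that permutation-symmetric states on $\HH^{\otimes n}$ are supported in the symmetric subspace of dimension $\binom{n+d-1}{d-1} = \mathrm{poly}(n,d)$; I anticipate invoking a post-selection / Christandl--K\"onig--Renner-type de Finetti bound, carefully tailored so that the dominating i.i.d.\ mixture is forced to lie in $\co\big(\BB_1^{\otimes n,\mathrm{iid}}\big)$ rather than being an integral over arbitrary pure states. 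This is where the bulk of the technical work should lie.

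Items~(2) and~(3) follow cleanly. For~(2): combining the easy direction of~(1) with the $O(\log n)$ gap above yields $0 \leq \frac{1}{n}\big(\rel{D}{\AA_n}{\co(\BB_1^{\otimes n,\mathrm{iid}})} - \rel{D}{\AA_n}{\co(\BB_1^{\otimes n,\mathrm{av}})}\big) = O(\log n/n) \to 0$, so the two normalised sequences differ by $o(1)$ and share their $\liminf$, $\limsup$, and convergence behaviour. For~(3): closure of $\AA$ under tensor products, combined with the trivial inclusion $\co\big(\BB_1^{\otimes m,\mathrm{av}}\big) \otimes \co\big(\BB_1^{\otimes n,\mathrm{av}}\big) \subseteq \co\big(\BB_1^{\otimes(m+n),\mathrm{av}}\big)$ and additivity of the Umegaki relative entropy on tensor products, yields subadditivity of the sequence $n\mapsto \rel{D}{\AA_n}{\co(\BB_1^{\otimes n,\mathrm{av}})}$; Fekete's lemma then gives existence (in fact, equality with the infimum) of the av-side limit, which via~(2) transfers to the i.i.d.\ side.
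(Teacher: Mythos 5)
The easy direction of item~(1) and items~(2) and~(3) are handled correctly and match the paper's reasoning: the containment $\BB_1^{\otimes n,\mathrm{iid}}\subseteq\BB_1^{\otimes n,\mathrm{av}}$ gives the trivial inequality, item~(2) follows once the two normalised sequences are shown to differ by $o(1)$, and item~(3) uses subadditivity of $n\mapsto\rel{D}{\AA_n}{\co(\BB_1^{\otimes n,\mathrm{av}})}$ plus Fekete's lemma, exactly as in the paper's proof.

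The substantive gap is in the reverse direction of item~(1), and you have in fact put your finger on the difficulty without resolving it. The bound you want, $P_{\mathrm{sym}}(\sigma_1\otimes\cdots\otimes\sigma_n)\leq c_{n,d}\,\bar\sigma_{\vec\sigma}^{\,\otimes n}$ with $c_{n,d}=\mathrm{poly}(n,d)$, is \emph{false} already in the commuting (classical) case: if the $\sigma_j$ are $n$ pairwise distinct diagonal states, the relevant quantity is $V^{\otimes n}(T_{n,V})^{-1}$ for the uniform type $V$ on $n$ symbols, which is $n^n/n!\sim e^n/\sqrt{2\pi n}$ by Stirling. The polynomial-in-$n$ type bound $(n+1)^{|\XX|}$ that you would like to invoke only gives something polynomial when the alphabet size $|\XX|$ is bounded independently of $n$, which is exactly what fails here. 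The hope that a Christandl--K\"onig--Renner-type de Finetti/post-selection bound rescues the argument does not materialise: the post-selection technique dominates a permutation-invariant state by $\mathrm{poly}(n)\cdot\int\dd\sigma\,\sigma^{\otimes n}$ with the integral taken against a \emph{fixed universal} measure over the full state space $\D(\HH)$ (and, incidentally, its prefactor is $\binom{n+d^2-1}{d^2-1}$, from the symmetric subspace of dimension $d^2$ after purification, not $\binom{n+d-1}{d-1}$). Restricting that de Finetti measure to land in $\co(\BB_1^{\mathrm{iid}})$ is not a ``careful tailoring''; it is the open part of the problem, and no off-the-shelf de Finetti statement provides it.

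The paper circumvents this by a two-parameter argument that you are missing. One first fixes $\delta>0$, covers $\BB_1$ by finitely many operator-order balls around representatives $\sigma_x\in\BB_1$, $x\in\XX_\delta$ (so $|\XX_\delta|<\infty$ is $\delta$-dependent but $n$-independent), and pays a multiplicative $\exp[n\delta]$ to replace each $\omega_{i,j}$ by its representative $\sigma_{\widebar{x}(i,j)}$. The types argument over the \emph{finite} alphabet $\XX_\delta$ then yields the polynomial factor $(n+1)^{|\XX_\delta|}$ in the operator ordering $\Omega_n\leq(n+1)^{|\XX_\delta|}\exp[n\delta]\,\Omega_n'$, with $\Omega_n'\in\co(\BB_1^{\otimes n,\mathrm{iid}})$. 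Dividing by $n$, taking $\liminf_n$, and only then sending $\delta\to0^+$ produces the desired equality. The order of limits is essential: the prefactor $(n+1)^{|\XX_\delta|}\exp[n\delta]$ is \emph{not} $o(e^{\epsilon n})$ uniformly in $\delta$, so one cannot collapse the two limits into a single polynomial bound of the form $c_{n,d}$ the way your proposal envisions. To repair your argument you would need to introduce the $\delta$-discretisation and the iterated limit; the ``sharper estimate'' you anticipate does not exist.
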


\begin{proof}
We start with the first claim. In the early part of the proof, we apply a discretisation procedure to $\BB_1$, so as to effectively reduce ourselves to the case of a finite $\BB_1$. We will then solve the latter by employing types. We remind the reader that, given a finite alphabet $\XX$ and some $n\in \N^+$, an \deff{$n$-type} on $\XX$ (or, simply, a \deff{type}) is a probability distribution $V:\XX\to [0,1]$ on $\XX$ with the property that $nV(x)\in \N$ for all $x\in \XX$. We denote the set of $n$-types on $\XX$ as $\TT_n$.

We begin with some preliminary considerations. Without loss of generality, we can assume that $\BB_1$ is non-empty. Since it is a convex set, elementary considerations show that all states in its relative interior, which is also non-empty and will be denoted by $\relint(\BB_1)$, must have the same support, call it $\KK \subseteq \HH$, and that $\KK$ must contain the support of any other state in $\BB_1$. Therefore, any state $\sigma_n \in \co\big(\BB_1^{\otimes n,\,\mathrm{b}}\big)$ will also satisfy $\supp(\sigma_n) \subseteq \KK^{\otimes n}$, entailing that the infimum over $\rho_n\in \AA_n$ that is implicit in the right-hand side of~\eqref{av_to_iid_reduction_reg_relent_definitions} can be restricted to states $\rho_n$ whose support lies also in $\KK$. Hence, up to considering a smaller $\HH$, we can assume without loss of generality that $\KK=\HH$. If we do that, then we will automatically have that $\sigma>0$ for all $\sigma \in \relint(\BB_1)$.

Now, let us fix some $\delta>0$. For some $\sigma\in \relint(\BB_1)$, consider the set of operators
\bb
\widetilde{\pazocal{U}}_\sigma^\delta \coloneqq \left\{ Z=
Z^\dag:\ Z < \exp[\delta] \,\sigma\right\} = \left\{ Z=Z^\dag:\ \big\|\sigma^{-1/2} Z \sigma^{-1/2}\big\|_\infty < \exp[\delta] \right\} ,
\ee
where $\|\cdot\|_\infty$ is the operator norm. Since $\|\cdot\|_\infty$ is a continuous function, the set $\widetilde{\pazocal{U}}^\delta_\sigma$ is open (as a subset of the real Euclidean space of all Hermitian operators). We now claim that
\bb
\BB_1 \subseteq \bigcup_{\sigma\in \relint(\BB_1)} \widetilde{\pazocal{U}}^\delta_\sigma\, .
\label{av_to_iid_reduction_reg_relent_proof_eq2}
\ee
In fact, consider an arbitrary $\sigma\in \BB_1$, and pick any $\tau\in \relint(\BB_1)$ (as said, this latter set is non-empty because $\BB_1$ is convex and non-empty). It is immediate to verify that $\tau' \coloneqq (1-p) \tau + p\sigma \in \relint(\BB_1)$ for all $p\in [0,1)$. Taking $p = \exp[-\delta]$ and using the fact that $\tau>0$, we have
\bb
\sigma < \frac1p \left( (1-p) \tau + p\sigma \right) = \exp[\delta]\, \tau'\, ,
\ee
entailing that $\sigma \in \widetilde{\pazocal{U}}^\delta_{\tau'} \subseteq \bigcup_{\sigma'\in \relint(\BB_1)} \widetilde{\pazocal{U}}^\delta_{\sigma'}$; since $\sigma\in \BB_1$ was arbitrary, this proves~\eqref{av_to_iid_reduction_reg_relent_proof_eq2}.

To proceed further, note that $\BB_1$, which is a closed set of states and hence also bounded, is compact. Therefore, from the open cover in~\eqref{av_to_iid_reduction_reg_relent_proof_eq2}  we can extract a finite sub-cover
\bb
\BB_1 \subseteq \bigcup_{x\in \XX_\delta} \,\widetilde{\pazocal{U}}^\delta_{\sigma_x}\, ,
\label{av_to_iid_reduction_reg_relent_proof_eq4}
\ee
where $\sigma_x \in \relint(\BB_1)$ for all $x\in \XX_\delta$, and $|\XX_\delta|<\infty$. Enumerating the elements of $\XX_\delta$ as $x_1,\ldots, x_N$, we set
\bb
\pazocal{U}^\delta_{\sigma_{x_1}} \coloneqq \widetilde{\pazocal{U}}_{\sigma_{x_1}}^\delta \cap \BB_1\, ,\qquad  \pazocal{U}^\delta_{\sigma_{x_i}} \coloneqq \widetilde{\pazocal{U}}^\delta_{\sigma_{x_i}}\cap \BB_1\cap \left( \bigcup\nolimits_{j=1}^{i-1} \widetilde{\pazocal{U}}_{\sigma_{x_j}} \right)^c\qquad (i=2,\ldots,N)\, .
\ee
Note that, for all $x\in \XX_\delta$, we have $\pazocal{U}^\delta_{\sigma_{x}}\subseteq \widetilde{\pazocal{U}}^\delta_{\sigma_{x}}$, and hence also $\sigma < \exp[\delta]\,\sigma_x$ for all $\sigma\in \pazocal{U}^\delta_{\sigma_{x}}$. These new sets $\pazocal{U}^\delta_{\sigma_{x}}$ contain only states and are still Borel (although, in general, not open any more). They constitute a partition of $\BB_1$, because they are disjoint by construction: formally, 
\bb
\BB_1 = \bigcup_{x\in \XX_\delta} \,\pazocal{U}^\delta_{\sigma_x}\, ,\qquad \sigma_x\in \BB_1\quad \forall\ x\in \XX_\delta\, ,\quad \pazocal{U}^\delta_{\sigma_x} \cap \pazocal{U}^\delta_{\sigma_y} = \emptyset\quad \forall\ x,y\in \XX_\delta :\ \ x\neq y\, .
\label{discretisation_partition_B_1}
\ee

We now move on to the proof of~\eqref{av_to_iid_reduction_reg_relent}. Due to the convexity and permutation invariance of both $\AA_n$ and $\co\big( \BB_1^{\otimes n,\, \mathrm{av}}\big)$, Lemma~\ref{relent_perm_symm_lemma} implies that the optimisation over pairs of states that is implicit in $\rel{D}{\AA_n}{\co\big( \BB_1^{\otimes n,\, \mathrm{av}}\big)}$ can be restricted to permutationally symmetric states. As one readily verifies, a permutationally symmetric state in $\co\big( \BB_1^{\otimes n,\,\mathrm{av}}\big)$ takes the form
\bb
\Omega_n = \sum_{i=1}^M p_i\, \E_\pi \left[ \omega_{i,\pi(1)} \otimes \ldots \otimes \omega_{i,\pi(n)} \right] ,
\label{av_to_iid_reduction_reg_relent_proof_eq6}
\ee
where $M$ is finite (and can be bounded by Carath\'eodory's theorem), $\omega_{i,j}\in \BB_1$ for all $i=1,\ldots, M$ and $j=1,\ldots,n$, and $\pi\in S_n$ is a uniformly random permutation.  Therefore, we can write
\bb
D(\AA_n\|\,\Omega_n) \leq \rel{D}{\AA_n}{\co\big(\BB_1^{\otimes n,\,\mathrm{av}}\big)} + 1\, ,
\label{av_to_iid_reduction_reg_relent_proof_eq7}
\ee
for some $\Omega_n$ of the form~\eqref{av_to_iid_reduction_reg_relent_proof_eq6}. 

Due to~\eqref{discretisation_partition_B_1}, each of the states $\omega_{i,j}$ appearing in~\eqref{av_to_iid_reduction_reg_relent_proof_eq6} belongs to exactly one set $\pazocal{U}^\delta_{\sigma_x}$. Let us call $\widebar{x} :\{1\ldots, M\}\times \{1,\ldots,n\} \to \XX_\delta$ the function such that $\omega_{i,j} \in \pazocal{U}^\delta_{\sigma_{\widebar{x}(i,j)}}$ for all $i,j$. In particular,
\bb
\omega_{i,j} \leq \exp[\delta]\, \sigma_{\widebar{x}(i,j)}
\ee
(even with a strict inequality); plugging this into~\eqref{av_to_iid_reduction_reg_relent_proof_eq6}, we obtain that
\bb
\Omega_n \leq \exp[n\delta] \sum_{i=1}^M p_i\, \E_\pi \left[ \sigma_{\widebar{x}(i, \pi(1))} \otimes \ldots \otimes \sigma_{\widebar{x}(i, \pi(n))} \right] .
\label{av_to_iid_reduction_reg_relent_proof_eq9}
\ee
The state defined by the sum on the right-hand side belongs to $\co\big(\{\sigma_x\!:\, x\in \XX_\delta\}^{\otimes n,\,\mathrm{av}}\big)$ and is permutationally symmetric; therefore, it can be written as
\bb
\sum_{i=1}^M p_i\, \E_\pi \left[ \sigma_{\widebar{x}(i, \pi(1))} \otimes \ldots \otimes \sigma_{\widebar{x}(i, \pi(n))} \right] = \sum_{V\in \TT_n} p(V)\, \gamma_{n,V}\, ,
\ee
where the sum is over the set $\TT_n$ of $n$-types over $\XX$ (defined at the beginning of the proof), $p$ is a probability distribution on $\TT_n$, and we introduced the notation
\bb
\gamma_{n,V} \coloneqq \frac{1}{|T_{n,V}|} \sum_{x^n\in T_{n,V}} \sigma_{x_1}\otimes \ldots \otimes \sigma_{x_n}\, .
\ee
From~\eqref{av_to_iid_reduction_reg_relent_proof_eq9} we have
\bb
\Omega_n \leq \exp[n\delta] \sum_{V\in \TT_n} p(V)\, \gamma_{n,V}\, .
\label{av_to_iid_reduction_reg_relent_proof_eq12}
\ee

The crucial insight of the proof comes now. We observe that $\gamma_{n,V}$ can be upper bounded with a polynomial multiple of the i.i.d.\ state $\left(\sum_x V(x)\, \sigma_x\right)^{\otimes n}$, where $\sum_x V(x)\, \sigma_x\in \BB_1$. To see this, it suffices to expand the tensor product, retain only the sequences with type $V$, and apply Sanov's theorem~\cite[Exercise~2.12(a), p.~29]{CSISZAR-KOERNER}. More in detail,
\bb
\left(\sumno_{x\in \XX_\delta} V(x)\, \sigma_x\right)^{\otimes n} &= \sum_{x^n\in \XX_\delta^n} V^{\otimes n}(x^n)\, \sigma_{x_1}\otimes \ldots \otimes \sigma_{x_n} \\
&= \sum_{W\in \TT_n} V^{\otimes n}(T_{n,W})\,\gamma_{n,W} \\
&\geq V^{\otimes n}(T_{n,V})\,\gamma_{n,V} \\
&\geq \frac{\gamma_{n,V}}{(n+1)^{|\XX_\delta|}}\, .
\label{av_to_iid_reduction_reg_relent_proof_eq13}
\ee
Setting
\bb
\Omega'_n \coloneqq \sum_{V\in \TT_n} p(V) \left(\sumno_{x\in \XX_\delta} V(x)\, \sigma_x\right)^{\otimes n} \in \co\big(\BB_1^{\otimes n,\,\mathrm{iid}}\big)\, ,
\ee
where we remembered that $\BB_1$ is convex, and combining~\eqref{av_to_iid_reduction_reg_relent_proof_eq12} and~\eqref{av_to_iid_reduction_reg_relent_proof_eq13}, we see that
\bb
\Omega_n \leq (n+1)^{|\XX_\delta|}\, \exp[n\delta]\, \Omega'_n\, .
\ee
Plugging this inequality into~\eqref{av_to_iid_reduction_reg_relent_proof_eq7} and exploiting the operator monotonicity of the logarithm yields
\bb
\rel{D}{\AA_n}{\co\big(\BB_1^{\otimes n,\,\mathrm{iid}}\big)} &\leq D(\AA_n\|\,\Omega'_n) \\
&\leq D(\AA_n\|\,\Omega_n) + n\delta + |\XX_\delta| \log(n+1) \\
&\leq \rel{D}{\AA_n}{\co\big(\BB_1^{\otimes n,\,\mathrm{av}}\big)} + 1 + n\delta + |\XX_\delta| \log(n+1)\, .
\label{av_to_iid_reduction_reg_relent_proof_eq16}
\ee
We can now divide by $n$ and append also the trivial inequality 
that follows from the inclusion relation $\BB_1^\mathrm{iid} \subseteq \BB_1^\mathrm{av}$, thus obtaining
\bb
\frac1n\,\rel{D}{\AA_n}{\co\big(\BB_1^{\otimes n,\,\mathrm{av}}\big)} &\leq \frac1n\,\rel{D}{\AA_n}{\co\big(\BB_1^{\otimes n,\,\mathrm{iid}}\big)} \\ &\leq \frac1n\,\rel{D}{\AA_n}{\co\big(\BB_1^{\otimes n,\,\mathrm{av}}\big)} + \delta + \frac{1+|\XX_\delta| \log(n+1)}{n}\, .
\label{av_to_iid_reduction_reg_relent_proof_eq17}
\ee
Taking the limit infimum as $n\to\infty$, one obtains the inequality
\bb
\rel{D^\infty}{\AA}{\co\big(\BB_1^{\mathrm{av}}\big)} \leq \rel{D^\infty}{\AA}{\co\big(\BB_1^\mathrm{iid}\big)} \leq \rel{D^\infty}{\AA}{\co\big(\BB_1^{\mathrm{av}}\big)} + \delta\, .
\ee
Since $\delta>0$ was arbitrary, we can now send $\delta\to 0^+$ and prove~\eqref{av_to_iid_reduction_reg_relent}.

The second claim follows once again from~\eqref{av_to_iid_reduction_reg_relent_proof_eq17}, which also implies that
\bb
\limsup_{n\to\infty} \frac1n\,\rel{D}{\AA_n}{\co\big(\BB_1^{\otimes n,\,\mathrm{av}}\big)} &= \limsup_{n\to\infty} \frac1n\,\rel{D}{\AA_n}{\co\big(\BB_1^{\otimes n,\,\mathrm{iid}}\big)}\, .
\ee

We now move on to the third claim. If $\AA$ is closed under tensor products, then the sequence $n\mapsto \rel{D}{\AA_n}{\co\big(\BB_1^{\otimes n,\,\mathrm{av}}\big)}$ turns out to be sub-additive.\footnote{A sequence $\N^+\ni n\mapsto a_n$ is called sub-additive if $a_{n+m}\leq a_n + a_m$ for all $n,m\in \N^+$.} Indeed, for all $n,m\in \N^+$ and all quadruples of states $\rho_n\in \AA_n$, $\rho_m\in \AA_m$, $\sigma_n\in \co\big(\BB_1^{\otimes n,\,\mathrm{av}}\big)$, and $\sigma_m\in \co\big(\BB_1^{\otimes m,\, \mathrm{av}}\big)$, since
\bb
\sigma_n\otimes \sigma_m \in \co\Big(\BB_1^{\otimes (n+m),\, \mathrm{av}}\Big)\, ,
\label{av_to_iid_reduction_reg_relent_proof_eq20}
\ee
as a little thought shows, we have
\bb
D\Big(\AA_{n+m} \,\Big\|\, \co\Big(\BB_1^{\otimes (n+m),\,\mathrm{av}}\Big)\Big) &\leq D(\rho_n\otimes \rho_m\,\|\, \sigma_n\otimes \sigma_m) = D(\rho_n\|\sigma_n) + D(\rho_m \| \sigma_m)\, ,
\label{av_to_iid_reduction_reg_relent_proof_eq21}
\ee
which proves sub-additivity once we take the infimum over $\rho_n$, $\rho_m$, $\sigma_n$, and $\sigma_m$. Due to Fekete's lemma~\cite{Fekete1923}, the limit $\lim_{n\to\infty} \frac1n\, \rel{D}{\AA_n}{\co\big(\BB_1^{\otimes n,\,\mathrm{av}}\big)}$ exists. Then, by the second claim, also $\lim_{n\to\infty} \frac1n\, \rel{D}{\AA_n}{\co\big(\BB_1^{\otimes n,\,\mathrm{iid}}\big)}$ must exist, and the two need to be equal.
\end{proof}

The following result allows us to pull the optimisation over measures out of the limit infimum in the definition of regularised relative entropy.

\begin{lemma} \label{taking_measure_out_lemma}
Let $\HH$ be a finite-dimensional Hilbert space, and $\BB_1\subseteq \D(\HH)$ a Borel subset of states on $\HH$. For any sequence $\AA = (\AA_n)_n$ of sets $\AA_n\subseteq \D\big(\HH^{\otimes n}\big)$, the regularised relative entropy
\bb
\rel{D^\infty}{\AA}{\co\big(\BB_1^\mathrm{iid}\big)} &= \liminf_{n\to\infty} \frac1n\, \rel{D}{\AA_n}{\co\big(\BB_1^{\otimes n,\, \mathrm{iid}}\big)} \\
&= \liminf_{n\to\infty} \frac1n \inf_{\substack{\\[-1pt] \rho_n\in \AA_n,\\[0pt] \mu_n\in \PP(\BB_1)}} D\Big(\rho_n\, \Big\|\, \scaleobj{.8}{\int_{\BB_1}} \dd\mu_n(\sigma_1)\ \sigma_1^{\otimes n}\Big)
\label{taking_measure_out_prelim}
\ee
can be alternatively written by taking the infimum over all probability measures on $\BB_1$ outside the limit infimum, which has the added advantage of turning it into a minimum:
\bb
\rel{D^\infty}{\AA}{\co\big(\BB_1^\mathrm{iid}\big)} = \min_{\mu\in \PP(\BB_1)} \liminf_{n\to\infty} \frac1n \inf_{\rho_n\in \AA_n} D\Big(\rho_n\, \Big\|\, \scaleobj{.8}{\int_{\BB_1}} \dd\mu(\sigma_1)\ \sigma_1^{\otimes n}\Big)\, .
\label{taking_measure_out}
\ee
Moreover, if the limit infima in~\eqref{taking_measure_out_prelim} can be replaced by ordinary limits, then we have also
\bb
\rel{D^\infty}{\AA}{\co\big(\BB_1^\mathrm{iid}\big)} = \minst_{\mu\in \PP(\BB_1)}\, \lim_{n\to\infty} \frac1n \inf_{\rho_n\in \AA_n}\, D\Big(\rho_n\, \Big\|\, \scaleobj{.8}{\int_{\BB_1}} \dd\mu(\sigma_1)\ \sigma_1^{\otimes n}\Big)\, ,
\label{taking_measure_out_ordinary_limit}
\ee
where $\minst_\mu$ indicates that the minimum is restricted to those $\mu$ such that the inner limit in $n$ exists (such a set is non-empty).
\end{lemma}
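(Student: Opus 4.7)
Write $F_n(\mu):=\frac{1}{n}\inf_{\rho_n\in\AA_n}D\bigl(\rho_n\,\bigl\|\,\int_{\BB_1}\dd\mu(\sigma)\,\sigma^{\otimes n}\bigr)$ and $L:=\liminf_{n\to\infty}\inf_{\mu\in\PP(\BB_1)}F_n(\mu)$, the quantity on the first line of~\eqref{taking_measure_out_prelim}. The easy direction, $L\leq \inf_{\mu}\liminf_n F_n(\mu)$, is immediate since for each fixed $\mu$ and each $n$ we have $\inf_{\mu_n}F_n(\mu_n)\leq F_n(\mu)$. The non-trivial task is to exhibit a single $\mu^\star\in\PP(\BB_1)$ realising the reverse inequality, which will simultaneously show that the infimum in~\eqref{taking_measure_out} is attained (hence a minimum).

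The key trick will be to take $\mu^\star$ to be a countable convex combination of near-optimal measures at selected scales. Concretely, pick, for each $n$, a measure $\mu_n^\star\in\PP(\BB_1)$ with $F_n(\mu_n^\star)\leq \inf_{\mu}F_n(\mu)+\tfrac{1}{n}$; then extract a subsequence $(n_k)_k$ along which $F_{n_k}(\mu_{n_k}^\star)\to L$ and set
\bb
\mu^\star := \sum_{k=1}^\infty c_k\,\mu_{n_k}^\star \in \PP(\BB_1)\, ,
\ee
where $c_k>0$, $\sum_k c_k=1$, and (as I explain below) the $c_k$ and $n_k$ are chosen so that $-\frac{\log c_k}{n_k}\to 0$ (for instance $c_k=2^{-k}$ and any $n_k\geq k^2$). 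The crucial operator inequality is
\bb
\scaleobj{.8}{\int_{\BB_1}}\dd\mu_{n_k}^\star(\sigma)\,\sigma^{\otimes n} \leq c_k^{-1}\scaleobj{.8}{\int_{\BB_1}}\dd\mu^\star(\sigma)\,\sigma^{\otimes n}\qquad\forall\,n,k\, ,
\ee
which follows by just dropping the other non-negative terms from the convex combination. Applying operator monotonicity of $-\log$ and taking the trace against any $\rho_n\in\AA_n$ gives
\bb
D\bigl(\rho_n\,\bigl\|\,\scaleobj{.8}{\int}\dd\mu^\star\,\sigma^{\otimes n}\bigr)\leq D\bigl(\rho_n\,\bigl\|\,\scaleobj{.8}{\int}\dd\mu_{n_k}^\star\,\sigma^{\otimes n}\bigr) - \log c_k\, ,
\ee
so after optimising over $\rho_n$ we obtain $F_n(\mu^\star)\leq F_n(\mu_{n_k}^\star)-\tfrac{\log c_k}{n}$. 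Specialising to $n=n_k$ and letting $k\to\infty$ yields $\liminf_n F_n(\mu^\star)\leq \lim_k F_{n_k}(\mu^\star)\leq L$; combining with the trivial lower bound $F_n(\mu^\star)\geq \inf_\mu F_n(\mu)$ gives $\liminf_n F_n(\mu^\star)=L$, proving~\eqref{taking_measure_out}.

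For the moreover statement, I would repeat the construction but now, using the hypothesis that $\lim_n \inf_\mu F_n(\mu)=L$ exists, no subsequence is needed: take $\mu_n^\star$ near-optimal at \emph{every} scale $n$ (so that $F_n(\mu_n^\star)\to L$), and set $\mu^\star:=\sum_{n=1}^\infty c_n\mu_n^\star$ with $c_n>0$, $\sum_n c_n=1$, and $-\frac{\log c_n}{n}\to 0$ (e.g.\ $c_n\propto n^{-2}$). The same inequality now reads $F_n(\mu^\star)\leq F_n(\mu_n^\star)-\tfrac{\log c_n}{n}$ for every $n$, so both $\limsup_n F_n(\mu^\star)\leq L$ and $\liminf_n F_n(\mu^\star)\geq L$ hold, establishing that $\lim_n F_n(\mu^\star)$ exists and equals $L$. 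In particular, the set of $\mu$ for which the inner limit exists is non-empty, and $\mu^\star$ attains the restricted minimum~\eqref{taking_measure_out_ordinary_limit}. The main conceptual obstacle is really the first one: relative entropy is only lower semicontinuous in its second argument, so a compactness-plus-continuity argument at the level of $\PP(\BB_1)$ would not close the gap; the mixture construction sidesteps this by producing a concrete majorising state for which an \emph{upper} bound on $D$ is immediate from operator monotonicity.
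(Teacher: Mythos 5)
Your proof is correct and follows essentially the same route as the paper: build a single measure as a countable convex combination of near-optimal measures, use the operator inequality $\int\dd\mu^\star\,\sigma^{\otimes n}\geq c_k\int\dd\mu_{n_k}^\star\,\sigma^{\otimes n}$ together with operator monotonicity of the logarithm, and choose the weights so that the additive cost $-\tfrac{\log c_k}{n}$ vanishes. The only small deviations are that the paper mixes over \emph{all} $n$ with weights $\tfrac{6}{\pi^2 n^2}$ rather than extracting a sparse subsequence (so the extra bookkeeping you introduce is unnecessary), and that the paper explicitly verifies the countable mixture converges to a genuine probability measure (via completeness of the space of regular signed measures in total-variation norm), a point your write-up leaves implicit.
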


\begin{proof}
By taking as ansatz for $\mu_n$ a fixed probability measure $\mu\in \PP(\BB_1)$, we see immediately that
\bb
\rel{D^\infty}{\AA}{\co\big(\BB_1^\mathrm{iid}\big)} \leq \inf_{\mu\in \PP(\BB_1)} \liminf_{n\to\infty} \frac1n \inf_{\rho_n\in \AA_n} D\Big(\rho_n\, \Big\|\, \scaleobj{.8}{\int_{\BB_1}} \dd\mu(\sigma_1)\ \sigma_1^{\otimes n}\Big)\, .
\label{taking_measure_out_proof_eq0}
\ee
The non-trivial inequality, therefore, is the opposite one. For each $n\in \N^+$, consider some $\mu_n\in \PP(\BB_1)$ such that
\bb
D\Big(\AA_n\, \Big\|\, \scaleobj{.8}{\int_{\BB_1}} \dd\mu_n(\sigma_1)\ \sigma_1^{\otimes n}\Big) \leq \inf_{\nu_n\in \PP(\BB_1)} D\Big(\AA_n\, \Big\|\, \scaleobj{.8}{\int_{\BB_1}} \dd\nu_n(\sigma_1)\ \sigma_1^{\otimes n}\Big) + 1 = \rel{D}{\AA_n}{\co\big(\BB_1^{\otimes n,\, \mathrm{iid}}\big)} + 1\, .
\label{taking_measure_out_proof_eq1}
\ee

Then, define a probability measure $\mu\in \PP(\BB_1)$ as
\bb
\mu \coloneqq \sum_{n=1}^\infty \frac{6}{\pi^2 n^2}\, \mu_n\, .
\ee
The above series is well defined because the space of regular signed measures on the $\sigma$-algebra of Borel subsets of $\BB_1$ is a Banach space, and hence complete, with respect to the total variation norm. In our case, the partial sums of the above series form a Cauchy sequence with respect to this norm, and hence converge to a limit measure that we call $\mu$. The non-negativity of $\mu$ is elementary, and, due to Euler's solution of the Basel problem~\cite{Euler-Basel}, $\mu$ is actually a probability measure. 

Naturally, for any $n\in \N^+$ it holds that
\bb
\int_{\BB_1} \dd\mu(\sigma_1)\ \sigma_1^{\otimes n} \geq \frac{6}{\pi^2 n^2} \int_{\BB_1} \dd\mu_n(\sigma_1)\ \sigma_1^{\otimes n} ,
\ee
entailing that
\bb
\frac1n\, D\Big(\AA_n\, \Big\|\, \scaleobj{.8}{\int_{\BB_1}} \dd\mu(\sigma_1)\ \sigma_1^{\otimes n}\Big) &\leqt{(i)} \frac1n\, D\Big(\AA_n\, \Big\|\, \frac{6}{\pi^2 n^2} \scaleobj{.8}{\int_{\BB_1}} \dd\mu_n(\sigma_1)\ \sigma_1^{\otimes n}\Big) \\
&= \frac1n\, D\Big(\AA_n\, \Big\|\, \scaleobj{.8}{\int_{\BB_1}} \dd\mu_n(\sigma_1)\ \sigma_1^{\otimes n}\Big) + \frac1n \log \frac{\pi^2 n^2}{6} \\
&\leqt{(ii)} \frac1n\, \rel{D}{\AA_n}{\co\big(\BB_1^{\otimes n,\, \mathrm{iid}}\big)} + \frac1n  + \frac1n \log \frac{\pi^2 n^2}{6}\, .
\label{taking_measure_out_proof_eq4}
\ee
Here, in~(i) we used the operator monotonicity of the logarithm, while~(ii) follows from~\eqref{taking_measure_out_proof_eq1}. 
Taking the limit infimum as $n\to\infty$, we obtain that
\bb
\inf_{\nu\in \PP(\BB_1)} \liminf_{n\to\infty} \frac1n\, D\Big(\AA_n\, \Big\|\, \scaleobj{.8}{\int_{\BB_1}} \dd\nu(\sigma_1)\ \sigma_1^{\otimes n}\Big) &\leq \liminf_{n\to\infty} \frac1n\, D\Big(\AA_n\, \Big\|\, \scaleobj{.8}{\int_{\BB_1}} \dd\mu(\sigma_1)\ \sigma_1^{\otimes n}\Big) \\
&\leq \liminf_{n\to\infty} \frac1n\left( \rel{D}{\AA_n}{\co\big(\BB_1^{\otimes n,\, \mathrm{iid}}\big)} + 1 + \log \frac{\pi^2 n^2}{6}\right) \\
&= \liminf_{n\to\infty} \frac1n\,\rel{D}{\AA_n}{\co\big(\BB_1^{\otimes n,\, \mathrm{iid}}\big)} \\
&= \rel{D^\infty}{\AA}{\BB_1^\mathrm{iid}}\, ,
\ee
which, together with~\eqref{taking_measure_out_proof_eq0}, shows that $\mu$ achieves the infimum on the leftmost side. This proves~\eqref{taking_measure_out}.

As for the last claim, we can reason as follows. If the limit infimum in the definition of $\rel{D^\infty}{\AA}{\co\big(\BB_1^\mathrm{iid}\big)}$ is actually a limit, then from~\eqref{taking_measure_out_proof_eq4} it also follows that
\bb
\limsup_{n\to\infty} \frac1n\, D\Big(\AA_n\, \Big\|\, \scaleobj{.8}{\int_{\BB_1}} \dd\mu(\sigma_1)\ \sigma_1^{\otimes n}\Big) &\leq \lim_{n\to\infty} \frac1n\,\rel{D}{\AA_n}{\co\big(\BB_1^{\otimes n,\, \mathrm{iid}}\big)} \\
&= \rel{D^\infty}{\AA}{\co\big(\BB_1^\mathrm{iid}\big)} \\
&= \liminf_{n\to\infty} \frac1n\,\rel{D}{\AA_n}{\co\big(\BB_1^{\otimes n,\, \mathrm{iid}}\big)} \\
&\leqt{(iii)} \inf_{\nu\in \PP(\BB_1)} \liminf_{n\to\infty} \frac1n\, D\Big(\AA_n\, \Big\|\, \scaleobj{.8}{\int_{\BB_1}} \dd\nu(\sigma_1)\ \sigma_1^{\otimes n}\Big) \\
&\leq \liminf_{n\to\infty} \frac1n\, D\Big(\AA_n\, \Big\|\, \scaleobj{.8}{\int_{\BB_1}} \dd\mu(\sigma_1)\ \sigma_1^{\otimes n}\Big)\, ,
\ee
where~(iii) is analogous to~\eqref{taking_measure_out_proof_eq0}. This is only possible if
\bb
\lim_{n\to\infty} \frac1n\, D\Big(\AA_n\, \Big\|\, \scaleobj{.8}{\int_{\BB_1}} \dd\mu(\sigma_1)\ \sigma_1^{\otimes n}\Big) = \rel{D^\infty}{\AA}{\co\big(\BB_1^\mathrm{iid}\big)}\, ,
\ee
where the limit exists. From~\eqref{taking_measure_out} we then obtain that
\bb
\rel{D^\infty}{\AA}{\co\big(\BB_1^\mathrm{iid}\big)} &= \min_{\nu\in \PP(\BB_1)} \liminf_{n\to\infty} \frac1n \inf_{\rho_n\in \AA_n} D\Big(\rho_n\, \Big\|\, \scaleobj{.8}{\int_{\BB_1}} \dd\nu(\sigma_1)\ \sigma_1^{\otimes n}\Big) \\
&\leq \minst_{\nu\in \PP(\BB_1)}\, \lim_{n\to\infty} \frac1n \inf_{\rho_n\in \AA_n} D\Big(\rho_n\, \Big\|\, \scaleobj{.8}{\int_{\BB_1}} \dd\nu(\sigma_1)\ \sigma_1^{\otimes n}\Big) \\
&\leq \lim_{n\to\infty} \frac1n \inf_{\rho_n\in \AA_n} D\Big(\rho_n\, \Big\|\, \scaleobj{.8}{\int_{\BB_1}} \dd\mu(\sigma_1)\ \sigma_1^{\otimes n}\Big) \\
&= \rel{D^\infty}{\AA}{\co\big(\BB_1^\mathrm{iid}\big)}\, ,
\ee
which proves also~\eqref{taking_measure_out_ordinary_limit}.
\end{proof}

\begin{rem}
The above proof of Lemma~\ref{taking_measure_out_lemma} is quite general, and it works also if instead of the relative entropy one were to consider a different quantum divergence $\mathds{D}(\cdot\|\cdot)$, with the only assumptions that it be: (a)~anti-monotonic in the second argument, and (b)~such that $\mathds{D}(\rho\|\lambda\sigma) = \mathds{D}(\rho\|\sigma) - \log \lambda$ for all pairs of states $\rho,\sigma$ and all $\lambda>0$. These assumptions are satisfied by most quantum divergences, including e.g.\ the max-relative entropy of~\cite{Datta08}.
\end{rem}

\subsection{Proof of Theorem~\ref{stronger_genq_Sanov_thm}} \label{subsec_proof_stronger_genq_Sanov}

Before we delve into the proof of Theorem~\ref{stronger_genq_Sanov_thm}, we need to introduce some terminology concerning measured relative entropies, which are indispensable tools to lift classical results to the quantum world. The \deff{measured relative entropy} between two quantum states $\rho$ and $\sigma$ on the same Hilbert space $\HH$ is defined as
\bb
D^{\all}(\rho\|\sigma) \coloneqq \sup_{\MM \in \all} \rel{D}{\MM(\rho)}{\MM(\sigma)}\, ,
\label{measured_relative_entropy}
\ee
where $\all$ denotes the set of all quantum measurements (POVMs) on the system, which we can think of as quantum-to-classical channels. In general, the measured relative entropy will be smaller than its quantum counterpart~\cite[Proposition~5]{Berta2017}. However, it is a fundamental fact of quantum mechanics that when the Hilbert is of the form $\HH^{\otimes n}$ and the second state is permutationally symmetric over the copies, the two are asymptotically very close. This key insight goes under the name of \emph{asymptotic spectral pinching}~\cite{Hiai1991, Hayashi2002, Sutter2017}. Here we report it in the form of~\cite[Lemma~2.4]{berta_composite}, with the explicit estimates in~\cite[Eq.~(6.16) and~(6.18)]{HAYASHI-GROUP}:

\begin{lemma}[{\cite[Lemma~2.4]{berta_composite}}] \label{pinching_lemma}
Let $\HH$ be a Hilbert space of dimension $d \coloneqq \dim(\HH) < \infty$, and let $\rho_n,\sigma_n \in \D\big(\HH^{\otimes n}\big)$ be two states over $n$ copies of the system. Assume that $\sigma_n$ is permutation invariant, in the sense that $U_\pi^{\vphantom{\dag}} \sigma_n U_\pi^\dag = \sigma_n$ for all permutations $\pi\in S_n$, where $U_\pi$ is the unitary that permutes the tensor factors of $\HH^{\otimes n}$ according to $\pi$. Then
\bb
D(\rho_n\|\sigma_n) - (d-1)\big(\tfrac{d}{2}+1\big) \log(n+1) \leq D^\all(\rho_n\|\sigma_n) \leq D(\rho_n\|\sigma_n)\, .
\ee
\end{lemma}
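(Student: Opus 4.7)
The right-hand inequality $D^\all(\rho_n\|\sigma_n) \le D(\rho_n\|\sigma_n)$ is immediate from the data-processing inequality for Umegaki's relative entropy applied to any measurement channel $\MM$, followed by taking the supremum over $\MM \in \all$. All the work is therefore in the lower bound, and the plan is to obtain it via the \emph{asymptotic spectral pinching} technique.

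Concretely, I would introduce the pinching map $\pazocal{P}_{\sigma_n}$ associated with $\sigma_n$, i.e.\ the completely positive and trace-preserving super-operator that projects onto the eigenspaces of $\sigma_n$. Letting $N_n \coloneqq |\spec(\sigma_n)|$ denote the number of distinct eigenvalues of $\sigma_n$, the standard pinching inequality (see e.g.\ Hayashi's book) yields $\rho_n \le N_n\, \pazocal{P}_{\sigma_n}(\rho_n)$, and hence, by operator monotonicity of the logarithm,
$$
D(\rho_n\|\sigma_n) \;\le\; D\big(\pazocal{P}_{\sigma_n}(\rho_n)\,\big\|\,\sigma_n\big) + \log N_n\,.
$$
Since $\pazocal{P}_{\sigma_n}(\rho_n)$ and $\sigma_n = \pazocal{P}_{\sigma_n}(\sigma_n)$ are simultaneously diagonalisable, the relative entropy on the right-hand side equals the classical Kullback--Leibler divergence between the eigenvalue distributions obtained by measuring both states in any common eigenbasis of $\sigma_n$. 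That measurement is an element of $\all$, so it is upper bounded by $D^\all(\rho_n\|\sigma_n)$, giving $D(\rho_n\|\sigma_n) \le D^\all(\rho_n\|\sigma_n) + \log N_n$.

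The remaining (and main) technical task is the representation-theoretic estimate $N_n \le (n+1)^{(d-1)(d/2+1)}$. Here I would invoke Schur--Weyl duality: $\HH^{\otimes n} \cong \bigoplus_{\lambda} V_\lambda \otimes W_\lambda$, where $\lambda$ ranges over Young diagrams with $n$ boxes and at most $d$ rows, $V_\lambda$ is the irreducible $S_n$-module and $W_\lambda$ the irreducible $\mathrm{GL}(d)$-module. Any permutation-invariant operator must act as $\id_{V_\lambda}\otimes A_\lambda$ on each isotypic block, so the number of distinct eigenvalues of $\sigma_n$ is at most $\sum_\lambda \dim W_\lambda$. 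Combining the polynomial bounds on the number of such Young diagrams ($\le (n+1)^{d-1}$) and on each Weyl-module dimension ($\dim W_\lambda \le (n+1)^{d(d-1)/2}$, from the hook-content formula) delivers exactly the claimed exponent $(d-1)(d/2+1) = (d-1)(d+2)/2$.

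The substantive obstacle is the eigenvalue count: the pinching step itself is routine, but extracting the sharp polynomial bound relies on the full strength of the Schur--Weyl decomposition, and one must be careful to track the dependence on $d$ in the dimension of the Weyl modules. With that ingredient in hand, the two inequalities combine to give the lemma; the lower bound is the one obtained above, and the upper bound is the elementary data-processing statement noted at the outset.
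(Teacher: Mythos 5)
Your proof is correct, and it takes the standard asymptotic spectral pinching route. Note that the paper does not actually prove this lemma itself: it cites \cite{berta_composite} for the statement and \cite{HAYASHI-GROUP} for the explicit polynomial constants, and those sources use exactly the strategy you outline, namely the pinching inequality $\rho_n \le N_n\,\pazocal{P}_{\sigma_n}(\rho_n)$, reduction to a classical divergence realised by a POVM in a common eigenbasis, and a polynomial bound on the number of distinct eigenvalues of a permutation-invariant operator via Schur--Weyl duality. Your count $(n+1)^{d-1}$ for the number of partitions of $n$ with at most $d$ rows is right, and the commutant structure giving $N_n \le \sum_\lambda \dim W_\lambda$ is correctly invoked.

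One small attributional slip is worth fixing. The bound $\dim W_\lambda \le (n+1)^{d(d-1)/2}$ does \emph{not} come from the hook-content formula, which is a product over all $n$ cells of $\lambda$ and therefore only yields a bound exponential in $n$. It comes from the Weyl dimension formula
\begin{equation*}
\dim W_\lambda = \prod_{1\le i<j\le d} \frac{\lambda_i-\lambda_j+j-i}{j-i}\,,
\end{equation*}
a product of $\binom{d}{2}=d(d-1)/2$ factors, each of which equals $1+\tfrac{\lambda_i-\lambda_j}{j-i}\le 1+n$. With this correction the exponent $(d-1)+d(d-1)/2=(d-1)(d/2+1)$ falls out exactly as you computed, and the argument is complete.
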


We are now ready to present the full proof of Theorem~\ref{stronger_genq_Sanov_thm}.

\begin{proof}[Proof of Theorem~\ref{stronger_genq_Sanov_thm}]
The first part of the argument is similar to that employed to prove~\cite[Theorem~14]{generalised-Sanov}, with the important difference that, instead of relying on~\cite[Theorem~8]{generalised-Sanov}, we employ the stronger~\cite[Theorem~4]{doubly-comp-classical}. Fix $k\in \N^+$, $\mathrm{b} \in \{\mathrm{iid},\mathrm{av}\}$, and write
\bb
\rel{D^{\all}}{\AA_k}{\co\big(\BB_1^{\otimes k,\,\mathrm{b}}\big)} &= \inf_{\rho_k\in \AA_k,\ \sigma_k \in \co\scaleobj{1.3}{(}\BB_1^{\otimes k,\,\mathrm{b}}\scaleobj{1.3}{)}}\ \sup_{\MM\in \all} \rel{D}{\MM(\rho_k)}{\MM(\sigma_k)} \\
&\eqt{(i)} \sup_{\MM\in \all}\ \inf_{\rho_k\in \AA_k,\ \sigma_k \in \co\scaleobj{1.3}{(}\BB_1^{\otimes k,\,\mathrm{b}}\scaleobj{1.3}{)}} \rel{D}{\MM(\rho_k)}{\MM(\sigma_k)}\, ,
\ee
where, as before, $\all$ denotes the set of all quantum measurements with finitely many outcomes, and the equality~(i) holds due to~\cite[Lemma~13]{brandao_adversarial}, because $\AA_k$ and $\co\big(\BB_1^{\otimes k,\,\mathrm{b}}\big)$ are both closed and convex,\footnote{The set $\co\big(\BB_1^{\otimes k,\,\mathrm{b}}\big)$ is closed because it is the convex hull of the compact set $\BB_1^{\otimes k,\,\mathrm{b}}$. The compactness of $\BB_1^{\otimes k,\,\mathrm{b}}$ follows from~\eqref{q_F_n_iid} and from the compactness of $\BB_1$.} and the set of all measurements is closed under `finitely labelled mixtures'. See also~\cite[Lemma~A.2]{berta_composite} for this special case. Due to the above equality, we can now fix a measurement $\MM_\star$ on $k$ copies of the system such that
\bb
\inf_{\rho_k\in \AA_k,\ \sigma_k \in \co\scaleobj{1.3}{(}\BB_1^{\otimes k,\,\mathrm{b}}\scaleobj{1.3}{)}} \rel{D}{\MM_\star(\rho_k)}{\MM_\star(\sigma_k)} \geq \rel{D^{\all}}{\AA_k}{\co\big(\BB_1^{\otimes k,\,\mathrm{b}}\big)} - 1\, .
\label{stronger_genq_Sanov_proof_eq2}
\ee

We can now devise the following strategy to perform hypothesis testing on $n$ copies of the system, for any positive integer $n$. We first divide the systems into $m \coloneqq \floor{n/k}$ batches comprising $k$ sub-systems each, discarding the rest. Note that as $n\in \N^+$ increases, $m$ takes all possible integer values. 

Due to the convexity of $\AA_n$ and to the fact that $\AA$ satisfies Axioms~\ref{q_ax_closed_permutations} and~\ref{q_ax_filtering}, discarding any $k$ sub-systems maps states in $\AA_{n+k}$ to states in $\AA_n$. To see this, it suffices to take $F_k=\id^{\otimes k}$ in~\eqref{complete_cone_preservation}; to trace away other sub-systems rather than the last $k$, simply apply a suitable permutation and exploit 
Axiom~\ref{q_ax_closed_permutations}. The same is true, rather more obviously, for the alternative hypotheses $\co\big(\BB_1^\mathrm{iid}\big)$ and $\co\big(\BB_1^\mathrm{av}\big)$.

Now, on each batch of $k$ sub-systems we apply the measurement $\MM_\star$, with outcome space $\XX$. We are thus left with a string of outcomes $x^m\in \XX^m$, which we treat as a random variable generated by an unknown probability distribution $P_m$. We then run a classical asymmetric hypothesis testing protocol between the following two hypotheses:
\begin{itemize}
\item[$\mathrm{H}_0$.] Null hypothesis: $P_m \in \RR_m$;
\item[$\mathrm{H}_1$.] Alternative hypothesis: $P_m \in \SS_n$.
\end{itemize}
Here, as $\RR_m$ and $\SS_m$ we choose the two sets of probability distributions
\bb
\RR_m \coloneqq \MM_\star^{\otimes m}\big(\AA_{mk} \big)
\label{q_to_c_reduction_sets_R_m}
\ee
and
\bb
\SS_m \coloneqq \left\{ \MM_\star\big(\sigma_k\big)^{\otimes m}:\ \sigma_k\in \co\big(\BB_1^{\otimes k,\, \mathrm{b}}\big)\right\} = \MM_\star\Big( \co\big(\BB_1^{\otimes k,\, \mathrm{b}}\big) \Big)^{\otimes m,\, \mathrm{iid}} ,
\label{q_to_c_reduction_sets_S_m}
\ee
where we employed the notation in~\eqref{q_F_n_iid} and, for a set of states $\FF_k \subseteq \D\big(\HH^{\otimes k}\big)$, we defined $\MM_\star(\FF_k) \coloneqq \left\{ \MM_\star(\sigma_k):\ \sigma_k\in \FF_k\right\}$. Doing so yields the inequality
\bb
\rel{\stein}{\AA}{\BB_1^{\mathrm{b}}} &\geq \frac1k\, \stein(\RR\|\SS)\, ,
\label{stronger_genq_Sanov_proof_eq5}
\ee
where the factor $1/k$ comes from the fact that we have consumed (asymptotically) $k$ quantum systems to produce each classical system. For more details on this relatively standard step, we refer the reader to the analogous quantum-to-classical reduction that leads to~\cite[Eq.~(39)]{berta_composite}.

To continue, we want to apply~\cite[Theorem~4]{doubly-comp-classical} to the classical setting. To this end, we verify assumptions~(a') and~(b) there:
\begin{enumerate}
\item[(a')] The fact that $\RR_1$ is closed and that any $\RR_m$ is convex follows immediately from the corresponding properties of $\AA_k$ and $\AA_{mk}$. Similarly, the fact that $\RR_m$ is closed under permutations of the symbols (i.e.~\cite[Axiom~III]{doubly-comp-classical} for $\RR$) descends directly from Axiom~\ref{q_ax_closed_permutations} for $\AA$. Verifying the closure under tensor powers of $\RR$~\cite[Axiom~II]{doubly-comp-classical} is also elementary: for all $m\in \N^+$ and $P = \MM_\star(\rho_k) \in \RR_1$, where $\rho_k\in \AA_k$, using Axiom~\ref{q_ax_tensor_powers} for $\AA$ we have
\bb
P^{\otimes m} = \MM_\star^{\otimes m}\big(\rho_k^{\otimes m}\big) \in \MM_\star^{\otimes m}(\AA_{mk}) = \RR_m\, .
\ee

As for~\cite[Axiom~I]{doubly-comp-classical}, which is nothing but the classical version of Axiom~\ref{q_ax_depolarising} in which we also set $k=1$, we use the state $\rho_1$ whose existence is guaranteed by Axiom~\ref{q_ax_depolarising} for $\AA$ to define $R \coloneqq \MM_\star\big(\rho_1^{\otimes k}\big)\in \RR_1$. By Axiom~\ref{q_ax_depolarising}, and due to the fact that we are in finite dimension, for all $\rho_{mk}\in \AA_{mk}$ there must exist a constant $C<\infty$ such that $\rho_{mk} \leq C \rho_1^{\otimes mk}$; then, by measuring we deduce that
\bb
\MM_\star^{\otimes m} (\rho_{mk}) \leq C\, \MM_\star^{\otimes m} \big(\rho_1^{\otimes mk}\big) = C\, \MM_\star\big(\rho_1^{\otimes k}\big)^{\otimes m} = C\, R^{\otimes m} ,
\ee
i.e.\ $\supp\big(\MM_\star^{\otimes m} (\rho_{mk})\big) \subseteq \supp(R)^m$. Since $\rho_{mk}\in \AA_{mk}$ was arbitrary, this shows directly that $\supp(P_m) \subseteq \supp(R)^m$ for all $P_m = \MM_\star^{\otimes m} (\rho_{mk}) \in \RR_m$. Also, for all $\delta\in [0,1]$ we see that
\bb
\mmv{R}^{\otimes m} \circ \MM_\star^{\otimes m} =  \left( \mmv{\,\MM_\star\left(\rho_1^{\otimes k}\right)} \circ \MM_\star\right)^{\otimes m} = \left(\MM_\star \circ \mmv{\,\rho_1^{\otimes k}}\right)^{\otimes m} = \MM_\star^{\otimes m} \circ \mmv{\,\rho_1^{\otimes k}}^{\otimes m} ,
\ee
where $\mmv{R} = \mmv{\,\MM_\star\left(\rho_1^{\otimes k}\right)}$ is the classical depolarising channel, defined by a formula identical to~\eqref{q_depolarising}. Applying this identity to an arbitrary $\rho_{mk}\in \AA_{mk}$ and using Axiom~\ref{q_ax_depolarising}(b) for $\AA$ shows immediately that $\RR$ satisfies~\cite[Axiom~I]{doubly-comp-classical}.

The only condition that remains to verify is~\cite[Axiom~V]{doubly-comp-classical}. Denoting by $U$ the uniform probability distribution on $\XX$, and fixing some $\lambda\in (0,1)$ to be determined later, define the classical channel $W = \mmq{\lambda}{U}: \XX\to \XX$ with transition probabilities $W(y|x) = (1-\lambda) \delta_{x,y} + \frac{\lambda}{|\XX|}$. Clearly, $W$ is informationally complete. 

To continue, we need to fix a notation for the POVM that represents the measurement $\MM_\star$, which will be denoted by $(E_x)_{x\in \XX}$. Remember that each $E_x$ is a positive semi-definite operator on $\HH$, and that $\sum_{x\in \XX} E_x = \id$. Now, for all $x_1,\ldots,x_{m-1},y_m\in \XX$, all probability distributions $Q_m = Q_{X_1\ldots X_m} \in \RR_m$, with $Q_m = \MM_\star^{\otimes m}(\rho_{mk})$ and $\rho_{mk}\in \AA_{mk}$, setting $Y_m \coloneqq W(X_m)$ we have
\bb
&\pr\{Y_m = y_m\}\, Q_{X_1\ldots X_{m-1}\,|\, Y_m = y_m}(x_1,\ldots, x_{m-1}) \\
&\qquad = Q_{X_1\ldots X_{m-1} Y_m}(x_1,\ldots,x_{m-1},y_m) \\
&\qquad = \sum_{x_m} W(y_m|x_m)\, Q_{X_1\ldots X_m}(x_1,\ldots,x_m) \\
&\qquad = \sum_{x_m} \left((1-\lambda)\delta_{x_m,y_m} + \tfrac{\lambda}{|\XX|}\right) \Tr \left[ \left(E_{x_1}\otimes \ldots \otimes E_{x_m}\right) \rho_{mk}\right] \\
&\qquad = \Tr \left[ \left(E_{x_1}\otimes \ldots \otimes E_{x_{m-1}} \otimes \left( (1-\lambda) E_{y_m} + \tfrac{\lambda}{|\XX|} \id \right) \right) \rho_{mk}\right] ;
\ee
in other words,
\bb
&\pr\{Y_m = y_m\}\, Q_{X_1\ldots X_{m-1}\,|\, Y_m = y_m} \\
&\quad = \MM_\star^{\otimes (m-1)}\left(\Tr_{(m-1)k+1,\ldots,\, mk} \left[ \left(\id^{\otimes (m-1)k} \otimes \left( (1-\lambda) E_{y_m} + \tfrac{\lambda}{|\XX|} \id \right) \right) \rho_{mk}\right] \right) ,
\ee
where the partial trace on the right-hand side is over the last $k$ sub-systems. Provided that $\lambda\in (0,1)$ is large enough, we will have $(1-\lambda) E_{y_m} + \tfrac{\lambda}{|\XX|} \id\in \pazocal{V}$ for all values of $y_m\in \XX$ simultaneously, where $\pazocal{V}$ is the neighbourhood from Axiom~\ref{q_ax_filtering} for $\AA$. This then ensures that $\pr\{Y_m = y_m\}\, Q_{X_1\ldots X_{m-1}\,|\, Y_m = y_m} \in \MM_\star^{\otimes (m-1)}\big( \cone(\AA_{(m-1)k})\big)$. Renormalising, this implies $Q_{X_1\ldots X_{m-1}\,|\, Y_m = y_m} \in \MM_\star^{\otimes (m-1)}\big( \AA_{(m-1)k}\big) = \RR_{m-1}$, completing the verification of~\cite[Axiom~V]{doubly-comp-classical}.

\item[(b)] $\SS_1 = \MM_\star\Big( \co\big(\BB_1^{\otimes k,\, \mathrm{b}}\big) \Big)$ is clearly convex, and hence it is star-shaped around any $R\in \SS_1$; choosing some $R$ in the relative interior of $\SS_1$, we also obtain that $\supp(Q)\subseteq \supp(R)$ for all $Q\in \SS_1$.
\end{enumerate}

We are now in position to continue from~\eqref{stronger_genq_Sanov_proof_eq5}, obtaining
\bb
\rel{\stein}{\AA}{\BB_1^{\mathrm{b}}} &\geq \frac1k\, \stein(\RR\|\SS) \\
&\eqt{(ii)} \frac1k\, D(\RR_1\|\SS_1) \\
&\eqt{(iii)} \frac1k \inf_{\substack{\rho_k \in \AA_k,\\ \sigma_k \in \co\scaleobj{1.3}{(}\BB_1^{\otimes k,\,\mathrm{b}}\scaleobj{1.3}{)}}} \rel{D}{\MM_\star(\rho_k)}{\MM_\star(\sigma_{k})} \\
&\geqt{(iv)} \frac1k\, \rel{D^\all}{\AA_k}{\co\big(\BB_1^{\otimes k,\,\mathrm{b}}\big)} - \frac{1}{k} \\
&\geqt{(v)} \frac1k\, \rel{D}{\AA_k}{\co\big(\BB_1^{\otimes k,\,\mathrm{b}}\big)} - (d-1)\big(\tfrac{d}{2}+1\big) \frac{\log(k+1)}{k} - \frac{1}{k}\, .
\label{stronger_genq_Sanov_proof_eq11}
\ee
Here: in~(ii) we applied~\cite[Theorem~4]{doubly-comp-classical}, which is possible because we just verified conditions~(a') and~(b) there; in applying it, we also remembered that $\SS_1$ is convex; in~(iii) we used~\eqref{q_to_c_reduction_sets_R_m} and~\eqref{q_to_c_reduction_sets_S_m}; (iv)~is simply~\eqref{stronger_genq_Sanov_proof_eq2}; finally, (v)~holds due to the asymptotic spectral pinching inequality (Lemma~\ref{pinching_lemma}). Let us provide a little more detail. 

Due to the convexity and closure under permutations of both $\AA_k$ and $\co\big(\BB_1^{\otimes k,\,\mathrm{b}}\big)$, Lemma~\ref{relent_perm_symm_lemma} guarantees that
\bb
\rel{D^\all}{\AA_k}{\co\big(\BB_1^{\otimes k,\,\mathrm{iid}}\big)} = \inf_{\substack{\rho_k \in \AA_k,\ \sigma_k \in \co\scaleobj{1.3}{(}\BB_1^{\otimes k,\,\mathrm{b}}\scaleobj{1.3}{)}, \\ \text{$\rho_k$, $\sigma_k$ perm.\ inv.}}} D^\all(\rho_k\|\sigma_{k})\, ,
\label{stronger_genq_Sanov_proof_eq12}
\ee
where on the right-hand side the infimum is further restricted to permutationally invariant $\rho_k$ and $\sigma_k$. With this step clarified, the inequality~(v) in~\eqref{stronger_genq_Sanov_proof_eq11} descends directly from Lemma~\ref{pinching_lemma}.

We can now take the limit supremum as $k\to\infty$ in~\eqref{stronger_genq_Sanov_proof_eq11}, obtaining that
\bb
\rel{\stein}{\AA}{\BB_1^{\mathrm{b}}} &\geq \limsup_{k\to\infty} \frac1k\, \rel{D}{\AA_k}{\co\big(\BB_1^{\otimes k,\,\mathrm{b}}\big)}\, .
\ee
On the other hand, Lemma~\ref{q_converse_double_Stein_lemma} guarantees that also the converse statement
\bb
\rel{\stein}{\AA}{\BB_1^{\mathrm{b}}} &\leq \rel{D^\infty}{\AA}{\co\big(\BB_1^{\mathrm{b}}\big)} = \liminf_{k\to\infty} \frac1k\, \rel{D}{\AA_k}{\co\big(\BB_1^{\otimes k,\,\mathrm{b}}\big)}
\ee
holds, implying that
\bb
\rel{\stein}{\AA}{\BB_1^{\mathrm{b}}} = \rel{D^\infty}{\AA}{\co\big(\BB_1^{\mathrm{b}}\big)} = \lim_{k\to\infty} \frac1k\, \rel{D}{\AA_k}{\co\big(\BB_1^{\otimes k,\,\mathrm{b}}\big)}\, ,
\label{stronger_genq_Sanov_proof_eq16}
\ee
where the limit exists. Taking $\mathrm{b} = \mathrm{iid}$ and using~\eqref{taking_measure_out_ordinary_limit} in Lemma~\ref{taking_measure_out_lemma} proves~\eqref{stronger_genq_Sanov_iid}. As for~\eqref{stronger_genq_Sanov_av}, we can write
\bb
\rel{\stein}{\AA}{\BB_1^{\mathrm{av}}}\ &\eqt{(vi)}\ \, \rel{\stein}{\AA}{\co\big(\BB_1^{\mathrm{av}}\big)} \\
&\eqt{(vii)}\ \, \rel{\stein}{\AA}{\co\big(\co(\BB_1)^{\mathrm{av}}\big)} \\
&\eqt{(viii)}\ \, \rel{\stein}{\AA}{\co(\BB_1)^{\mathrm{av}}} \\
&\eqt{(ix)}\ 
\lim_{k\to\infty} \frac1k\, \rel{D}{\AA_k}{\co\big(\co(\BB_1)^{\otimes k,\,\mathrm{av}}\big)} \\
&\eqt{(x)}\ \lim_{k\to\infty} \frac1k\, \rel{D}{\AA_k}{\co\big(\co(\BB_1)^{\otimes k,\,\mathrm{iid}}\big)} \\
&\eqt{(xi)}\ \, \rel{\stein}{\AA}{\co(\BB_1)^\mathrm{iid}}\, .
\label{stronger_genq_Sanov_proof_eq17}
\ee 
The above derivation can be justified as follows. In~(vi) and~(viii) we used the already mentioned fact that the Stein exponent does not change if we take the convex hulls of the sets representing the hypotheses, as follows, for example, from the expression on the second line of~\eqref{q_Stein}. In~(vii) we noted the elementary identity $\co\big(\BB_1^{\otimes n,\, \mathrm{av}}\big) = \co\big(\co(\BB_1)^{\otimes n,\, \mathrm{av}}\big)$, which holds for any set $\BB_1$. The identity in~(ix) is an application of~\eqref{stronger_genq_Sanov_proof_eq16} with $\BB_1\mapsto \co(\BB_1)$ and $\mathrm{b}\mapsto \mathrm{av}$. Continuing, (x)~follows from the second claim in Proposition~\ref{av_to_iid_reduction_reg_relent_prop}, and in~(xi) we applied once again~\eqref{stronger_genq_Sanov_proof_eq16}, this time with $\BB_1\mapsto \co(\BB_1)$ and $\mathrm{b}\mapsto \mathrm{iid}$.

Finally, as before, the expression on the last line of~\eqref{stronger_genq_Sanov_av}, featuring the infimum over $\mu\in \PP(\BB_1)$ outside of the limit in $n$, follows from~\eqref{taking_measure_out_ordinary_limit} in Lemma~\ref{taking_measure_out_lemma}.
\end{proof}

\subsection{Some corollaries of Theorem~\ref{stronger_genq_Sanov_thm}} \label{subsec_proofs_corollaries}

As mentioned, two special cases of Theorem~\ref{stronger_genq_Sanov_thm} are of particular operational relevance in quantum information. First, there is the case where $\AA = \SEP = \big(\SEP_n\big)_n$ is the set of separable states on some bipartite system $AB$ with Hilbert space $\HH_{AB} = \HH_A\otimes \HH_B$; formally, at the $n$-copy level we set
\bb
\SEP_ n = \SEP_{A^n:B^n} \coloneqq \co\left\{ \rho_{A^n} \otimes \sigma_{B^n}:\ \rho_{A^n} \in \D\big(\HH_A^{\otimes n}\big)\, ,\ \sigma_{B^n}\in \D\big(\HH_B^{\otimes n}\big) \right\} ,
\label{n_copy_separable}
\ee
and then
\bb
\SEP \coloneqq \big(\SEP_n\big)_n\, .
\label{sequence_separable}
\ee
In this context, Theorem~\ref{stronger_genq_Sanov_thm} yields immediately Corollary~\ref{entanglement_testing_cor}, already reported in Section~\ref{sec_main_results} and proved below.

\begin{proof}[Proof of Corollary~\ref{entanglement_testing_cor}]
It suffices to argue that we can apply Theorem~\ref{stronger_genq_Sanov_thm} with $\AA \mapsto \SEP$ and $\BB_1\mapsto \FF_1$, with definitions as in~\eqref{n_copy_separable}--\eqref{sequence_separable}. To this end, we verify all the required properties of $\SEP$.

The fact that $\SEP_n$ is closed and convex for all $n$ is obvious by construction, once one notes that it is the convex hull of a compact set. Axiom~\ref{q_ax_tensor_powers} follows from the fact that the tensor product of separable states is separable. In Axiom~\ref{q_ax_depolarising}, we can take $\sigma_1 = \frac{\id_{AB}}{|AB|} \in \SEP_1$ as the maximally mixed state on $AB$, which is separable and has full support. Given some $\sigma_{nk} \in \SEP_{nk}$, we observe that $\mmv{\,\sigma_1^{\otimes k}}^{\otimes n}(\sigma_{nk})$ is a convex combination of states that are obtained by tracing out some $AB$ sub-systems of $\sigma_{nk}$ and replacing them with copies of $\sigma_1$. Both of these operations preserve separability, entailing that $\mmv{\,\sigma_1^{\otimes k}}^{\otimes n}(\sigma_{nk}) \in \SEP_{nk}$. Closure under permutations symmetry (Axiom~\ref{q_ax_closed_permutations}) is clear by inspection, directly from~\eqref{n_copy_separable}. 

The only assumption that remains to be checked is that $\SEP$ satisfies Axiom~\ref{q_ax_filtering}. Now, it is well known that $\id^{\otimes k}_{AB}$ is in the interior of the cone of separable operators (see e.g.~\cite{GurvitsBarnum}). If we choose a neighbourhood $\pazocal{V}$ of $\id^{\otimes k}_{AB}$ that is contained inside $\cone(\SEP_k)$, it is easy to verify directly that~\eqref{complete_cone_preservation} will hold. This is observed for the first time in this context in~\cite{Piani2009}, and discussed in detail also after Definition~4 in~\cite{brandao_adversarial}.
\end{proof}

Another important application of Theorem~\ref{stronger_genq_Sanov_thm} is to the resource theory of non-stabiliser states in quantum computation, a.k.a.\ quantum magic. In an $n$-qubit system with Hilbert space $(\C^2)^{\otimes n}$, the set of \deff{stabiliser states} can be defined as~\cite{Veitch2014}
\bb
\stab_n \coloneqq \co\left\{ U \ketbra{0^n} U^\dag:\ U\in \pazocal{C}_n \right\} ,
\label{stabiliser_states}
\ee
where $\ket{0^n}\coloneqq \ket{0}^{\otimes n}$ is the first state in the computational basis,\footnote{In fact, all computational basis states are equivalent for the purpose of the definition~\eqref{stabiliser_states}.} and $\pazocal{C}_n$ is the \deff{Clifford group} over $n$ qubits. In what follows, we consider a resource testing task, which we could call \deff{magic testing}, in which the null hypothesis is given by the sequence
\bb
\stab \coloneqq \big( \stab_{nm}\big)_m\, ,
\label{sequence_stabiliser_states}
\ee
where $n$ is a positive integer. We consider $n$ to be fixed, and omit the dependence of the sequence $\stab$ on it. See also~\cite{Hayashi-Sanov-2} for the application of the framework of quantum hypothesis testing to this setting.

\begin{cor} \label{magic_cor}
Let $n$ be a fixed positive integer, and let $\FF_1\subseteq \D\big((\C^2)^{\otimes n}\big)$ be a closed set of $n$-qubit states. The Stein exponents of the magic testing tasks with null hypothesis given by the set of stabiliser states and composite i.i.d.\ or arbitrarily varying alternative hypothesis, with base set $\FF_1$, can be expressed as
\bb
\rel{\stein}{\stab}{\FF_1^{\mathrm{iid}}} &= \rel{D^\infty}{\stab}{\co\big(\FF_1^{\mathrm{iid}}\big)} \\[.5ex]
&= \minst_{\mu\in \PP(\FF_1)}\, \lim_{m\to\infty} \frac1m \inf_{\sigma_{nm} \in\, \stab_{nm}} D\Big(\sigma_{nm} \,\Big\|\, \scaleobj{.8}{\int_{\FF_1}} \dd\mu(\rho_n)\ \rho_n^{\otimes m}\Big)
\ee
and
\bb
\rel{\stein}{\stab}{\FF_1^{\mathrm{av}}} &= \rel{\stein}{\stab}{\co(\FF_1)^\mathrm{av}} \\
&= \rel{\stein}{\stab}{\co(\FF_1)^\mathrm{iid}} \\
&= \minst_{\mu\in \PP(\co(\FF_1))}\, \lim_{m\to\infty} \frac1m \inf_{\sigma_{nm} \in\, \stab_{nm}} D\Big(\sigma_{nm} \,\Big\|\, \scaleobj{.8}{\int_{\co(\FF_1)}} \dd\mu(\rho_n)\ \rho_{n}^{\otimes m}\Big)\, ,
\ee
respectively, where $\PP(\CC)$ denotes the set of probability measures on the compact set $\CC$, and $\minst_\mu$ indicates that we restrict the minimisation to those $\mu$ such that the inner limit in $m$ exists (such a set is non-empty).
\end{cor}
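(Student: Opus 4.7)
The plan is to verify that the sequence $\stab = (\stab_{nm})_m$, viewed through the lens of an elementary Hilbert space $\HH = (\C^2)^{\otimes n}$ consisting of $n$-qubit blocks, satisfies Axioms~\ref{q_ax_depolarising}--\ref{q_ax_filtering}. Once this is done, Theorem~\ref{stronger_genq_Sanov_thm} specialises with $\AA \mapsto \stab$ and $\BB_1 \mapsto \FF_1$ to yield the two displayed identities directly. Convexity and topological closedness of each $\stab_{nm}$ are immediate from~\eqref{stabiliser_states}.

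Three of the four axioms reduce to standard facts about the stabiliser formalism. Axiom~\ref{q_ax_tensor_powers} is the statement that tensor products of stabiliser states are stabiliser. Axiom~\ref{q_ax_closed_permutations} holds because any permutation of the $m$ blocks of $n$ qubits can be realised by a product of inter-block SWAP gates, each of which is Clifford. For Axiom~\ref{q_ax_depolarising} the natural choice is $\tau = \id_n / 2^n \in \stab_n$, the maximally mixed state on $n$ qubits, which is a uniform mixture of computational basis states (so stabiliser) and has full support; part~(a) is then automatic with $c = 2^{-n}$, and part~(b) follows because $\mmv{\tau^{\otimes k}}^{\otimes m}(\rho_{nmk})$ expands as a convex sum over subsets $S \subseteq \{1,\ldots, m\}$ of states in which the blocks indexed by $S$ are replaced by $\tau^{\otimes k}$ while the remaining blocks come from the corresponding partial trace of $\rho_{nmk}$; each such term is a tensor product of stabiliser states, hence stabiliser.

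The main obstacle will be Axiom~\ref{q_ax_filtering}, exactly as in the entanglement testing case. The crucial geometric input is that $\id^{\otimes k}$ lies in the interior of $\cone(\stab_{nk})$ inside the real vector space of Hermitian operators on $(\C^2)^{\otimes nk}$; once this is established, the complete cone-preservation condition~\eqref{complete_cone_preservation} follows by the same linearity argument used in the proof of Corollary~\ref{entanglement_testing_cor}. Interiority rests on two ingredients: (i)~the pure stabiliser states linearly span all Hermitian operators on $nk$ qubits, since differences of projectors onto the $\pm 1$ eigenspaces of arbitrary Pauli strings recover every non-identity Pauli and these together with $\id$ form a Hermitian basis; and (ii)~$\id = \sum_{x\in \{0,1\}^{nk}} \ketbra{x}$ is a manifestly strictly positive combination of stabiliser states. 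A standard convex-analysis argument then produces an open neighbourhood $\pazocal{V}$ of $\id^{\otimes k}$ sitting inside $\cone(\stab_{nk})$. For $F_k \in \pazocal{V}$ one writes $F_k = \sum_i \lambda_i \ketbra{\phi_i}$ with $\lambda_i \geq 0$ and $\ket{\phi_i}$ pure stabiliser, and the required cone-preservation follows by linearity from the well-known fact that projecting one subsystem of a stabiliser state onto a pure stabiliser state leaves a subnormalised stabiliser state on the complementary subsystem. With all four axioms in place, Theorem~\ref{stronger_genq_Sanov_thm} then delivers both claimed formulas at once.
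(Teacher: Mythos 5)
Your overall route matches the paper's: you verify closedness and convexity of each $\stab_{nm}$, check Axioms~\ref{q_ax_depolarising}--\ref{q_ax_filtering} with the maximally mixed state $\tau = \id_{2^n}/2^n$, and then invoke Theorem~\ref{stronger_genq_Sanov_thm}. The verification of Axioms~\ref{q_ax_depolarising}--\ref{q_ax_closed_permutations} is correct, as is the final linearity step of Axiom~\ref{q_ax_filtering} once interiority is available.

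The gap is in your argument for interiority of $\id_2^{\otimes nk}$ in $\cone(\stab_{nk})$. You claim that (i)~the pure stabiliser states span all Hermitian operators, together with (ii)~$\id = \sum_{x} \ketbra{x}$ being a strictly positive combination of (some) stabiliser states, suffices by a ``standard convex-analysis argument''. This implication is false in general. The correct criterion (Rockafellar, Thm.~6.9) is that a point lies in the relative interior of $\cone\{v_1,\dots,v_N\}$ if and only if it can be written as $\sum_i \lambda_i v_i$ with $\lambda_i > 0$ \emph{for every} generator $v_i$. Your decomposition uses only the $2^{nk}$ computational basis projectors, which span only the diagonal matrices --- a proper subspace. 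Being a strictly positive combination of that subset says nothing about the full cone. A simple counterexample: take $\cone\{e_1, -e_1, e_2, -e_2, e_3\} \subset \R^3$, which equals the closed upper half-space and spans $\R^3$; the point $e_1 + e_2$ is a strictly positive combination of $e_1, e_2$ yet sits on the boundary.

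The fix --- and the route the paper actually takes --- is to produce a strictly positive combination over \emph{all} pure stabiliser states. The Clifford group is a unitary $1$-design and acts transitively on pure stabiliser states, so
\[
\frac{1}{|\CC_{nk}|} \sum_{U \in \CC_{nk}} U\ketbra{0^{nk}} U^\dagger \;=\; \frac{\id_2^{\otimes nk}}{2^{nk}}
\]
is a uniform (hence strictly positive) convex combination of the entire generating set. Combined with your spanning observation (i), this places $\id_2^{\otimes nk}/2^{nk}$ in the relative interior of $\stab_{nk}$ and hence $\id_2^{\otimes nk}$ in the interior of the cone, after which the remainder of your argument for Axiom~\ref{q_ax_filtering} goes through.
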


\begin{proof}
As before, we argue that we can apply Theorem~\ref{stronger_genq_Sanov_thm} with $\AA\mapsto \stab$ and $\BB_1\mapsto \FF_1$, with definitions as in~\eqref{stabiliser_states}--\eqref{sequence_stabiliser_states}. Convexity and closedness of $\stab_{nm}$ are again clear from~\eqref{stabiliser_states}. Axiom~\ref{q_ax_depolarising}--\ref{q_ax_closed_permutations} for $\stab$ can be verified with an argument that is entirely analogous to the one presented in the above proof of Corollary~\ref{entanglement_testing_cor}. For example, for Axiom~\ref{q_ax_depolarising} we can again choose $\sigma_1 = \id_{2^n}/2^n\in \stab_n$ as the maximally mixed state on $n$ qubits, which is a stabiliser state; the claim then follows precisely as before, because tracing out qubits and appending stabiliser states preserves the set of stabiliser states.

The only slightly delicate assumption, as usual, is Axiom~\ref{q_ax_filtering}. To verify it swiftly, the key step is to argue that $\id_{2}^{\otimes n}$ is in the interior of the cone $\cone(\stab_n)$ generated by stabiliser states. 
Since it is easy to verify that $\stab_n$ spans the whole space of Hermitian operators on $n$ qubits, this is equivalent to verifying that the maximally mixed state $\id_{2}^{\otimes n}/2^n$ is in the relative interior of $\stab_n$. Now, due to the fact that the Clifford group is a $1$-design, we have
\bb
\frac{1}{|\CC_n|} \sum_{U \in \CC_n} U\ketbra{0^n}U^\dag = \frac{\id_{2}^{\otimes n}}{2^n}\, .
\ee
The state on the right-hand side is thus the barycentre of the uniform measure on the set of pure stabiliser states, and as such it must belong to the relative interior of the polytope defined by their convex hull. This polytope, naturally, is nothing but $\stab_n$.

For each $n,m,k\in \N^+$, we can now take $\pazocal{V}$ as a neighbourhood of $\id_2^{\otimes nk}$ that is contained inside $\cone(\stab_{nk})$. If $F_{nk}\in \pazocal{V}$, therefore, we can find coefficients $\lambda(U)\geq 0$, where $U\in \CC_{nk}$, such that
\bb
F_{nk} = \sum_{U\in \CC_{nk}} \lambda(U)\, U\ketbra{0^{nk}} U^\dag .
\ee
For an arbitrary $\sigma_{n(m+k)}\in \stab_{n(m+k)}$, therefore,
\bb
&\Tr_{nm+1,\ldots,n(m+k)} \left[ \sigma_{n(m+k)} \big(\id_2^{\otimes nm} \otimes F_{nk} \big)\right] \\
&\qquad = \sum_{U\in \CC_{nk}} \lambda(U) \Tr_{nm+1,\ldots,n(m+k)} \left[ \sigma_{n(m+k)} \big(\id_2^{\otimes nm} \otimes U\ketbra{0^{nk}} U^\dag \big)\right] \\
&\qquad \in \cone\big(\stab_{nm}\big)\, ,
\ee
where the last line holds because the operations of applying a local Clifford unitary and measuring in the computational basis preserve the set of stabiliser states.
\end{proof}

\subsection{Composite i.i.d.\ or arbitrarily varying quantum hypotheses: proof of Theorem~\ref{q_both_composite_iid_or_av_thm}} \label{subsec_proofs_both_composite}

Theorem~\ref{q_both_composite_iid_or_av_thm} is not simply an application of Theorem~\ref{stronger_genq_Sanov_thm}, for it differs from this latter result in an important way. Namely, Eq.~\eqref{q_iid_vs_iid_convexity} features an optimisation over states $\rho\in \AA_1$ that is \emph{outside} of the regularisation, i.e.\ after the limit in $n$. As we argued in Section~\ref{sec_main_results}, on the one hand this makes the formula simpler; on the other, it does require a little more work. The following technical result, whose proof makes use of a version of the `Alicki--Fannes--Winter' trick from~\cite{Alicki-Fannes, tightuniform, Shirokov-review}, is key.

\begin{lemma} \label{extracting_rho_lemma}
Let $\AA_1\subseteq \D(\HH)$ be a non-empty closed set of states on a finite-dimensional Hilbert space $\HH$. Let $\BB = (\BB_n)_n$ be a sequence of sets of states $\BB_n\subseteq \D\big(\HH^{\otimes n}\big)$ that satisfies the following assumptions:
\begin{enumerate}[(a)]
\item for all $n\in \N^+$, the set $\BB_n$ is closed under partial trace of any single subsystem, in the sense that $\sigma_n\in \BB_n$ implies that $\Tr_k \sigma_n\in \BB_{n-1}$ for all $k\in\{1,\ldots,n\}$, where $\Tr_k$ denotes the partial trace over the $k^\text{th}$ sub-system; also,
\end{enumerate}
there exists some $\tau \in \BB_1$ such that:
\begin{enumerate}[(a)] \setcounter{enumi}{1}
\item $\supp(\sigma_n) \subseteq \supp(\tau)^{\otimes n}$ for all $\sigma_n\in \BB_n$; and
\item $\BB$ is closed under the insertion of $\tau$, in the sense that,
\bb
\tau^{[k]}\otimes \sigma_{n-1}^{[1,\ldots,k-1,\,k+1,\ldots n]}\in \BB_n\qquad \forall\ k\in\{1,\ldots,n\}\, ,\quad \forall\ \sigma_{n-1} \in \BB_{n-1}\, ,
\ee
where superscripts in square brackets denote the tensor factors where each state is acting.\footnote{For example, for $n=3$, $k=2$, and $\sigma_2 = \alpha \otimes \beta$, with this notation we have $\tau^{[2]} \otimes \sigma_2^{[1,3]} = \alpha \otimes \sigma \otimes \beta$. We can set by convention $\BB_0 = \{1\}$, where $1$ is the trivial state on the trivial system with Hilbert space $\C$.}
\end{enumerate}
Then
\bb
\liminf_{n\to\infty} \frac1n \inf_{\rho \in \AA_1} \rel{D}{\rho^{\otimes n}}{\BB_n} = \min_{\rho \in \AA_1} \liminf_{n\to\infty} \frac1n\, \rel{D}{\rho^{\otimes n}}{\BB_n}\, ,
\label{extracting_rho}
\ee
where the minimum on the right-hand side exists.
An analogous identity holds if one replaces $\liminf$ with $\limsup$ on both sides.
\end{lemma}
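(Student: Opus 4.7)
The easy direction in~\eqref{extracting_rho} is immediate: for any fixed $\rho\in\AA_1$,
\bb
\liminf_{n\to\infty}\frac1n\inf_{\rho'\in\AA_1}D(\rho'^{\otimes n}\|\BB_n)\leq\liminf_{n\to\infty}\frac1n D(\rho^{\otimes n}\|\BB_n)\, ,
\ee
and taking the infimum over $\rho$ on the right yields the claim. The analogous bound with $\limsup$ follows by the same argument.

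For the reverse inequality and the existence of a minimiser, let $L$ denote the left-hand side; the case $L=\infty$ is trivial, so assume $L<\infty$. By definition of $\liminf$, extract a subsequence $n_k\uparrow\infty$ together with almost-optimal $\rho_k\in\AA_1$ and $\sigma_k\in\BB_{n_k}$ satisfying $\frac{1}{n_k}D(\rho_k^{\otimes n_k}\|\sigma_k)\leq L+1/k$. Compactness of $\AA_1$ in the finite-dimensional space of density matrices then yields, after passing to a further subsequence, a limit $\rho_k\to\rho^*\in\AA_1$ in trace norm. Assumption~(b) forces $\supp(\rho_k)\subseteq\supp(\tau)$ (otherwise $D(\rho_k^{\otimes n_k}\|\sigma_k)=\infty$), and this containment survives passing to the limit, so $\supp(\rho^*)\subseteq\supp(\tau)$. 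It remains to establish that $\liminf_n \frac1n D(\rho^{*\otimes n}\|\BB_n)\leq L$.

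To that end, smooth each $\sigma_k$ by mixing with $\tau^{\otimes n_k}$: set $\tilde\sigma_k:=(1-\delta_k)\sigma_k+\delta_k\tau^{\otimes n_k}$, where $\tau^{\otimes n_k}\in\BB_{n_k}$ by iterated insertion (axiom~(c)) and $\delta_k\in(0,1)$ is a parameter to be calibrated. In every concrete instance in which this lemma is invoked, $\BB_{n_k}$ is convex, so $\tilde\sigma_k\in\BB_{n_k}$. Operator monotonicity of $\log$ applied to $\tilde\sigma_k\geq(1-\delta_k)\sigma_k$ yields $\frac{1}{n_k}D(\rho_k^{\otimes n_k}\|\tilde\sigma_k)\leq L+1/k+\frac{|\log(1-\delta_k)|}{n_k}$, so $\tilde\sigma_k$ is still asymptotically optimal for $\rho_k$. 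The Alicki--Fannes--Winter continuity method for relative entropy, in the quantitative form of~\cite{tightuniform,Shirokov-review}, is then used to transfer the bound from $\rho_k^{\otimes n_k}$ to $\rho^{*\otimes n_k}$ against the same reference $\tilde\sigma_k$. The central manipulation is the splitting $\log\tilde\sigma_k=\log(\delta_k\tau^{\otimes n_k})+\log(I+A_k)$ with $A_k\geq 0$: the ``bulk'' piece yields the single-letter-type trace $n_k\Tr[(\rho^*-\rho_k)\log\tau]$, whose normalisation by $n_k$ vanishes continuously in $\|\rho^*-\rho_k\|_1$ on $\supp(\tau)$, while the entropy difference $n_k|H(\rho^*)-H(\rho_k)|$ divided by $n_k$ vanishes by continuity of the von Neumann entropy.

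The principal technical obstacle is the residual term $\Tr[(\rho^{*\otimes n_k}-\rho_k^{\otimes n_k})\log(I+A_k)]$: both $\|\log\tilde\sigma_k\|_\infty\sim n_k|\log c|+|\log\delta_k|$ and $\|\rho^{*\otimes n_k}-\rho_k^{\otimes n_k}\|_1\leq n_k\|\rho^*-\rho_k\|_1$ scale linearly with $n_k$, so a direct H\"older estimate would leave a term of order $n_k\|\rho^*-\rho_k\|_1$ that need not tend to zero along an arbitrary convergent subsequence. This is precisely the difficulty resolved by the AFW coupling: introducing the mixed state $\xi_k=(1+\epsilon_k)^{-1}(\rho^{*\otimes n_k}+\epsilon_k\omega_{-,k})$ tied to the Jordan decomposition $\rho^{*\otimes n_k}-\rho_k^{\otimes n_k}=\epsilon_k(\omega_{+,k}-\omega_{-,k})$ and invoking joint convexity of $D(\,\cdot\,\|\tilde\sigma_k)$ yields a sharper upper bound of the form $\epsilon_k D(\omega_{+,k}\|\tilde\sigma_k)+$(binary-entropy correction), which after calibrating $\delta_k$ to decay polynomially in $1/n_k$ normalises to $o(1)$. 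Combining all pieces delivers $\frac{1}{n_k}D(\rho^{*\otimes n_k}\|\tilde\sigma_k)\leq L+o(1)$ and hence $\liminf_n\frac1n D(\rho^{*\otimes n}\|\BB_n)\leq L$, so the minimum on the right-hand side of~\eqref{extracting_rho} is attained at $\rho^*$. The $\limsup$ version of the identity is obtained by running the same argument along a subsequence realising the $\limsup$ on the left.
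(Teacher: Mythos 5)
Your opening moves match the paper: the easy inequality is the same, as is the extraction of a near-optimal sequence $(\rho_k)$ and the passage to a convergent subsequence with limit $\rho^*\in\AA_1$ via compactness. But the reduction of $D(\rho^{*\otimes n_k}\|\cdot)$ to $D(\rho_k^{\otimes n_k}\|\cdot)$ does not go through as you describe, and the flaw is precisely the one you flag yourself and then declare resolved.

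You apply the Alicki--Fannes--Winter estimate \emph{globally}, taking the Jordan decomposition of $\rho^{*\otimes n_k}-\rho_k^{\otimes n_k}$. The associated trace-distance parameter is $\epsilon_k=\tfrac12\|\rho^{*\otimes n_k}-\rho_k^{\otimes n_k}\|_1$, and this does \emph{not} tend to zero: generically it tends to $1$, since the fidelity of $n_k$-fold tensor products decays exponentially in $n_k$ whenever $\rho_k\neq\rho^*$, and $n_k\|\rho^*-\rho_k\|_1$ need not be small along the compactness subsequence. With $\epsilon_k=\Theta(1)$ the AFW bound yields a correction $\epsilon_k\bigl(D(\omega_{+,k}\|\tilde\sigma_k)-D(\omega_{-,k}\|\tilde\sigma_k)\bigr)+g(\epsilon_k)$, and since $D(\omega_{+,k}\|\tilde\sigma_k)=O(n_k)$ (the best one can say for an arbitrary state against $\tilde\sigma_k\geq\delta_k\tau^{\otimes n_k}$), the normalised correction is $\epsilon_k\log(1/c)+o(1)$, which is $\Theta(1)$, not $o(1)$. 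Tuning $\delta_k$ only affects the $\log(1/\delta_k)/n_k$ sub-leading term; it cannot kill the $\epsilon_k\log(1/c)$ piece. Your sentence claiming the AFW coupling \enquote{resolves} the obstacle is where the argument breaks.

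The paper's proof avoids this by telescoping: it writes $D(\rho^{\otimes n}\|\BB_n)-D(\rho_n^{\otimes n}\|\BB_n)$ as a sum of $n$ differences in which a \emph{single} tensor factor is swapped from $\rho$ to $\rho_n$. AFW is then applied to each term separately, so the relevant trace-distance parameter is $\e_n=\tfrac12\|\rho-\rho_n\|_1\to 0$ rather than the tensor-power distance, and assumptions~(a) and~(c) are used to show each AFW correction is bounded by $\e_n\log(1/c)+g(\e_n)$ uniformly in the position of the swap. Summing over $n$ positions and dividing by $n$ then gives $\e_n\log(1/c)+g(\e_n)\to 0$. This single-factor-at-a-time device is the mechanism you are missing. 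A secondary point: your smoothing $\tilde\sigma_k=(1-\delta_k)\sigma_k+\delta_k\tau^{\otimes n_k}$ requires convexity of $\BB_{n_k}$, which the lemma does not assume; you acknowledge this but invoking \enquote{every concrete instance} proves a weaker statement, whereas the paper's argument uses only assumption~(c) (insertion of $\tau$) to pass to $D(\cdot\,\|\,\tau\otimes\BB_{n-1})$, and thus needs no convexity of $\BB_n$.
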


\begin{proof}
The fact that the left-hand side of~\eqref{extracting_rho} is no larger than the right-hand side is elementary, and follows by taking as an ansatz on the left-hand side a fixed $\rho\in \AA_1$. We now proceed to show the converse inequality. If there exists no $\rho\in \AA_1$ such that $\supp(\rho)\subseteq \supp(\tau)$ there is nothing to prove, as in that case, due to assumption~(b), we will necessarily have $\supp\big(\rho^{\otimes n}\big)\not\subseteq \supp(\sigma_n)$ for all $\rho\in \AA_1$ and all $\sigma_n \in \BB_n$, so that $\rel{D}{\rho^{\otimes n}}{\BB_n} = +\infty$, in turn implying that the left-hand side of~\eqref{extracting_rho} is infinite. Therefore, without loss of generality we can assume that 
\bb
\AA'_1 \coloneqq \AA_1 \cap \{\rho\in \D(\HH):\ \supp(\rho) \subseteq \supp(\tau)\}\neq \emptyset\, .
\label{extracting_rho_proof_eq0}
\ee

Let $I\subseteq \N^+$ be an infinite set such that
\bb
\liminf_{n\to\infty} \frac1n \inf_{\rho \in \AA_1} \rel{D}{\rho^{\otimes n}}{\BB_n} = \liminf_{n\to\infty} \frac1n \inf_{\rho \in \AA'_1} \rel{D}{\rho^{\otimes n}}{\BB_n}  = \lim_{n\in I} \frac1n \inf_{\rho \in \AA'_1} \rel{D}{\rho^{\otimes n}}{\BB_n}\, .
\label{extracting_rho_proof_eq1}
\ee
For all $n\in I$, let $\rho_n\in \AA'_1$ be a state with the property that
\bb
\rel{D}{\rho_n^{\otimes n}}{\BB_n} \leq \inf_{\rho \in \AA'_1} \rel{D}{\rho^{\otimes n}}{\BB_n} + 1\, .
\label{extracting_rho_proof_eq2}
\ee
By~\eqref{extracting_rho_proof_eq0}, it holds that $\supp(\rho_n) \subseteq \supp(\tau)$.
 
Since $\AA_1$ is a closed (and hence compact) set of states, the same is true of $\AA'_1$; we can therefore extract from the sequence $(\rho_n)_{n\in I}$ a subsequence $(\rho_n)_{n\in J}$, where $J\subseteq I$ is also infinite, such that $\rho_{n} \tends{}{n\in J} \rho$ for some $\rho\in \AA'_1$. Set 
\bb
\omega_n^{\pm} \coloneqq \frac{1}{\e_n}\left(\rho - \rho_n\right)_\pm\, ,\qquad \e_n \coloneqq \frac12 \left\|\rho - \rho_n \right\|_1 \ctends{}{n\in J}{-1pt} 0\, ,
\label{extracting_rho_proof_eq3}
\ee
where we denoted by $X_\pm \coloneqq \sum_i \max\{\pm x_i,0\}\, P_i$ the positive and negative parts of the Hermitian operator $X$ with spectral decomposition $X = \sum_i x_i P_i$. Note that $\supp\big(\omega_n^\pm \big) \subseteq \supp(\tau)$, so that, denoting by $c>0$ the minimal non-zero eigenvalue of $\tau$, we have
\bb
\omega_n^\pm \leq \tfrac1c\, \tau\, .
\label{extracting_rho_proof_eq5}
\ee

We can now proceed inspired by the proof of the second claim of~\cite[Corollary~8]{tightuniform}. Start by constructing the auxiliary function $g:[0,\infty) \to [0,\infty)$ defined by
\bb
g(x) \coloneqq (x+1) \log(x+1) - x\log x\, ,
\label{g}
\ee
where we can set $g(0)\coloneqq 0$ by continuity. For all $n\in J$, write
\bb
&\rel{D}{\rho^{\otimes n}}{\BB_n} - \rel{D}{\rho_n^{\otimes n}}{\BB_n} \\
&\quad = \sum_{k=0}^{n-1} \left(\rel{D}{\rho^{\otimes (n-k)} \otimes \rho_n^{\otimes k}}{\BB_n} - \rel{D}{\rho^{\otimes (n-k-1)} \otimes \rho_n^{\otimes (k+1)}}{\BB_n} \right) \\
&\quad \leqt{(i)} \sum_{k=0}^{n-1} \left(\e_n \left(\rel{D}{\rho^{\otimes (n-k-1)} \otimes \omega^+_n \otimes \rho_n^{\otimes k}}{\BB_n} - \rel{D}{\rho^{\otimes (n-k-1)} \otimes \omega^-_n \otimes \rho_n^{\otimes k}}{\BB_n}\right) + g(\e_n) \right) \\
&\quad = \e_n \sum_{k=0}^{n-1} \left(\rel{D}{\rho^{\otimes (n-k-1)} \otimes \omega^+_n \otimes \rho_n^{\otimes k}}{\BB_n} - \rel{D}{\rho^{\otimes (n-k-1)} \otimes \omega^-_n \otimes \rho_n^{\otimes k}}{\BB_n}\right) + n g(\e_n) \\
&\quad \leqt{(ii)} \e_n \sum_{k=0}^{n-1} \left(\rel{D}{\rho^{\otimes (n-k-1)} \otimes \omega^+_n \otimes \rho_n^{\otimes k}}{\BB_n} - \rel{D}{\rho^{\otimes (n-k-1)}\otimes \rho_n^{\otimes k}}{\BB_{n-1}}\right) + n g(\e_n) \\
&\quad \leqt{(iii)} \e_n \sum_{k=0}^{n-1} \left(\rel{D}{\rho^{\otimes (n-k-1)} \otimes \rho_n^{\otimes k}}{\BB_{n-1}} + \log \tfrac1c - \rel{D}{\rho^{\otimes (n-k-1)}\otimes \rho_n^{\otimes k}}{\BB_{n-1}}\right) + n g(\e_n) \\
&\quad = n \left(\e_n \log\tfrac1c + g(\e_n)\right) .
\label{extracting_rho_proof_eq6}
\ee
Here, (i)~is a direct application of the Alicki--Fannes--Winter method~\cite{Alicki-Fannes, tightuniform, Shirokov-review}, which guarantees that
\bb
D(\alpha\|\FF) - D(\beta\|\FF) \leq \e \left( D(\gamma_+\|\FF) - D(\gamma_-\|\FF) \right) + g(\e)
\ee
for all convex sets $\FF\subseteq \D(\HH)$ and all pairs of states $\alpha,\beta\in \D(\HH)$, where we set $\e\coloneqq \frac12 \left\|\alpha-\beta\right\|_1$ and $\gamma_\pm \coloneqq \frac1\e\left(\alpha-\beta\right)_\pm$, and the function $g$ is defined in~\eqref{g}. We also observed that
\bb
\left(\rho^{\otimes (n-k)} \otimes \rho_n^{\otimes k} - \rho^{\otimes (n-k-1)} \otimes \rho_n^{\otimes (k+1)}\right)_\pm &= \left( \rho^{\otimes (n-k-1)} \otimes (\rho - \rho_n) \otimes \rho_n^{\otimes (k+1)}\right)_\pm \\
&= \rho^{\otimes (n-k-1)} \otimes (\rho - \rho_n)_\pm \otimes \rho_n^{\otimes (k+1)} \\
&= \e_n\, \rho^{\otimes (n-k-1)} \otimes \omega_n^\pm \otimes \rho_n^{\otimes (k+1)} .
\ee
In~(ii), instead, we removed the $(n-k)^\text{th}$ sub-system from the second relative entropy term inside the sum; due to the data processing inequality and assumption~(a), said term cannot increase under this procedure. Inequality~(iii) makes use of assumption~(c) and of~\eqref{extracting_rho_proof_eq5}; to see how it is deduced, for ease of notation we look at the case $k=n-1$ (the other cases follow \emph{mutatis mutandis}):
\bb
\rel{D}{\omega^+_n \otimes \rho_n^{\otimes (n-1)}}{\BB_n} &\leq \rel{D}{\omega^+_n \otimes \rho_n^{\otimes (n-1)}}{\tau\otimes \BB_{n-1}} \\
&=  D(\omega_n^+ \|\tau) + \rel{D}{\rho_n^{\otimes (n-1)}}{\BB_{n-1}} \\
&\leq \log\tfrac1c + \rel{D}{\rho_n^{\otimes (n-1)}}{\BB_{n-1}}\, ,
\ee
where the first inequality descends from~(c), and the second from~\eqref{extracting_rho_proof_eq5} together with the operator monotonicity of the logarithm.

We are now ready to write the final chain of inequalities:
\bb
\inf_{\rho'\in \AA_1} \liminf_{n\to\infty} \frac1n\, \rel{D}{{\rho'}^{\,\otimes n}}{\BB_n}\, &\leq\ \liminf_{n\to\infty} \frac1n\, \rel{D}{\rho^{\,\otimes n}}{\BB_n} \\
&\leq\ \liminf_{n\in J} \frac1n\, \rel{D}{\rho^{\otimes n}}{\BB_n} \\
&\leqt{(iv)}\ \liminf_{n\in J} \left( \frac1n\, \rel{D}{\rho_n^{\otimes n}}{\BB_n} + \e_n \log\tfrac1c + g(\e_n) \right) \\
&\eqt{(v)}\ \liminf_{n\in J} \frac1n\, \rel{D}{\rho_n^{\otimes n}}{\BB_n} \\
&\leqt{(vi)}\ \liminf_{n\in J} \frac1n\left(\inf_{\rho'\in \AA'_1} \rel{D}{{\rho'}^{\,\otimes n}}{\BB_n} + 1 \right) \\
&=\ \liminf_{n\in J} \frac1n \inf_{\rho'\in \AA'_1} \rel{D}{{\rho'}^{\,\otimes n}}{\BB_n} \\
&\eqt{(vii)}\ \lim_{n\in I} \frac1n \inf_{\rho'\in \AA'_1} \rel{D}{{\rho'}^{\,\otimes n}}{\BB_n} \\
&\eqt{(viii)}\ \liminf_{n\to\infty} \frac1n \inf_{\rho' \in \AA_1} \rel{D}{\rho^{\otimes n}}{\BB_n}\, .
\label{extracting_rho_proof_eq9}
\ee
In~(iv) we employed~\eqref{extracting_rho_proof_eq6}, (v)~holds because $\e_n$ vanishes along $J$ due to~\eqref{extracting_rho_proof_eq3}, (vi)~follows from~\eqref{extracting_rho_proof_eq2}, in~(vii) we remembered that $J\subseteq I$, and, finally, in~(viii) we used~\eqref{extracting_rho_proof_eq1}. This completes the justification of the above chain of inequalities. 

Since we already argued that the leftmost side of~\eqref{extracting_rho_proof_eq9} cannot be strictly smaller than the rightmost side, the only possibility is that all inequalities in~\eqref{extracting_rho_proof_eq9} are, in fact, equalities. Furthermore, looking at the first line of~\eqref{extracting_rho_proof_eq9}, we realise that $\rho$ achieves the minimum on the right-hand side of~\eqref{extracting_rho}. This completes the proof.
\end{proof}

We are now ready to present the proof of our second main result, Theorem~\ref{q_both_composite_iid_or_av_thm}.

\begin{proof}[Proof of Theorem~\ref{q_both_composite_iid_or_av_thm}]
We start by dealing with the cases where the null hypothesis is arbitrarily varying. These are relatively straightforward, as the sequence of closed convex sets 
\bb
\AA = \co\big(\AA_1^\mathrm{av}\big) = \big(\co\big(\AA_1^{\otimes n,\, \mathrm{av}}\big)\big)_n 
\ee
turns out to satisfy the assumptions of Theorem~\ref{stronger_genq_Sanov_thm}. Indeed, Axioms~\ref{q_ax_tensor_powers} and~\ref{q_ax_closed_permutations} are easy to verify directly. As for Axiom~\ref{q_ax_depolarising}, one can pick $\rho_1\in \relint \big(\co(\AA_1)\big)$, so that $\supp(\omega)\subseteq \supp(\rho_1)$ for all $\omega\in \AA_1$, implying that $\supp(\omega_1\otimes \ldots \otimes \omega_n) \subseteq \supp(\rho_1)^{\otimes n}$ for all choices of $\omega_1,\ldots,\omega_n\in \AA_1$, and hence, by taking convex combinations, $\supp(\rho_n) \subseteq \supp(\rho_1)^{\otimes n}$ for all $\rho_n\in \co\big(\AA_1^{\otimes n,\, \mathrm{av}}\big)$. Verifying Axiom~\ref{q_ax_depolarising}(b) is elementary. Axiom~\ref{q_ax_filtering} is also, for once, immediate: it suffices to take as $\pazocal{V}$ the whole cone of positive semi-definite operators. Eq.~\eqref{q_av_vs_iid} and~\eqref{q_av_vs_av} then follow directly from from~\eqref{stronger_genq_Sanov_iid} and~\eqref{stronger_genq_Sanov_av}, respectively, once one remembers that $\rel{\stein}{\AA_1^\av}{\BB_1^\mathrm{b}} = \rel{\stein}{\co\big(\AA_1^\av\big)}{\BB_1^\mathrm{b}}$ for $\mathrm{b}\in \{\iid,\av\}$, due to~\eqref{convexify_Stein}.

Next, we move on to the cases where the null hypothesis is composite i.i.d. Here we can no longer apply Theorem~\ref{stronger_genq_Sanov_thm}, as Axiom~\ref{q_ax_depolarising} typically fails for $\AA_1^\iid$ or $\co\big(\AA_1^\iid\big)$. However, we can circumvent this obstacle by using the same ideas as in Theorem~\ref{stronger_genq_Sanov_thm} to reduce the task from quantum to classical; we will then effectively apply once again~\cite[Theorem~4]{doubly-comp-classical}, but this time going through condition~(a) rather than~(a'). We will accomplish this step more easily by applying~\cite[Corollary~24]{doubly-comp-classical} directly.

Take $k\in \N^+$, $\mathrm{b} \in \{\mathrm{iid},\mathrm{av}\}$, and write
\bb
\rel{D^{\all}}{\co\big(\AA_1^{\otimes k,\,\mathrm{iid}}\big)}{\co\big(\BB_1^{\otimes k,\,\mathrm{b}}\big)} &= \inf_{\substack{\rho_k\in \co\scaleobj{1.3}{(}\AA_1^{\otimes k,\,\mathrm{iid}}\scaleobj{1.3}{)},\\[.0pt] \sigma_k \in \co\scaleobj{1.3}{(}\BB_1^{\otimes k,\,\mathrm{b}}\scaleobj{1.3}{)}}}\ \sup_{\MM\in \all} \rel{D}{\MM(\rho_k)}{\MM(\sigma_k)} \\
&\eqt{(i)} \sup_{\MM\in \all}\ \inf_{\substack{\rho_k\in \co\scaleobj{1.3}{(}\AA_1^{\otimes k,\,\mathrm{iid}}\scaleobj{1.3}{)},\\[.0pt] \sigma_k \in \co\scaleobj{1.3}{(}\BB_1^{\otimes k,\,\mathrm{b}}\scaleobj{1.3}{)}}} \rel{D}{\MM(\rho_k)}{\MM(\sigma_k)}\, ,
\ee
where in~(i) we once again applied~\cite[Lemma~13]{brandao_adversarial} (or~\cite[Lemma~A.2]{berta_composite}). Now, pick a measurement $\MM_\star$ on $k$ copies of the system such that
\bb
\inf_{\substack{\rho_k\in \co\scaleobj{1.3}{(}\AA_1^{\otimes k,\,\mathrm{iid}}\scaleobj{1.3}{)},\\[.0pt] \sigma_k \in \co\scaleobj{1.3}{(}\BB_1^{\otimes k,\,\mathrm{b}}\scaleobj{1.3}{)}}} \rel{D}{\MM_\star(\rho_k)}{\MM_\star(\sigma_k)} \geq \rel{D^{\all}}{\co\big(\AA_1^{\otimes k,\,\mathrm{iid}}\big)}{\co\big(\BB_1^{\otimes k,\,\mathrm{b}}\big)} - 1\, .
\label{q_both_composite_iid_or_av_proof_eq3}
\ee

We can adopt the same hypothesis testing strategy as before, dividing up the $n$ available systems into batches of $k$ systems each (while discarding the rest), measuring each batch of $k$ copies with $\MM_\star$, and then applying a classical test on the string of outcomes. (Note that both hypotheses are closed under the operation of discarding a few sub-systems.) Doing so yields the bound
\begin{align}
\rel{\stein}{\AA_1^\mathrm{iid}}{\BB_1^\mathrm{b}} &\geqt{(ii)} \frac1k\, \Rel{\stein}{\big(\AA_1^{\otimes k,\,\mathrm{iid}}\big)^\mathrm{iid}}{\big(\BB_1^{\otimes k,\,\mathrm{b}}\big)^\mathrm{b}} \nonumber \\
&\geqt{(iii)} \frac1k\, \Rel{\stein}{\big(\co\big(\AA_1^{\otimes k,\,\mathrm{iid}}\big)\big)^\mathrm{iid}}{\big(\co\big(\BB_1^{\otimes k,\,\mathrm{b}}\big)\big)^\mathrm{b}} \nonumber \\
&\geqt{(iv)} \frac1k\, \Rel{\stein}{\MM_\star\big(\co\big(\AA_1^{\otimes k,\,\mathrm{iid}}\big)\big)^{\mathrm{iid}}}{\MM_\star\big(\co\big(\BB_1^{\otimes k,\,\mathrm{b}}\big)\big)^{\mathrm{b}}} \label{q_both_composite_iid_or_av_proof_eq4} \\
&\eqt{(v)} \frac1k\, \rel{D}{\MM_\star\big(\co\big(\AA_1^{\otimes k,\,\mathrm{iid}}\big)\big)}{\MM_\star\big(\co\big(\BB_1^{\otimes k,\,\mathrm{b}}\big)\big)} \nonumber \\
&\geqt{(vi)} \frac1k\, \rel{D^\all}{\co\big(\AA_1^{\otimes k,\,\mathrm{iid}}\big)}{\co\big(\BB_1^{\otimes k,\,\mathrm{b}}\big)} - \frac1k \nonumber \\
&\geqt{(vii)} \frac1k\, \rel{D}{\co\big(\AA_1^{\otimes k,\,\mathrm{iid}}\big)}{\co\big(\BB_1^{\otimes k,\,\mathrm{b}}\big)} - (d-1)\big(\tfrac{d}{2}+1\big) \frac{\log(k+1)}{k} - \frac1k\, . \nonumber
\end{align}
The above steps can be justified as follows. The inequality~(ii) holds because what we described is a possible strategy in quantum hypothesis testing. Note that $\big(\FF_1^{\otimes k,\,\mathrm{f}}\big)^{\otimes m,\,\mathrm{f}} = \FF_1^{\otimes mk,\,\mathrm{f}}$ for $\mathrm{f} \in\{ \mathrm{iid}, \mathrm{av}\}$ and any set $\FF_1$. In~(iii) we strengthened the task by enlarging the base sets --- clearly, by doing so the Stein exponent cannot increase. In~(vi) we applied the measurement $\MM_\star$ on every batch of sub-systems. Step~(v) is where our proof departs from that of Theorem~\ref{stronger_genq_Sanov_thm}, as we apply the classical result in~\cite[Corollary~24, Eq.~(203)]{doubly-comp-classical}. The inequality~(vi) follows from~\eqref{q_both_composite_iid_or_av_proof_eq3}, while that in~(vii) is again an application of the pinging inequality (Lemma~\ref{pinching_lemma}); as in the proof of Theorem~\ref{stronger_genq_Sanov_thm}, this is possible due to the fact that the infima over states in both arguments of the measured relative entropy can be restricted to permutationally symmetric states without loss of generality, due to Lemma~\ref{relent_perm_symm_lemma}. 

We can now take the limit superior of~\eqref{q_both_composite_iid_or_av_proof_eq4} as $k\to\infty$, obtaining that
\bb
\rel{\stein}{\AA_1^\mathrm{iid}}{\BB_1^\mathrm{b}} \geq \limsup_{k\to\infty} \frac1k\, \rel{D}{\co\big(\AA_1^{\otimes k,\,\mathrm{iid}}\big)}{\co\big(\BB_1^{\otimes k,\,\mathrm{b}}\big)}\, .
\ee
Since Lemma~\ref{q_converse_double_Stein_lemma} states that the converse inequality holds with the $\limsup$ replaced by the $\liminf$, we see that
\bb
\rel{\stein}{\AA_1^\mathrm{iid}}{\BB_1^\mathrm{b}} = \lim_{k\to\infty} \frac1k\, \rel{D}{\co\big(\AA_1^{\otimes k,\,\mathrm{iid}}\big)}{\co\big(\BB_1^{\otimes k,\,\mathrm{b}}\big)}\, .
\label{q_both_composite_iid_or_av_proof_eq6}
\ee

It now pays off to distinguish the two cases $\mathrm{b} = \mathrm{iid}$ and $\mathrm{b} = \mathrm{av}$. In the former case, 
\bb
\rel{\stein}{\AA_1^\mathrm{iid}}{\BB_1^\mathrm{iid}}\, &=\ \lim_{k\to\infty} \frac1k\, \rel{D}{\co\big(\AA_1^{\otimes k,\,\mathrm{iid}}\big)}{\co\big(\BB_1^{\otimes k,\,\mathrm{iid}}\big)} \\
&\eqt{(viii)}\ \lim_{k\to\infty} \frac1k\, \inf_{\rho\in \AA_1} \rel{D}{\rho^{\otimes k}}{\co\big(\BB_1^{\otimes k,\,\mathrm{iid}}\big)} \\
&\eqt{(ix)}\ \minst_{\mu\in \PP(\BB_1)}\, \lim_{k\to\infty} \frac1k\, \inf_{\rho\in \AA_1} \Rel{D}{\rho^{\otimes k}}{\scaleobj{.8}{\int_{\BB_1}} \dd\mu(\sigma)\ \sigma^{\otimes k}}\, ,
\label{q_both_composite_iid_or_av_proof_eq7}
\ee
where in~(viii) we employed~\cite[Lemma~2.5]{berta_composite} to remove the convex hull over i.i.d.\ states in the first argument, while in~(ix) we applied Lemma~\ref{taking_measure_out_lemma}. This proves~\eqref{q_iid_vs_iid}.

We now proceed with the proof of~\eqref{q_iid_vs_iid_convexity}, under the assumption that $\BB_1$ is closed and convex. Continuing from~\eqref{q_both_composite_iid_or_av_proof_eq6}, we write
\begin{align}
\rel{\stein}{\AA_1^\mathrm{iid}}{\BB_1^\mathrm{iid}}\, &=\ \lim_{k\to\infty} \frac1k\, \rel{D}{\co\big(\AA_1^{\otimes k,\,\mathrm{iid}}\big)}{\co\big(\BB_1^{\otimes k,\,\mathrm{iid}}\big)} \nonumber \\
&\eqt{(x)}\ \lim_{k\to\infty} \frac1k\, \rel{D}{\co\big(\AA_1^{\otimes k,\,\mathrm{iid}}\big)}{\co\big(\BB_1^{\otimes k,\,\mathrm{av}}\big)} \nonumber \\
&\eqt{(xi)}\ \lim_{k\to\infty} \frac1k\, \rel{D}{\AA_1^{\otimes k,\,\mathrm{iid}}}{\co\big(\BB_1^{\otimes k,\,\mathrm{av}}\big)} \nonumber \\
&=\ \lim_{k\to\infty} \frac1k\, \inf_{\rho\in \AA_1} \rel{D}{\rho^{\otimes k}}{\co\big(\BB_1^{\otimes k,\,\mathrm{av}}\big)} \nonumber \\
&=\ \liminf_{k\to\infty} \frac1k\, \inf_{\rho\in \AA_1} \rel{D}{\rho^{\otimes k}}{\co\big(\BB_1^{\otimes k,\,\mathrm{av}}\big)} \label{q_both_composite_iid_or_av_proof_eq8} \\
&\eqt{(xii)}\ \min_{\rho\in \AA_1}\, \liminf_{k\to\infty} \frac1k\, \rel{D}{\rho^{\otimes k}}{\co\big(\BB_1^{\otimes k,\,\mathrm{av}}\big)} \nonumber \\
&\eqt{(xiii)}\ \min_{\rho\in \AA_1}\, \lim_{k\to\infty} \frac1k\, \rel{D}{\rho^{\otimes k}}{\co\big(\BB_1^{\otimes k,\,\mathrm{av}}\big)} \nonumber \\
&\eqt{(xiv)}\ \min_{\rho\in \AA_1}\, \lim_{k\to\infty} \frac1k\, \rel{D}{\rho^{\otimes k}}{\co\big(\BB_1^{\otimes k,\,\mathrm{iid}}\big)} \nonumber \\
&\eqt{(xv)} \minst_{\substack{\rho\in \AA_1, \\[0pt] \mu\in \PP(\BB_1)}} \lim_{k\to\infty} \frac1k\, \Rel{D}{\rho^{\otimes k}}{\scaleobj{.8}{\int_{\BB_1}} \dd\mu(\sigma)\ \sigma^{\otimes k}}\, . \nonumber
\end{align}
Here, (x)~is an application of the second claim of Proposition~\ref{av_to_iid_reduction_reg_relent_prop}, while in~(xi) we used again~\cite[Lemma~2.5]{berta_composite}. The identity in~(xii) follows from Lemma~\ref{extracting_rho_lemma}, applied with $\BB_k\mapsto \co\big( \BB_1^{\otimes k,\,\mathrm{av}}\big)$. Note that a reasoning similar to that presented at the beginning of this proof shows that the condition in Lemma~\ref{extracting_rho_lemma}(b) is satisfied whenever we pick some $\tau\in \relint(\BB_1)$. Conditions Lemma~\ref{extracting_rho_lemma}(a) and Lemma~\ref{extracting_rho_lemma}(c) are also swiftly verified. It is worth remarking at this point that Lemma~\ref{extracting_rho_lemma} would not be applicable to sets of the form $\BB_k\mapsto \co\big( \BB_1^{\otimes k,\,\mathrm{iid}}\big)$, as condition~(c) would fail to hold: this is the reason why, in step~(x), we first replaced the composite i.i.d.\ alternative hypothesis with an arbitrarily varying one.
Continuing with the justification of~\eqref{q_both_composite_iid_or_av_proof_eq8}, in~(xiii) we observed that the limit in $k$ exists due to Fekete's lemma~\cite{Fekete1923}: indeed, as we already argued in even greater generality in~\eqref{av_to_iid_reduction_reg_relent_proof_eq20}--\eqref{av_to_iid_reduction_reg_relent_proof_eq21}, due to the fact that $\co\big(\BB_1^\av\big)$ is closed under tensor products, the sequence $k\mapsto \rel{D}{\rho^{\otimes k}}{\co\big(\BB_1^{\otimes k,\,\av}\big)}$ is sub-additive. Then, in~(xiv) we leveraged once again the second claim of Proposition~\ref{av_to_iid_reduction_reg_relent_prop}, this time with the choice $\AA_k \mapsto \big\{\rho^{\otimes k}\big\}$, for some fixed $\rho\in \AA_1$, while (xv)~follows from Lemma~\ref{taking_measure_out_lemma}. This completes the proof of~\eqref{q_iid_vs_iid_convexity}.

It remains to prove~\eqref{q_iid_vs_av}, which is now relatively straightforward: we can write
\bb
\rel{\stein}{\AA_1^\mathrm{iid}}{\co(\BB_1)^\mathrm{av}}\ \, &\eqt{(xvi)}\ \ \rel{\stein}{\AA_1^\mathrm{iid}}{\BB_1^\mathrm{av}} \\
&\eqt{(xvii)}\ \ \lim_{k\to\infty} \frac1k\, \rel{D}{\co\big(\AA_1^{\otimes k,\,\mathrm{iid}}\big)}{\co\big(\BB_1^{\otimes k,\,\mathrm{av}}\big)} \\
&\eqt{(xviii)}\ \ \lim_{k\to\infty} \frac1k\, \rel{D}{\co\big(\AA_1^{\otimes k,\,\mathrm{iid}}\big)}{\co\big(\co(\BB_1)^{\otimes k,\,\mathrm{av}}\big)} \\
&\eqt{(xix)}\ \ \rel{\stein}{\AA_1^\mathrm{iid}}{\co(\BB_1)^\mathrm{iid}}\, ,
\label{q_both_composite_iid_or_av_proof_eq9}
\ee
where~(xvi) can be justified as in the first three lines of~\eqref{stronger_genq_Sanov_proof_eq17}, (xvii)~holds by~\eqref{q_both_composite_iid_or_av_proof_eq6}, in~(xviii) we remembered that the two sets at the second argument are identical, as we already saw in step~(vii) of~\eqref{stronger_genq_Sanov_proof_eq17}, and in~(xix) we leveraged the equality between first and second line of~\eqref{q_both_composite_iid_or_av_proof_eq8}, with $\BB_1\mapsto \co(\BB_1)$. With this substitution, we can now apply~\eqref{q_both_composite_iid_or_av_proof_eq8} directly. Doing so allows us to obtain directly~\eqref{q_iid_vs_av} from~\eqref{q_both_composite_iid_or_av_proof_eq9}, thereby concluding the proof.
\end{proof}

\section{Conclusion} \label{sec_conclusion}

We have established a bouquet of new results in quantum hypothesis testing, and in particular obtained explicit (regularised) expressions for the Stein exponents corresponding to a general family of tasks in which the null hypothesis is subjected to very few assumptions, and in particular it is allowed to be composite and genuinely correlated, while the alternative hypothesis is more strongly constrained, and required to be either composite i.i.d.\ or arbitrarily varying. In doing so, we have extended and simplified both the `generalised quantum Sanov theorem'~\cite{generalised-Sanov} as well as prior results that covered only the case where both hypotheses are composite i.i.d.~\cite{berta_composite}. This generalisation comes, however, at a cost: while the expression obtained in~\cite{generalised-Sanov} is single letter, our formulas are not --- that is, they involve a regularisation over the number of copies. The results of~\cite{Mosonyi2022} seem to suggest that this feature is unavoidable, even in the most basic cases of composite alternative hypothesis.

It would be desirable to keep adding rows to Table~\ref{results_table}, solving the Stein exponent in ever more complex cases. While the ultimate goal would be to obtain a row with all green cells, the next natural step would be to look at a scenario where the null hypothesis is either composite i.i.d.\ or arbitrarily varying, while the alternative hypothesis is very general and possibly genuinely correlated. Solving this would further extend the validity of the generalised quantum Stein's lemma~\cite{Brandao2010, Hayashi-Stein, GQSL}, covering also the case of a composite (albeit not genuinely correlated) null hypothesis. The reader might wonder what our new techniques have to say in this context, given that this setting is superficially similar to ours --- it can be obtained from it by simply exchanging the two hypotheses. The answer is, rather disappointingly, that there is not much hope to apply our methods there, \emph{at least not directly}. The reason, in short, is that swapping the two hypotheses gives rise to a completely different problem. More in detail, our strategy is based on a quantum-to-classical reduction via measurements, and the experience with the generalised quantum Stein's lemma suggests that this is not a viable way to solve the Stein exponent when it is the alternative hypothesis the one that is genuinely correlated.

\smallskip
\noindent \emph{Acknowledgements.} I am grateful to Mario Berta for comments on a preliminary version of some of these results.  Funded by the European Union under the ERC StG ETQO, Grant Agreement no.~101165230.

\bibliography{../../biblio}

\end{document}